\providecommand{\tabularnewline}{\\}
\numberwithin{equation}{section}
\newcommand{\lyxaddress}[1]{
\par {\raggedright #1
\vspace{1.4em}
\noindent\par}
}
  \theoremstyle{plain}
  \newtheorem{assumption}{\protect\assumptionname}
\theoremstyle{plain}
\newtheorem{thm}{\protect\theoremname}
  \theoremstyle{plain}
  \newtheorem{cor}{\protect\corollaryname}
  \theoremstyle{remark}
  \newtheorem*{rem*}{\protect\remarkname}
  \theoremstyle{plain}
  \newtheorem{lem}{\protect\lemmaname}
\renewcommand\[{\begin{equation}}
\renewcommand\]{\end{equation}}
  \providecommand{\assumptionname}{Assumption}
  \providecommand{\lemmaname}{Lemma}
  \providecommand{\remarkname}{Remark}
\providecommand{\corollaryname}{Corollary}
\providecommand{\theoremname}{Theorem}
\begin{document}

\title{Adaptive Test of Conditional Moment Inequalities}

\author{By Denis Chetverikov%
\thanks{MIT, Economics Department. Email: dchetver@mit.edu. I thank Victor
Chernozhukov for his guidance, numerous discussions and permanent
support. I am also grateful to Isaiah Andrews, Jerry Hausman, Kengo
Kato, Anton Kolotilin, and Anna Mikusheva for useful comments and
discussions. The first version of the paper was presented at the Econometric
lunch at MIT in November, 2010.%
}}
\maketitle
\begin{abstract}
In this paper, I construct a new test of conditional moment inequalities,
which is based on studentized kernel estimates of moment functions
with many different values of the bandwidth parameter. The test automatically
adapts to the unknown smoothness of moment functions and has uniformly
correct asymptotic size. The test has high power in a large class
of models with conditional moment inequalities. Some existing tests
have nontrivial power against $n^{-1/2}$-local alternatives in a
certain class of these models whereas my method only allows for nontrivial
testing against $(n/\log n)^{-1/2}$-local alternatives in this class.
There exist, however, other classes of models with conditional moment
inequalities where the mentioned tests have much lower power in comparison
with the test developed in this paper.
\end{abstract}

\lyxaddress{Keywords: Conditional Moment Inequalities, Minimax Rate Optimality.}

\section{Introduction}

Conditional moment inequalities (CMI) are often encountered both in
economics and econometrics. In economics, they arise naturally in
many models that include behavioral choice, see \citet{Pakes2010}
for a survey. In these models, an agent chooses the action that maximizes
expected utility given her information set. Comparing the realized
action with any other available action leads to CMI. In econometrics,
they appear in the estimation problems with interval data and problems
with censoring, e.g., see \citet{ManskiandTamer2002}. In addition,
CMI offer a convenient way to study treatment effects in randomized
experiments as described in \citet{LeeandSongandWhang2011}. In the
next section, I provide three detailed examples of models with CMI.

Let $m:\,\mathbb{R}^{d}\times\mathbb{R}^{k}\times\Theta\rightarrow\mathbb{R}^{p}$
be a vector-valued known function. Let $(X,W)$ be a pair of $\mathbb{R}^{d}$
and $\mathbb{R}^{k}$-valued random vectors, and $\theta\in\Theta$
a parameter. The CMI can be written as
\begin{equation}
E[m(X,W,\theta)|X]\leq0\, a.s.\label{eq:former null hypothesis-1}
\end{equation}
where inequalities are understood piecewise. I am interested in testing
the null hypothesis, $H_{0}$, that $\theta=\theta_{0}$ against the
alternative, $H_{a}$, that $\theta\neq\theta_{0}$ based on iid sample
$(X_{i},W_{i})_{i=1}^{n}$ from the distribution of $(X,W)$. Note
that I also allow for conditional moment equalities since they can
be written as pairs of the CMI in model (\ref{eq:former null hypothesis-1}).

Using CMI for inference is difficult because often these inequalities
do not identify the parameter. Let 
\[
\Theta_{I}=\{\theta\in\Theta:\, E[m(X,W,\theta)|X]\leq0\,\, a.s.\}
\]
denote the identified set. The model is said to be identified if and
only if $\Theta_{I}$ is a singleton. Otherwise, CMI do not identify
the parameter $\theta$. For example, the latter may happen when the
CMI arise from a game-theoretic model with multiple equilibria. Moreover,
the parameter may be weakly identified. My approach leads to a test
with the correct asymptotic size no matter whether the parameter is
identified, weakly identified, or not identified.

Two approaches to robust CMI testing have been developed in the literature.
One approach (\citet{AndrewsandShi2010}), is based on converting
CMI into an infinite number of unconditional moment inequalities using
nonnegative weighting functions. The other approach (\citet{ChernozhukovLeeRosen2009}),
is based on estimating moment functions nonparametrically. My method
is inspired by the work of \citet{AndrewsandShi2010}. To motivate
the test developed in this paper, consider two examples of CMI models.
These models are highly stylized but convey main ideas. In the first
model, $m$ is multiplicatively separable in $\theta$, i.e. $m(X,W,\theta)=\theta\tilde{m}(X,W)$
for some $\tilde{m}:\,\mathbb{R}^{d}\times\mathbb{R}^{k}\rightarrow\mathbb{R}$
and $\theta\in\mathbb{R}$ with $E[\tilde{m}(X,W)|X]>0$ almost surely.
In the second model, $m$ is additively separable in $\theta$, i.e.
$m(X,W,\theta)=\tilde{m}(X,W)+\theta$. The identified sets, $\Theta_{I}$,
in these models are $\{\theta\in\mathbb{R}:\,\theta\leq0\}$ and $\{\theta\in\mathbb{R}:\,\theta\leq-\text{ess}\sup_{X}E[\tilde{m}(X,W)|X]\}$
correspondingly. \citet{AndrewsandShi2010} developed a test that
has nontrivial power against alternatives of the form $\theta_{0}=\theta_{0,n}=C/\sqrt{n}$
for any $C>0$ in the first model, so their test has extremely high
power in this model. It follows from \citet{Armstrong1} that their
test has low power in the second model, however (e.g., in comparison
with the test of \citet{ChernozhukovLeeRosen2009})%
\footnote{\citet{AndrewsandShi2010} developed tests based on both Cramer-von
Mises and Kolmogorov-Smirnov test statistics. In this paper, I mainly
refer to their test with Kolmogorov-Smirnov test statistic. Most statements
are also applicable for Cramer-von Mises test statistic as well, however.%
}. In constrast, I construct a test that has high power in a large
class of CMI models including models like that in the second example.
At the same time, my test has virtually the same power in models like
that described in the first example. The main difference between two
approaches is that my test statistic is based on the \textit{studentized}
estimates of moments whereas theirs is not. More precisely, \citet{AndrewsandShi2010}
also consider studentization but they modify the variance term so
that asymptotic power properties of their test are similar to those
of the test with no studentization. 

The test of \citet{ChernozhukovLeeRosen2009} also has high power
in a large class of CMI models but it requires knowledge of certain
smoothness properties of moment functions such as order of differentiability
whereas the test developed in this paper does not. Moreover, my test
automatically adapts to these smoothness properties selecting the
most appropriate weighting function. This feature of the test is important
because smoothness properties of moment functions are rarely known
in practice. For this reason, I call the test adaptive.

The test statistic in this paper is based on kernel estimates of moment
functions $E[m_{j}(X,W,\theta_{0})|X]$ with many bandwidth values
using positive kernels%
\footnote{A kernel is said to be positive if the kernel function is positive
on its support.%
}. Here $m_{j}(X,W,\theta)$ denotes $j$-th component of $m(X,W,\theta)$.
I assume that the set of bandwidth values expands as the sample size
$n$ increases so that the minimal bandwidth value converges to zero
at an appropriate rate while the maximal one is fixed. Since the variance
of the kernel estimators varies greatly with the bandwidth value,
each estimator is studentized, i.e. it is divided by its estimated
standard deviation. The test statistic, $\hat{T}$, is formed as the
maximum of these studentized estimates, and large values of $\hat{T}$
suggest that the null hypothesis is violated.

I develop a bootstrap method to simulate the critical value for the
test. The method is based on the observation that the distribution
of the test statistic, conditionally on the values $\{X_{i}\}_{i=1}^{n}$,
is asymptotically independent of the distribution of the noise $\{m(X_{i},W_{i},\theta_{0})-E[m(X_{i},W_{i},\theta_{0})|X_{i}]\}_{i=1}^{n}$
apart from its second moment. For reasons similar to those discussed
in \citet{Chernozhukov2007} and \citet{Andrews2010}, the distribution
of the test statistic in large samples depends heavily on the extent
to which CMI are binding. Moreover, the parameters that measure to
what extent CMI are binding can not be estimated consistently. I develop
a new approach to deal with this problem, which I refer to as the
refined moment selection (RMS) procedure. The approach is based on
the pretest that is used to decide what counterparts of the test statistic
should be used in simulating the critical value for the test. In comparison
with \citet{AndrewsandShi2010}, I use a model-specific critical value
for the pretest, which is simulated as a high quantile of the appropriate
distribution, whereas they use a deterministic threshold with no reference
to the model. For comparison reasons, I also provide a plug-in critical
value for the test. My proof of the bootstrap validity is interesting
on its own right because it is not known whether the test statistic
converges in distribution somewhere or not.

None of the tests in the literature including mine have power against
alternatives in the set $\Theta_{I}$. Therefore, I consider the alternatives
of the form
\begin{equation}
P\{E[m_{j}(X,W,\theta_{0})|X]>0\}>0\,\text{for some}\, j=1,...,p\label{eq: alternative}
\end{equation}
To show that my test has good power properties in a large class of
CMI models, I derive its power against alternatives of the form (\ref{eq: alternative})
assuming that $E[m(X,W,\theta_{0})|X]$ is some vector of unrestricted
nonparametric functions. In other words, I consider nonparametric
classes of alternatives. Once $m(X,W,\theta)$ is specified, it is
straightforward to translate my results into the parametric setting.
The test developed in this paper is consistent against any fixed alternative
outside of the set $\Theta_{I}$. I also show that my method allows
for nontrivial testing against $(n/\log n)^{-1/2}$-local one-directional
alternatives%
\footnote{In this paper, by one directional alternatives, I mean alternatives
of the form $E[m(X,W,\theta_{0})|X]=a_{n}f(X)$ for some sequence
of positive numbers $\{a_{n}\}_{n=1}^{\infty}$ converging to zero
where $f$ satisfies (\ref{eq: alternative}).%
}. Finally, I prove that the test is minimax rate optimal against certain
classes of smooth alternatives consisting of moment functions $E[m(X,W,\theta_{0})|X]$
that are sufficiently flat at the points of maxima. Minimax rate optimality
means that the test is uniformly consistent against alternatives in
the mentioned class whose distance from the set of models satisfying
(\ref{eq:former null hypothesis-1}) converges to zero at the fastest
possible rate. The requirement that functions should be sufficiently
flat can not be dropped because the test is based on the positive
kernels.

The literature concerned with unconditional and conditional moment
inequalities is expanding quickly. The list of published papers on
unconditional moment inequalities includes \citet{Chernozhukov2007},
\citet{RomanoShaikh2008}, \citet{Rosen2008}, \citet{AndrewsGuggenberger2009},
\citet{AndrewsHan2009}, \citet{Andrews2010}, \citet{Bugni2010},
\citet{Canay2010}, \citet{Pakes2010}, and \citet{RomanoShaikh2010}.
I note that there is also a large literature on partial identification
which is close related to that on moment inequalities. Methods specific
for conditional moment inequalities were developed in \citet{KhanAndTamer2009},
\citet{Kim2008}, \citet{ChernozhukovLeeRosen2009}, \citet{AndrewsandShi2010},
\citet{LeeandSongandWhang2011}, \citet{Armstrong1}, and \citet{Armstrong2}.
The case of CMI that point identify $\theta$ is treated in \citet{KhanAndTamer2009}.
The test of \citet{Kim2008} is closely related to that of \citet{AndrewsandShi2010}.
\citet{LeeandSongandWhang2011} developed a test based on the minimum
distance statistic in the one-sided $L_{p}$-norm and kernel estimates
of moment functions. The advantage of their approach comes from simplicity
of their critical value for the test, which is an appropriate quantile
of the standard Gaussian distribution. Their test is not adaptive,
however, since only one bandwidth value is used. \citet{Armstrong1}
developed a new method for computing the critical value for the test
statistic of \citet{AndrewsandShi2010} which leads to a more powerful
test than theirs but his method is not robust. In particular, his
method can not be used in the CMI models like that described in the
first example above. \citet{Armstrong2} considered the test statistic
similar to that used in this paper but he focused on estimation rather
than inference.

Finally, an important related paper in the statistical literature
is \citet{Dumbgen2001}. They consider testing qualitative hypotheses
in the ideal Gaussian white noise model where a researcher observes
a stochastic process that can be represented as a sum of the mean
function and a Brownian motion. In particular, they developed a test
for the null hypothesis that the mean function is (weakly) negative
almost everywhere. Even though their test statistic is somewhat related
to that used in this paper, the technical details of their analysis
are quite different.

The rest of the paper is organized as follows. The next section elaborates
on some examples of CMI models. Section \ref{sec:The-Test} formally
introduces the test. The main results of the paper are presented in
section \ref{sec:The-Main-Results}. A Monte Carlo simulation study
is described in section \ref{sec:Monte-Carlo-Results}. There I provide
an example of an alternative with the well-behaved moment function
such that the test developed in this paper rejects the null hypothesis
with probability higher than 80\% while the rejection probability
of all competing tests does not exceed 20\%. Brief conclusions are
drawn in section \ref{sec:Conclusions}. Finally, all proofs are contained
in the Appendix.

\section{\label{sec:Examples}Examples}

In this section, I provide three examples where CMI arise naturally
in economic and econometric models. The first two examples have function-valued
parameters. In order to fit these examples into my framework, one
can consider parametric approximations of corresponding functions.

\textbf{Incomplete Models of English Auctions. }My first example follows
\citet{HaileandTamer} treatment of English auctions under weak conditions.
The popular model of English auctions suggested by \citet{Milgrom-Weber1982}
assumes that each bidder is holding down the button while the price
is going up continuously until she wants to drop out. The price at
the moment of dropping out is her bid. In this model, it is well-known
that the dominant strategy is to make a bid equal to her valuation
of the object. In practice, participants usually call out bids, however.
So, the price rises in jumps, and the bid may not be equal to person's
valuation of the object. In this situation, the relation between bids
and valuations of the object depends crucially on the modeling assumptions.
\citet{HaileandTamer} derived certain bounds on the distribution
function of valuations based on minimal assumptions of rationality.

Suppose we have an auction with $m$ bidders whose valuations of the
object are drawn independently from the distrubution $F(\cdot,X)$
where $X$ denotes observable characterics of the object. Let $b_{1},...,b_{m}$
denote highest bids of each bidder. Let $b_{1:m}\leq...\leq b_{m:m}$
denote the ordered sequence of bids $b_{1},...,b_{m}$. Assuming that
bids do not exceed bidders' valuations, \citet{HaileandTamer} derived
the following upper bound on $F(\cdot,X)$:
\[
E[I\{b_{i:m}\leq v\}-\phi^{-1}(F(v,X))|X]\geq0\, a.s.
\]
for all $v\in\mathbb{R}$ and $i=1,...,m$ where $\phi(\cdot)$ is
a certain (known) function, see equation (3) in \citet{HaileandTamer}.
Similar lower bound follows from the assumption that bidders do not
allow oponents to win at a price they would like to beat. Assuming
we observe an iid sequence of auctions, these CMI can be used for
inference on $F(v,X)$.

\textbf{Interval Data. }In some cases, especially when data concerns
personal information like individual income or wealth, one has to
deal with interval data. Suppose we have a mean regression model
\[
Y=f(X,V)+\varepsilon
\]
where $E[\varepsilon|X,V]=0$ a.s. and $V$ is a scalar random variable.
Suppose that we observe $X$ and $Y$ but we do not observe $V$.
Instead, we observe $V_{0}$ and $V_{1}$ called brackets such that
$V\in(V_{0},V_{1})$ a.s. In empirical analysis, brackets may arise
because a respondent refuses to provide information on $V$ but provides
an interval to which $V$ belongs. Following \citet{ManskiandTamer2002}
assume that $f(X,V)$ is weakly increasing in $V$ and $E[Y|X,V]=E[Y|X,V,V_{0},V_{1}]$.
Then it is easy to see that
\[
E[I\{V_{1}\leq v\}(Y-f(X,v))|X,V_{0},V_{1}]\leq0
\]
and
\[
E[I\{V_{0}\geq v\}(Y-f(X,v))|X,V_{0},V_{1}]\geq0
\]
for all $v\in\mathbb{R}$. If we observe an iid sample from the model,
we can use these CMI for inference on $f(X,V)$.

\textbf{Treatment Effects. }Suppose we have a randomized experiment
where one group of people gets a new treatment while the control group
gets a placebo. Let $D=1$ if the person gets the treatment and $0$
otherwise. Let $p$ denote the probability that $D=1$. Let $X$ denote
person's observable characteristics and $Y$ denote a realized outcome.
Finally, let $Y_{0}$ and $Y_{1}$ denote counterfactual outcomes
had the person received a placebo or the new medicine respectively.
Then $Y=DY_{1}+(1-D)Y_{0}$. The question of interest is whether the
new medicine has a positive expected impact uniformly over all posible
person's charactersics $X$. In other words, the null hypothesis,
$H_{0}$, is that 
\begin{equation}
E[Y_{1}-Y_{0}|X]\geq0\, a.s.\label{eq: 11}
\end{equation}
Since in randomized experiments $D$ is independent of $X$, \citet{LeeandSongandWhang2011}
showed that
\begin{equation}
E[Y_{1}-Y_{0}|X]=E[DY/p-(1-D)Y/(1-p)|X]\label{eq: 12}
\end{equation}
Combining (\ref{eq: 11}) and (\ref{eq: 12}) gives CMI.

\section{\label{sec:The-Test}The Test}

In this section, I present the test statistic and give two bootstrap
methods to simulate a critical value. Given nonparametric nature of
the test, I use the corresponding terminology. For fixed $\theta_{0}$,
let $Y=m(X,W,\theta_{0})$, $f(X)=E[m(X,W,\theta_{0})|X]$, and $\varepsilon=Y-f(X)$
so that $E[\varepsilon|X]=0$ a.s. Then under the null hypothesis,
\[
f(X)\leq0\, a.s.
\]
I refer to $Y$ as a response variable, $f$ as a vector-valued regression
function, $X$ as a design point, and $\varepsilon$ as a disturbance.
Components of $f$ are denoted by $f_{1},...,f_{p}$.

The analysis in this paper is conducted conditionally on the set of
values $\{X_{i}\}_{i=1}^{n}$ of the insrumental variable $X$, so
all probabilistic statements in this paper should be understood conditionally
on $\{X_{i}\}_{i=1}^{n}$ for almost all sequences $\{X_{i}\}_{i=1}^{n}$.
Lemma \ref{lemma for support} in the Appendix provides certain conditions
that insure that assumptions used in this paper hold for almost all
sequences $\{X_{i}\}_{i=1}^{n}$.

Section \ref{sub:The-Test-Statistic} defines the test statistic assuming
that $E[\varepsilon_{i}\varepsilon_{i}^{T}]=\Sigma_{i}$ is known
for each $i=1,...,n$. Section \ref{sub:The-Test-Function} gives
two bootstrap methods to simulate a critical value. The first one
is based on plug-in asymptotics, and the second one is based on the
refined moment selection (RMS) procedure. Section \ref{sub:The-Test-Function}
also provides some intuition of why these procedures lead to the correct
asymptotic size of the test. When $\Sigma_{i}$ is not known, it should
be estimated from the data. Section \ref{sub:Estimating sigma} shows
how to construct an appropriate estimator $\hat{\Sigma}_{i}$ of $\Sigma_{i}$.
The feasible version of the test will be based on substituting $\hat{\Sigma}_{i}$
for $\Sigma_{i}$ both in the test statistic and in the critical value.
\ref{sub:Remarks} provides some notes on how to choose certain tuning
parameters.

\subsection{\label{sub:The-Test-Statistic}The Test Statistic}

The test statistic in this paper is based on the kernel estimator
of the vector-valued regression function $f$. Let $K:\,\mathbb{R}^{d}\rightarrow\mathbb{R}_{+}$
be some kernel. For bandwidth value $h\in\mathbb{R}_{+}$, denote
$K_{h}(x)=K(x/h)/h^{d}$. For each pair of observations $i,j=1,...,n$,
denote the weight function
\[
w_{h}(X_{i},X_{j})=\frac{K_{h}(X_{i}-X_{j})}{\sum_{k=1}^{n}K_{h}(X_{i}-X_{k})}
\]
Then the kernel estimator of $f_{m}(X_{i})$ is
\[
\hat{f}_{i,m,h}=\sum_{j=1}^{n}w_{h}(X_{i},X_{j})Y_{j,m}
\]
where $Y_{j,m}$ denotes $m$-th component of response variable $Y_{j}$.
Conditionally on $\{X_{i}\}_{i=1}^{n}$, the variance of the kernel
estimator $\hat{f}_{i,m,h}$ is
\[
V_{i,m,h}^{2}=\sum_{j=1}^{n}w_{h}^{2}(X_{i},X_{j})\Sigma_{j,mm}
\]
where $\Sigma_{j,m_{1}m_{2}}$ denotes $(m_{1},m_{2})$ component
of $\Sigma_{j}=E[\varepsilon_{j}\varepsilon_{j}^{T}]$.

Next, consider a finite set of bandwidth values $H=\{h=h_{\max}a^{k}:\, h\geq h_{\min},k=0,1,2,...\}$
for some $h_{\max}>h_{\min}$ and $a\in(0,1)$. For simplicity, I
assume that $h_{\min}=h_{\max}a^{k}$ for some $k\in\mathbb{N}$ so
that $h_{\min}$ is included in $H$. I assume that as the sample
size $n$ increases, $h_{\min}$ converges to zero while $h_{\max}$
is fixed. For each bandwidth value $h\in H$, choose a subset $I_{h}$
of observations such that $\Vert X_{i}-X_{j}\Vert>2h$ for all $i,j\in I_{h}$
with $i\neq j$ and for each $i=1,...,n$, there exist an element
$j(i)\in I_{h}$ such that $\Vert X_{i}-X_{j(i)}\Vert\leq2h$ where
$\Vert\cdot\Vert$ denotes the Eucledian norm on $\mathbb{R}^{d}$.
I refer to $I_{h}$ as a set of test points. The choice of $I_{h}$
may be random, but it is important to select $I_{h}$ independently
of response variables $\{Y_{i}\}_{i=1}^{n}$. So, conditionally on
$\{X_{i}\}_{i=1}^{n}$, I assume that $I_{h}$ is nonstochastic. It
will be assumed in the next section that $K(x)=0$ for any $x\in\mathbb{R}^{d}$
such that $\Vert x\Vert>1$. Thus, random variables $\{\hat{f}_{i,m,h}\}_{i\in I_{h}}$
are jointly independent for any fixed $m=1,...,p$ and $h\in H$ conditionally
on $\{X_{i}\}_{i=1}^{n}$. This fact will play a key role in the derivation
of the lower bound on the growth rate of the pdf of the test statistic,
which is used in the analysis of size properties of the test%
\footnote{Although my argument in the derivation of the lower bound is based
on the fact that $\{\hat{f}_{i,m,h}\}_{i\in I_{h}}$ are jointly independent,
I believe that the same lower bound can be obtained even for the case
$I_{h}=\{1,...,n\}$. If this statement is true, one can use $I_{h}=\{1,...,n\}$
in the definition of the test statistic.%
}. Finally, denote $S=\{(i,m,h):\, h\in H,i\in I_{h},m=1,...,p\}$.

Based on this notation, the test statistic is
\[
T=\max_{s\in S}\frac{\hat{f}_{s}}{V_{s}}
\]

Let me now explain why the optimal bandwidth value depends on the
smoothness properties of the components $f_{1},...,f_{p}$ of $f$.
Without loss of generality, consider $j=1$. Suppose that $f_{1}(X)$
is flat. Then $f_{1}(X)$ is positive on the large subset of its domain
whenever its maximal value is positive. Hence, the maximum of $\hat{T}$
will correspond to a large bandwidth value because the variance of
the kernel estimator, which enters the denominator of the test statistic,
decreases with the bandwidth value. On the other hand, if $f_{1}(X)$
is allowed to have peaks, then there may not exist a large subset
where it is positive. So, large bandwidth values may not yield large
values of $\hat{T}$, and small bandwidth values should be used in
such cases. I circumvent these problems by considering the set of
bandwidth values jointly, and let the data determine the best bandwidth
value. In this sense, my test adapts to the smoothness properties
of $f(X)$. This allows me to construct a test with good uniform power
properties over possible smoothness of $f(X)$.

When $\Sigma_{i}$ is not observed, which is usually the case in practice,
one can define $\hat{V}_{i,m,h}^{2}=\sum_{j=1}^{n}w_{h}^{2}(X_{i},X_{j})\hat{\Sigma}_{j,mm}$
and use

\[
\hat{T}=\max_{s\in S}\frac{\hat{f}_{s}}{\hat{V}_{s}}
\]
instead of $T$ where $\hat{\Sigma}_{j}$ is some estimator of $\Sigma_{j}$.
Some possible estimators are discussed in section \ref{sub:Estimating sigma}.

\subsection{\label{sub:The-Test-Function}Critical Values}

Suppose we want to construct a test of size $\alpha$. This subsection
explains how to simulate a critical value $t_{1-\alpha}$ for the
statistic $\hat{T}$ based on two bootstrap methods. One method is
based on the plug-in asymptotics, and the other one is based on the
refined moment selction (RMS) procedure. Both methods have deterministic
and randomized versions. For the randomized versions, one first determines
some small interval, say $[c,c+\beta]$ with $\beta>0$, where the
critical value belongs. Then one draws the critical value from a certain
distribution with the support $[c,c+\beta]$. This randomization comes
from my proof technique, which is based on the Linderberg method.
Under somewhat stronger conditions, I also prove the validity of both
methods with $\beta=0$, which corresponds to their deterministic
versions. The test will be of the following form: reject the null
hypothesis if and only if $\hat{T}>t_{1-\alpha}$.

Let $\beta$ be either zero or some small positive number. Let $g_{0}$
be a thrice differentiable function from $\mathbb{R}$ into $[0,1]$
such that $g_{0}(x)=1$ for all $x\leq0$ and $g_{0}(x)=0$ for all
$x\geq1$. Denote $g(x)=g_{0}((x-c)/\beta)$ for some $c\in\mathbb{R}$.
Since $g(x)\in[0,1]$ for all $x\in\mathbb{R}$, $g(\cdot)$ gives
a randomized test: upon observing the test statistic $\hat{T}=x$,
one accepts the null hypothesis with probability $g(x)$. I will choose
$c$ so that, under the null hypothesis, $E[g(\hat{T})]\geq1-\alpha+o(1)$
as $n\rightarrow\infty$, which leads to the correct asymptotic size
of this randomized test. An equivalent way to describe this test is
as follows. Let $U$ be a random variable independent of the data
with uniform distribution on $[0,1]$. Define the critical value $t_{1-\alpha}$
for the test from the equation $g(t_{1-\alpha})=U$. Since $g(x)$
is decreasing in $x$, this equation has the unique solution so that
$t_{1-\alpha}$ is well-defined. Lemma \ref{lemma: equivalence of tests}
in the Appendix shows that $E[g(\hat{T})]=P\{\hat{T}\leq t_{1-\alpha}\}$,
which means that the randomized test is equivalent to the test based
on the critical value $t_{1-\alpha}$. Note that the latter formulation
is more convenient for the confidence set construction: one can use
the same $U$ for all possible values of $\theta_{0}$. For the purposes
of presentation, the former formulation is suitable, however. I refer
to $g(\cdot)$ as a test function.

Let me now describe two possible bootstrap methods to simulate $c$.
The first method is based on plug-in asymptotics. It relies on two
observations. First, it is easy to see that, for a fixed distribution
of disturbances $\{\varepsilon_{i}\}_{i=1}^{n}$, the maximum of $1-\alpha$
quantile of the test statistic $\hat{T}$ over all possible functions
$f$ satisfying $f\leq0$ almost surely corresponds to $f=0_{p}$.
Second, lemmas \ref{variance} and \ref{lemma: aplication of Chatterjee}
in the Appendix show that the distribution of the statistic $\hat{T}$
is asymptotically independent of the distrubution of disturbances
$\{\varepsilon_{i}:\, i=1,...,n\}$ apart from their second moments
$\{\Sigma_{i}:\, i=1,...,n\}$. These observations suggest that one
can simulate $c$ by the following procedure:
\begin{enumerate}
\item For each $i=1,...,n$, simulate $\tilde{Y}_{i}\sim N(0_{p},\hat{\Sigma}_{i})$
independently across $i$.
\item Calculate $T^{PIA}=\max_{(i,m,h)\in S}\sum_{j=1}^{n}w_{h}(X_{i},X_{j})\tilde{Y}_{j,m}/\hat{V}_{i,m,h}$.
\item Repeat steps 1 and 2 independently $B$ times for some large $B$
to obtain $\{T_{b}^{PIA}:\, b=1,...,B\}$.
\item Find $c_{1-\alpha}^{PIA}$ such that $\sum_{b=1}^{B}g_{0}((T_{b}^{PIA}-c_{1-\alpha}^{PIA})/\beta)/B=1-\alpha$. 
\end{enumerate}
Then plug-in test function $g_{1-\alpha}^{PIA}:\,\mathbb{R}\rightarrow[0,1]$
is given by $g_{1-\alpha}^{PIA}(x)=g_{0}((x-c_{1-\alpha}^{PIA})/\beta)$
for all $x\in\mathbb{R}$.

The second method is based on the refined moment selection (RMS) procedure.
It gives a less conservative critical value while maintaining the
required size of the test. The method is based on the observation
that $|\hat{T}|=O_{p}(\sqrt{\log n})$ if $f=0_{p}$ (see lemmas \ref{normal conv},
\ref{variance}, \ref{lemma: aplication of Chatterjee} in the Appendix)
while $\hat{f}_{i,m,h}/\hat{V}_{i,m,h}\rightarrow-\infty$ with a
polynomial rate if $f_{m}(X_{i})<0$ and $h\rightarrow0$. Such terms
will have asymptotically negligible effect on the distribution of
$\hat{T}$, so we can ignore corresponding terms in the simulated
statistic. Specifically, let $\gamma<\alpha/2$ be some small positive
number. First, use the plug-in bootstrap to find $c_{1-\gamma}^{PIA}$.
Denote
\[
S^{RMS}=\{s\in S:\,\hat{f}_{s}/\hat{V}_{s}>-2(c_{1-\gamma}^{PIA}+\beta)\}
\]
Second, run the following procedure:
\begin{enumerate}
\item For each $i=1,...,n$, simulate $\tilde{Y}_{i}\sim N(0_{p},\hat{\Sigma}_{i})$
independently across $i$.
\item Calculate $T^{RMS}=\max_{(i,m,h)\in S^{RMS}}\sum_{j=1}^{n}w_{h}(X_{i},X_{j})\tilde{Y}_{j,m}/\hat{V}_{i,m,h}$.
\item Repeat steps 1 and 2 independently $B$ times for some large $B$
to obtain $\{T_{b}^{RMS}:\, b=1,...,B\}$.
\item Find $c_{1-\alpha+2\gamma}^{RMS}$ such that $\sum_{b=1}^{B}g_{0}((T_{b}^{RMS}-c_{1-\alpha+2\gamma}^{RMS})/\beta)/B=1-\alpha+2\gamma$.
\end{enumerate}
Then RMS test function $g_{1-\alpha}^{RMS}:\,\mathbb{R}\rightarrow\mathbb{R}$
is given by $g_{1-\alpha}^{RMS}(x)=g_{0}((x-c_{1-\alpha+2\gamma}^{RMS})/\beta$
for all $x\in\mathbb{R}$. The additional term $2\gamma$ can be interpreted
as a correction for the truncation procedure introduced in $S^{RMS}$.

\subsection{\label{sub:Estimating sigma}Estimating $\Sigma_{i}$}

Let me now explain how one can estimate $\Sigma_{i}$. The literature
on estimating $\Sigma_{i}$ is huge. Among other papers, it includes
\citet{Rice1984}, \citet{Muller1987}, \citet{HardleTsybakov2007},
and \citet{FanYao1998}. For scalar-valued response variables, a variaty
of such estimators is described in \citet{Horowitz2001}. All those
estimators can be immediately generalized to vector-valued response
variables. For completeness, I describe one estimator here. For $i=1,...,n$
define $j(i)$ by the following recursion:

\[
j(1)=\arg\min_{j=2,...,n}\Vert X_{j}-X_{1}\Vert
\]
and

\[
j(i)=\arg\min_{j\neq i,\, j(1),...,j(i-1)}\Vert X_{j}-X_{i}\Vert
\]
Then variance $\Sigma_{i}$ can be estimated by

\[
\hat{\Sigma}_{i}=\frac{\sum_{k=1}^{n}(Y_{k}-Y_{j(k)})(Y_{k}-Y_{j(k)})^{T}I(\Vert X_{k}-X_{i}\Vert\leq b_{n})}{2\sum_{k=1}^{n}I(\Vert X_{k}-X_{i}\Vert\leq b_{n})}
\]
where $b_{n}$ denotes some bandwidth value. This estimator will be
uniformly consistent for $\Sigma_{i}$ over $i=1,...,n$ with rate
$(\log n/n)^{1/(2+d)}$, i.e.
\[
\max_{i=1,...,n}\Vert\hat{\Sigma}_{i}-\Sigma_{i}\Vert_{o}=O_{p}\left(\frac{\log n}{n}\right)^{1/(2+d)}
\]
if (i) $b_{n}\asymp(\log n/n)^{1/(2+d)}$ and (ii) assumptions from
section \ref{sub:Assumptions} hold where $\Vert\cdot\Vert_{o}$ denotes
the spectral norm on the space of $p\times p$-dimensional symmetric
matrices corresponding to Eucledian norm on $\mathbb{R}^{p}$. To
choose bandwidth value $b_{n}$ in practice, one can use any type
of the cross validation. An advantage of this estimator is that it
is fully adaptive with respect to smoothness properties of regression
function $f$.

The intuition behind this estimator is based on the following argument.
Note that $j(k)$ is chosen so that $X_{j(k)}$ is close to $X_{k}$.
If regression function $f$ is continuous, 
\[
Y_{k}-Y_{j(k)}=f(X_{k})-f(X_{j(k)})+\varepsilon_{k}-\varepsilon_{j(k)}\approx\varepsilon_{k}-\varepsilon_{j(k)}
\]
so that
\[
E[(Y_{k}-Y_{j(k)})(Y_{k}-Y_{j(k)})^{T}]\approx\Sigma_{k}+\Sigma_{j(k)}
\]
since $\varepsilon_{k}$ is independent of $\varepsilon_{j(k)}$.
If $b_{n}$ is small enough and $\Sigma(X)$ is continuous, $\Sigma_{k}+\Sigma_{j(k)}\approx2\Sigma_{i}$
since only $X_{k}$ satisfying $\Vert X_{k}-X_{i}\Vert\leq b_{n}$
are used in estimating $\Sigma_{i}$.

\subsection{\label{sub:Remarks}Remarks on the Choice of Testing Parameters}

Implementing the deterministic version of the test requires choosing
minimal and maximal bandwidth values $h_{\min}$ and $h_{\max}$ and
the parameter $\gamma$. The randomized version of the test also use
the parameter $\beta$ and the function $g_{0}:\,\mathbb{R}\rightarrow[0,1]$.
In this section, I provide some notes on how to choose these objects
for the randomized test to make sure that the test maintains the required
size.

First, I recommend to set $h_{\max}=\max_{i,j=1,...,n}\Vert X_{i}-X_{j}\Vert/2$
as a normalization. Second, it follows from theorem \ref{thm: size}
that the test with RMS test function is not conservative asymptotically
only if $\gamma=\gamma_{n}\rightarrow0$ as $n\rightarrow0$. So,
I recommend to set $\gamma$ as a small fraction of $\alpha$, for
example $\gamma=0.01$ for $\alpha=0.05$. Alternatively, one can
set $\gamma=0.1/\log(n)$ similarly the corresponding choice in \citet{ChernozhukovLeeRosen2009}.

Next, consider how to choose $g_{0}$, $h_{\min}$, and $\beta$.
It follows from theorems \ref{thm: size} and \ref{thm:Chatterjee}
and lemma \ref{lemma: aplication of Chatterjee} that the test maintains
the required size if
\[
\Delta=\frac{3}{6^{1/3}\beta^{2/3}}pbn^{1/3}\left(\frac{\Vert g_{0}^{\prime\prime\prime}\Vert_{\infty}}{\beta^{3}}+\frac{3\Vert g_{0}^{\prime\prime}\Vert_{\infty}}{\beta^{2}}+\frac{\Vert g_{0}^{\prime}\Vert_{\infty}}{\beta}\right)^{1/3}(\Vert g_{0}^{\prime}\Vert_{\infty}\log|S|)^{2/3}F
\]
is small in comparison with $\alpha$ (required size) where 
\[
F=\left(\max E[|\varepsilon_{i,m}^{3}|]+\max\sqrt{8/\pi}\Sigma_{i,mm}^{3/2}\right)^{1/3}
\]
with both maxima taken over $i=1,...,n$ and $m=1,...,p$ and
\[
b=\max_{(i,m,h)\in S;\, j=1,...,n}\frac{w_{h}(X_{i},X_{j})}{V_{i,m,h}}
\]
If $\beta\ll1$, the good choice of $g_{0}$ is given by
\[
g_{0}(x)=\begin{cases}
1 & \text{if}\, x\leq0\\
1-(16/3)x^{3} & \text{if}\, x\in(0,1/4]\\
7/6-x-4(x-1/4)^{2}+(16/3)(x-1/4)^{3} & \text{if}\, x\in(1/4,3/4]\\
(16/3)(1-x)^{3} & \text{if}\, x\in(3/4,1]\\
0 & \text{if}\, x>1
\end{cases}
\]
This function is chosen so that $g_{0}^{\prime\prime\prime}(x)=-32$
for $x\in(0,1/4]$, $+32$ for $x\in(1/4,3/4]$, and $-32$ for $x\in(3/4,1]$.
Given this function, if $\beta\leq1$, it is enough to set parameters
so that 
\begin{equation}
1.8pbn^{1/3}(\log|S|)^{2/3}F/\beta^{5/3}\ll\alpha\label{eq:guarantee}
\end{equation}
Given $h_{\min}$, $b$ and $F$ can be estimated from the data. Then
one can choose $\beta$ so that the inequality above is satisfied.
Note that there is a trade-off between choosing small $\beta$ and
small $h_{\min}$ since $b$ is a decreasing function of $h_{\min}$. 

I note that the inequality (\ref{eq:guarantee}) guarantees good size
properties of the test uniformly over a large set of the true distributions
of disturbances $\{\varepsilon_{j}\}_{j=1}^{n}$. In particular, this
set includes discrete distributions, which lead to the distributions
of the test statistic that are difficult to approximate using Gaussian
disturbances%
\footnote{Similar phenomenon is also known in the classical theory of Central
Limit Theorems, see \citet{Ibragimov-Linnik}%
}. Therefore, this inequality is difficult to satisfy in sample sizes
typical for economic data. Nevertheless, this inequality is still
useful because it gives a starting point in choosing testing parameters.

\section{\label{sec:The-Main-Results}The Main Results}

This section presents my main results. Section \ref{sub:Assumptions}
gives regularity conditions. Section \ref{sub:Size} describes size
properties of the test. Section \ref{sub:Consistency-Fixed} explains
the behavior of the test under a fixed alternative. Section \ref{sub:Consistency-One-Directional}
derives the rate of consistency of the test against one-directional
alternatives mentioned in the introduction. Section \ref{sub:Uniform-Consistency}
shows the rate of uniform consistency against certain classes of smooth
alternatives. Section \ref{sub:Lower-Bound} presents the minimax
rate-optimality result.

\subsection{\label{sub:Assumptions}Assumptions}

Let $M_{h}(X_{i})$ be the number of elements in the set $\{X_{j}:\,\Vert X_{j}-X_{i}\Vert\leq h,\, j=1,...,n\}$.
In what follows, I will write $C$ and its variants for a generic
constant whose value may vary depending on the context. Results in
this paper will be proven under the following regularity assumptions.
\begin{assumption}
\label{ass:Design-points}(i) Design points $\{X_{i}\}_{i=1}^{n}$
are nonstochastic. (ii) For some constant $0<\bar{C}<\infty$ and
all $i=1,...,n$, $\Vert X_{i}\Vert<\bar{C}$. (iii) For some constants
$0<C_{1}<C_{2}<\infty$, $C_{1}nh^{d}\leq M_{h}(X_{i})\leq C_{2}nh^{d}$
for all $i\in\mathbb{N}$ and $h\in H=H_{n}$.
\end{assumption}
The design points are nonstochastic because the analysis is conducted
conditionally on $\{X_{i}\}_{i=1}^{n}$. Assumption 1 also states
that the design points have bounded support, which is a mild assumption.
In addition, it states that the number of design points in certain
neighborhoods of each design point is proportional to the volume of
the neighborhood with the coefficient of proportionality bounded from
above and away from zero. It is stated in \citet{Horowitz2001} that
assumption \ref{ass:Design-points} holds in an iid setting with probability
approaching one as the sample size increases if the distribution of
$X_{i}$ is absolutely continuous with respect to Lebegue measure,
has bounded support, and has the density bounded away from zero on
the support. This statement is actually wrong unless one makes some
extra assumptions. Lemma \ref{lem:wrong statement} in the Appendix
gives a counter-example. Instead, lemma \ref{lemma for support} shows
that assumption \ref{ass:Design-points} holds for large $n$ almost
surely if, in addition, I assume that the density of $X_{i}$ is bounded
from above, and that the support of $X_{i}$ is a convex set. Necessity
of the density boundedness is obvious. Convexity of the support is
not necessary for assumption \ref{ass:Design-points} but it gives
a good trade-off between generality and simplicity. In general, one
should deal with some smoothness properties of the boundary of the
support. Note that the statement ``for large $n$ almost surely''
is stronger than ``with probability approaching one''. Note also
that assumption \ref{ass:Design-points}(iii) requires inequalities
to hold for all $i\in\mathbb{N}$, not just for $i=1,...,n$.
\begin{assumption}
\label{ass:Disturbances}(i) Disturbances $\{\varepsilon_{i}:\, i=1,...,n\}$
are independent $\mathbb{R}^{p}$-valued random variables with $\mathbb{E}[\varepsilon_{i,m_{1}}]=0$,
$\mathbb{E}[\varepsilon_{i,m_{1}}\varepsilon_{i,m_{2}}]=\Sigma_{i,m_{1}m_{2}}<\infty$,
and $\mathbb{E}[\varepsilon_{i,m_{1}}\varepsilon_{i,m_{2}}\varepsilon_{i,m_{3}}\varepsilon_{i,m_{4}}]=s_{i,m_{1}m_{2}m_{3}m_{4}}^{4}<\infty$
for all $i=1,...,n$ and $m_{1},m_{2},m_{3},m_{4}=1,...,p$. (ii)
For some constants $0<C<\infty$ and $\delta>0$, $\mathbb{E}[|\varepsilon_{i,m}|^{4+\delta}]\leq C$
for all $i=1,...,n$ and $m=1,...,p$. (iii) For some constant $0<C<\infty$,
$|\Sigma_{i,m_{1}m_{2}}-\Sigma_{j,m_{1}m_{2}}|\leq C\Vert X_{i}-X_{j}\Vert$
and $|s_{i,m_{1}m_{2}m_{3}m_{4}}^{4}-s_{j,m_{1}m_{2}m_{3}m_{4}}^{4}|\leq C\Vert X_{i}-X_{j}\Vert$
for all $i,j=1,...,n$ and $m_{1},m_{2},m_{3},m_{4}=1,...,p$. (iv)
For some constant $0<C<\infty$, $\Sigma_{i,mm}\geq C$ for all $i=1,...,n$
and $m=1,...,p$.
\end{assumption}
The reason for imposing assumption \ref{ass:Disturbances} is threefold.
First, finite third moment of disturbances is used in the derivation
of a certain invariance principle with the rate of convergence. As
in the classical central limit theorem, finite two moments are sufficient
to prove weak convergence but more finite moments are necessary if
we are interested in the rate of convergence. Second, finite $4+\delta$
moment of disturbances and Lipshitz continuity properties are used
to make sure that $\hat{\Sigma}_{i}$ converges in probability to
$\Sigma_{i}$ uniformly over $i=1,...,n$ for a particular estimator
$\hat{\Sigma}_{i}$ of $\Sigma_{i}$ described in section \ref{sub:Estimating sigma}
at an appropriate rate. Finally, I assume that the variance of each
component of disturbances is bounded away from zero for simplicity
of the presentation. Since I use a studentization of kernel estimators,
without this assumption, it would be necessary to truncate the variance
of the kernel estimators from below with truncation level slowly converging
to zero. That would complicate the derivation of the main results
without changing main ideas.

Before stating assumption \ref{ass:Regression-function}, let me give
formal definitions of Holder smoothness class $\mathcal{F}(\tau,L)$
and its subsets $\mathcal{F}_{\varsigma}(\tau,L)$. For $d$-tuple
of nonnegative integers $\alpha=(\alpha_{1},...,\alpha_{d})$ with
$|\alpha|=\alpha_{1}+...+\alpha_{d}$, function $g:\,\mathbb{R}^{d}\rightarrow\mathbb{R}$,
and $x=(x_{1},...,x_{d})\in\mathbb{R}^{d}$, denote 
\[
D^{\alpha}g(x)=\frac{\partial^{|\alpha|}g}{\partial x_{1}^{\alpha_{1}}...\partial x_{d}^{\alpha_{d}}}(x)
\]
whenever it exists. For $\tau>0$, it is said that the function $g:\,\mathbb{R}^{d}\rightarrow\mathbb{R}$
belongs to the class $\mathcal{F}(\tau,L)$ if it has continuous partial
derivatives upto order $[\tau]$ and for any $\alpha=(\alpha_{1},...,\alpha_{d})$
such that $|\alpha|=[\tau]$ and $x,y\in\mathbb{R}^{d}$,
\[
|D^{\alpha}g(x)-D^{\alpha}g(y)|\leq\Vert x-y\Vert^{\tau-[\tau]}
\]
Here $[\tau]$ denotes the largest integer strictly smaller than $\tau$.
For any $g\in\mathcal{F}(\tau,L)$, $x=(x_{1},...,x_{d})\in\mathbb{R}^{d}$,
and $l=(l_{1},...,l_{d})\in\mathbb{R}^{d}$ satisfying $\sum_{m=1}^{d}l_{m}^{2}=1$,
let $g^{(k,l)}(x)$ denote $k$-th derivative of function $f$ in
direction $l$ at point $x$ whenever it exists. For $\varsigma=1,...,[\tau]$,
let $\mathcal{F}_{\varsigma}(\tau,L)$ denote the class of all elements
of $\mathcal{F}(\tau,L)$ such that for any $g\in\mathcal{F}_{\varsigma}(\tau,L)$
and $l=(l_{1},...,l_{d})\in\mathbb{R}^{d}$ satisfying $\sum_{m=1}^{d}l_{m}^{2}=1$,
$f^{(k,l)}(x)=0$ for all $k=1,...,\varsigma$ whenever $f^{(1,l)}(x)=0$,
and there exist $x=(x_{1},...,x_{d})\in\mathbb{R}^{d}$ and $l=(l_{1},...,l_{d})\in\mathbb{R}^{d}$
satisfying $\sum_{m=1}^{d}l_{m}^{2}=1$ such that $f^{(\varsigma+1,l)}(x)\neq0$
and $f^{(1,l)}(x)=0$. If $\tau\leq1$, I set $\varsigma=0$ and $\mathcal{F}_{\varsigma}(\tau,L)=\mathcal{F}(\tau,L)$.
\begin{assumption}
\label{ass:Regression-function}(i) For some $\tau\geq1/4$, $L>0$,
and $\varsigma=1,...,[\tau]$, regression functions $f_{m}(\cdot)=f_{m,n}(\cdot)$
belong to the class $\mathcal{F}_{\varsigma}(\tau,L)$ for all $m=1,...,p$.
(ii) If $\varsigma<[\tau]$, then for any $x\in\mathbb{R}^{d}$ and
all $\alpha=(\alpha_{1},...,\alpha_{d})$ such that $|\alpha|=\varsigma+1$,
$|D^{\alpha}f_{m}(x)|\leq C$ for some constant $C>0$ and all $m=1,...,p$.
\end{assumption}
For simplicity of notation, I assume that all components of $f$ have
the same smoothness properties. This assumption is used in the derivation
of the power properties of the test. The restriction $\tau\geq1/4$
is also needed to make sure that $\hat{\Sigma}_{i}$ converges in
probability to $\Sigma_{i}$ uniformly over $i=1,...,n$ at an appropriate
rate. I allow regression functions to depend on $n$ to perform a
local power analysis.
\begin{assumption}
\label{ass:bandwidth values}Set of bandwidth values has the following
form: $H=H_{n}=\{h=h_{\max}a^{k}:\, h\geq h_{\min},k=0,1,2,...\}$
where $a\in(0,1)$, $h_{\max}=\bar{C}$ and $h_{\min}=h_{\min,n}\rightarrow0$
as $n\rightarrow\infty$ such that $|H_{n}|\leq C\log n$ for some
constant $C>0$.
\end{assumption}
According to this assumption, maximal bandwidth value, $h_{\max}$,
is independent of $n$. Its value is chosen to match the radius $\bar{C}$
of the support of design points. It is intented to detect deviations
from the null hypothesis in the form of flat alternatives. Minimal
bandwidth value, $h_{\min}$, converges to zero as the sample size
increases in such a way that the number of bandwidth values in the
set $H_{n}$ is growing at a logarithmic rate or slower. This assumption
will be satisfied if $h_{\min}$ converges to zero at a polynomial
rate. Minimal bandwidth value is intended to detect deviations from
the null hypothesis in the form of alternatives with peaks.
\begin{assumption}
\label{ass:variance estimator}Estimators $\hat{\Sigma}_{i}$ of $\Sigma_{i}$
satisfy $\max_{i=1,...,n}\Vert\hat{\Sigma}_{i}-\Sigma_{i}\Vert_{o}=o_{p}(n^{-\kappa})$
with $\kappa=1/(2+d)-\phi$ for arbitrarily small $\phi>0$ where
$\Vert\cdot\Vert_{o}$ denotes the spectral norm on the space of $p\times p$-dimensional
symmetric matrices corresponding to the Euclidean norm on $\mathbb{R}^{p}$.
\end{assumption}
As follows from \citet{Muller1987}, under assumptions \ref{ass:Disturbances}
and \ref{ass:Regression-function}, assumption \ref{ass:variance estimator}
is satisfied for the estimators $\hat{\Sigma}_{i}$ of $\Sigma_{i}$
described in section \ref{sub:Estimating sigma}. In practice, due
to the course of dimensionality, it might be useful to use some parametric
or semi-parametric estimators of $\Sigma_{i}$ instead of the estimator
described in section \ref{sub:Estimating sigma}. For example, if
we assume that $\Sigma_{i}=\Sigma_{j}$ for all $i,j=1,...,n$, then
the estimator of \citet{Rice1984} (or its multivariate generalization)
is $1/\sqrt{n}$-consistent. In this case, assumption \ref{ass:variance estimator}
will be satisfied with $\kappa=1/2-\phi$ for arbitrarily small $\phi>0$.
\begin{assumption}
\label{ass:The-kernel}(i) The kernel $K$ is positive and supported
on $\{x\in\mathbb{R}^{d}:\,\Vert x\Vert\leq1\}$. (ii) For some constant
$0<C<1$, $K(x)\leq1$ for all $x\in\mathbb{R}^{d}$ and $K(x)\geq C$
for all $\Vert x\Vert\leq1/2$.
\end{assumption}
I assume that the kernel function is positive on its support. Many
kernels satisfy this assumption. For example, one can use rectangular,
triangular, parabolic, or biweight kernels. See \citet{Tsybakov_book}
for the definitions. On the other hand, the requirement that the kernel
is positive on its support excludes higher-order kernels, which are
necessary to achieve minimax optimal testing rate over large classes
of smooth alternatives. I require positive kernels because of their
negativity-invariance property, which means that any kernel smoother
with a positive kernel maps the space of negative functions into itself.
This property is essential for obtaining a test with the correct asymptotic
size when smoothness properties of moment functions are unknown. With
higher-order kernels, one has to assume undersmoothing so that the
bias of the estimator is asymptotically negligible in comparison with
its standard deviation. Otherwise, large values of $\hat{T}$ might
be caused by large values of the bias term relative to the standard
deviation of the estimator even though all components of $f(X)$ are
negative. However, for undersmoothing, one has to know the smoothness
properties of $f(X)$. In constrast, with positive kernels, the set
of bandwidth values can be chosen without reference to these smoothness
properties. In particular, the largest bandwidth value can be chosen
to be independent of the sample size $n$. Nevertheless, the test
developed in this paper will be rate optimal in the minimax sense
against class $\mathcal{F}_{[\tau]}(\tau,L)$ when $\tau>d$.
\begin{assumption}
\label{ass: test function}(i) For some constant $C>0$, $\beta=\beta_{n}\leq C$.
(ii) $(\log n)^{4}/(\beta_{n}^{10}h_{\min}^{3d}n)\rightarrow0$ as
$n\rightarrow\infty$.
\end{assumption}
Assumption \ref{ass: test function} establishes the trade-off between
choosing small value of $\beta$ and small value of $h_{\min}$. It
is a key condition used to establish an invariance principle that
shows that asymptotic distribution of $\hat{T}$ depends on the distribution
of disturbances $\{\varepsilon_{i}:\, i=1,...,n\}$ only through their
covariances $\{\Sigma_{i}:\, i=1,...,n\}$. Under somewhat stronger
conditions, corollary \ref{Corr: deterministic version} shows that
I can set $\beta=0$, which corresponds to the determinstic version
of the test. Note that from assumption \ref{ass: test function}(ii),
it follows that $h_{\min}$ converges to zero at a polynomial rate
which is consistent with assumption \ref{ass:bandwidth values}.
\begin{assumption}
\label{ass:Choice function}(i) For every $h\in H_{n}$, set of test
points $I_{h}=I_{h,n}$ is such that $\Vert X_{i}-X_{j}\Vert>2h$
for all $i,j\in I_{h,n}$ with $i\neq j$ and for each $i=1,...,n$,
there exists an element $j(i)\in I_{h,n}$ such that $\Vert X_{i}-X_{j(i)}\Vert\leq2h$.
(ii) $S=S_{n}=\{(i,m,h):\, h\in H_{n},i\in I_{h,n},m=1,...,p\}$.
\end{assumption}
Denote the class of models satisfying assumptions \ref{ass:Disturbances}
and \ref{ass:Regression-function} for some fixed values of all constants
by $\mathcal{G}$. Each element $w\in\mathcal{G}$ consists of a pair
$(f^{w},\varepsilon^{w})$, where $f^{w}$ denotes the regression
function and $\varepsilon^{w}$ denotes all the information about
the distribution of disturbances in model $w$. Denote the subset
of models satisfying $f\leq0$ almost surely by $\mathcal{G}_{0}$.

\subsection{\label{sub:Size}Size Properties of the Test}

Analysis of size properties of the test is complicated because the
asymptotic distribution of the test statistic is unknown. Instead,
I use a finite sample approach based on the Lindeberg method. For
each sample size $n$, this method gives an upper error bound on approximating
the expectation of smooth functionals of the test statistic by its
expectation calculated assuming Gaussian noise $\{\varepsilon_{i}\}_{i=1}^{n}$.
I also derive a simple lower bound on the growth rate of the pdf of
the test statistic to show that the expectation of smooth functionals
can be used to approximate the expectation of indicator functions.
Combining these results leads to the approximation of the cdf of the
test statistic by its cdf calculated assuming Gaussian disturbances
with an explicit error bound. This allows me to derive certain conditions
which insure that the error converges to zero as the sample size $n$
increases, which is a key step in establishing the bootstrap validity.

The first theorem states that the test has correct asymptotic size
uniformly over the class of models $\mathcal{G}_{0}$ both for plug-in
and RMS test functions. In addition, the test with the plug-in test
function is nonconservative as the size of the test converges to the
required level $\alpha$ uniformly over the class of models $\mathcal{G}_{0}$
with $f^{w}\equiv0_{p}$. When I set $\gamma=\gamma_{n}\rightarrow0$,
the same holds for the test with the RMS test function.
\begin{thm}
\label{thm: size}Let assumptions 1-8 hold. Then for $P=PIA$ or $RMS$,
\[
\inf_{w\in\mathcal{G}_{0}}E_{w}[g_{1-\alpha}^{P}(\hat{T})]\geq1-\alpha+o(1)
\]
In addition, 
\[
\sup_{w\in\mathcal{G}_{0},f^{w}\equiv0_{p}}E_{w}[g_{1-\alpha}^{PIA}(\hat{T})]=1-\alpha+o(1)
\]
 and if $\gamma_{n}\rightarrow0$, then 
\[
\sup_{w\in\mathcal{G}_{0},f^{w}=0_{p}}E[g_{1-\alpha}^{RMS}(\hat{T})]=1-\alpha+o(1)
\]
 as well.
\end{thm}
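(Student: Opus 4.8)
The plan is to reduce the statement about the feasible statistic $\hat{T}$ (with estimated variances $\hat{V}_s$ and bootstrap critical values built from $\hat{\Sigma}_i$) to a statement about an idealized Gaussian statistic, and then to control the size error term by term. First I would replace $\hat{V}_s$ by $V_s$ everywhere: Assumption \ref{ass:variance estimator} gives $\max_i\|\hat\Sigma_i-\Sigma_i\|_o=o_p(n^{-\kappa})$, and since $V_{i,m,h}^2$ is a convex combination of the $\Sigma_{j,mm}$'s with weights summing to a quantity of order $(nh^d)^{-1}$, a routine bound shows $\max_s|\hat V_s/V_s-1|=o_p(1)$ uniformly over $s\in S$, hence $|\hat T-T|=o_p(1)\cdot\sqrt{\log|S|}$ is controlled; because $|S_n|\le C\log n\cdot n$ grows only polynomially and the pdf of $T$ grows at least at the rate established for the Lindeberg bound (lemma on the lower bound on the growth of the pdf, referenced in the excerpt), this perturbation changes $E_w[g(\hat T)]$ by $o(1)$ uniformly over $w\in\mathcal{G}_0$. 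The same argument handles the bootstrap statistics $T^{PIA}$ and $T^{RMS}$, so the simulated thresholds $c^{PIA}_{1-\alpha}$, $c^{RMS}_{1-\alpha+2\gamma}$ differ from their population (known-$\Sigma$) analogues by $o_p(1)$.

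Next I would invoke the invariance principle. By lemmas \ref{variance} and \ref{lemma: aplication of Chatterjee} (Chatterjee/Lindeberg method, Theorem \ref{thm:Chatterjee}), for any thrice-differentiable $g_0$ the quantity $E_w[g(T)]$ computed under the true disturbances differs from the same expectation computed under Gaussian disturbances $N(0_p,\Sigma_i)$ by at most $\Delta$, the explicit bound displayed before equation (\ref{eq:guarantee}); Assumption \ref{ass: test function}(ii) — the $(\log n)^4/(\beta_n^{10}h_{\min}^{3d}n)\to0$ condition — is exactly what forces $\Delta\to0$ uniformly over $\mathcal{G}_0$ (the moment quantity $F$ is bounded by Assumption \ref{ass:Disturbances}(ii), and $b$ is of order $(nh_{\min}^d)^{-1/2}$ by Assumptions \ref{ass:Design-points}(iii) and \ref{ass:Disturbances}(iv)). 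So it suffices to prove all three claims for the Gaussian statistic $T^G=\max_{s\in S}\xi_s$ where $(\xi_s)$ is the centered Gaussian vector with the correct covariance, and the bootstrap draws $T^{PIA}$, $T^{RMS}$ share this law up to the variance-estimation error already absorbed above.

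For the first (size-control) claim: under $f^w\le 0$ a.s., the negativity-invariance of positive kernels (Assumption \ref{ass:The-kernel}) gives $\hat f_{i,m,h}=\sum_j w_h(X_i,X_j)f_m(X_j)+\text{(noise)}\le 0+\text{(noise)}$, so pointwise $\hat f_s/V_s$ is stochastically dominated by the mean-zero Gaussian coordinate; taking maxima, $T$ is stochastically dominated by $T^G$ with $f\equiv 0$. Since $g$ is nonincreasing, $E_w[g(T)]\ge E[g(T^G_{f\equiv0})]$, and by construction of $c^{PIA}_{1-\alpha}$ as the $(1-\alpha)$-quantile of (the bootstrap analogue of) $T^G_{f\equiv0}$ — together with Lemma \ref{lemma: equivalence of tests} identifying $E[g(\hat T)]$ with $P\{\hat T\le t_{1-\alpha}\}$ — this is $\ge 1-\alpha+o(1)$; convergence of $B\to\infty$ Monte Carlo quantiles to the population quantile is standard. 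For the RMS function one argues that dropping the indices in $S\setminus S^{RMS}$ only lowers the simulated maximum, so $c^{RMS}_{1-\alpha+2\gamma}\le c^{PIA}_{1-\alpha+2\gamma}+o_p(1)$, while the truncation event $\{s\notin S^{RMS}\}$ — built from the $(1-\gamma)$-quantile $c^{PIA}_{1-\gamma}$ — captures, with probability $\ge 1-2\gamma-o(1)$, all indices where $f_m(X_i)$ is nonnegative (here one uses that when $f_m(X_i)<0$ and $h$ is small the studentized term diverges to $-\infty$ at a polynomial rate, whereas $|\hat T|=O_p(\sqrt{\log n})$ when $f\equiv 0$); the $2\gamma$ surcharge then compensates exactly, giving $\ge 1-\alpha+o(1)$. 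For the second and third (nonconservativeness) claims, when $f^w\equiv 0_p$ the stochastic domination above is an equality in distribution, so $T\overset{d}{=}T^G_{f\equiv0}$ up to the $o(1)$ invariance-principle error, whence $E_w[g(\hat T)]=1-\alpha+o(1)$ for PIA; for RMS with $\gamma_n\to0$, $S^{RMS}=S$ with probability $\to 1$ (no index gets truncated since every $\hat f_s/\hat V_s=O_p(\sqrt{\log n})$ while the threshold $-2(c^{PIA}_{1-\gamma_n}+\beta)\to-\infty$), and the $2\gamma_n\to0$ surcharge vanishes, so $c^{RMS}_{1-\alpha+2\gamma_n}\to$ the $(1-\alpha)$-quantile and the size converges to $\alpha$.

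The main obstacle I expect is making the invariance-principle step (the $\Delta\to0$ bound) genuinely uniform over the nonparametric class $\mathcal{G}_0$ while simultaneously the asymptotic law of $T^G$ is itself unknown and possibly non-convergent: one cannot simply pass to a limiting distribution, so every estimate — the variance-perturbation bound, the lower bound on the growth rate of the pdf of $T$ (needed to convert smooth-functional approximation into cdf approximation), and the Monte Carlo quantile consistency — must be quantitative and hold with constants depending only on the fixed constants defining $\mathcal{G}$. Getting the pdf lower bound right, using the conditional independence of $\{\hat f_{i,m,h}\}_{i\in I_h}$ guaranteed by Assumption \ref{ass:Choice function} and the compactly supported kernel, is the technically delicate piece on which the whole size argument rests.
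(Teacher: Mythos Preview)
Your outline for the PIA part matches the paper's approach closely: the reduction $\hat V_s\to V_s$ via Assumption~\ref{ass:variance estimator}, the Lindeberg/Chatterjee invariance principle (Theorem~\ref{thm:Chatterjee}, Lemma~\ref{lemma: aplication of Chatterjee}) to pass to Gaussian noise, stochastic domination by the $f\equiv 0$ case via positivity of the kernel, and the anticoncentration bound (Lemma~\ref{anticoncentration}) to convert smooth-functional errors into quantile/cdf errors are exactly the ingredients the paper uses. One technical wrinkle you do not mention is that the paper splits into the regimes $\beta_n>\delta_n$ and $\beta_n\le\delta_n$ with $\delta_n=(\log n/n^{\kappa})^{1/2}$: when $\beta_n$ is small the derivative $\|g_0'\|_\infty/\beta_n$ blows up, so the ``$o_p(1)\cdot\sqrt{\log|S|}$ perturbation changes $E[g(\hat T)]$ by $o(1)$'' step fails as written, and one must instead sandwich $g$ between indicators and use the anticoncentration lemma directly. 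This is repairable, but your proposal glosses over it.

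The genuine gap is in the RMS argument. Your inequality $c^{RMS}_{1-\alpha+2\gamma}\le c^{PIA}_{1-\alpha+2\gamma}$ points the wrong way for size control: you need the RMS critical value to be \emph{large enough}, not small enough, since $\hat T$ is still a maximum over all of $S_n$. Your phrase ``captures all indices where $f_m(X_i)$ is nonnegative'' is vacuous under $H_0$ (where $f\le 0$ everywhere) and in any case $S_n^{RMS}$ is a \emph{random} set built from $\hat f_s$, so you cannot compare it directly to the deterministic binding set. The paper's key device, which is absent from your sketch, is to introduce a \emph{deterministic} intermediate set
\[
S_n^D=\bigl\{s\in S_n:\ f_s/V_s>-(c^{PIA,0}_{1-\gamma_n-\psi_n}+\beta_n)\bigr\}
\]
defined in terms of the true (unobserved) smoothed signal-to-noise ratio. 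Two separate lemmas then do the work: (i) $P\{\max_{s\in S_n\setminus S_n^D}\hat f_s/\hat V_s>0\}\le\gamma_n+o(1)$, so the test statistic $\hat T$ is effectively a max over $S_n^D$ only; and (ii) $P\{S_n^D\subset S_n^{RMS}\}\ge 1-\gamma_n+o(1)$, so $c^{RMS}_{1-\alpha+2\gamma_n}\ge c^D_{1-\alpha+2\gamma_n}$ with high probability. Since $S_n^D$ is nonstochastic, one can then rerun the entire PIA argument with $S_n^D$ in place of $S_n$, and the two $\gamma_n$ losses from (i) and (ii) are exactly absorbed by the $2\gamma_n$ surcharge. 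Without this decoupling through $S_n^D$, your argument does not close: the random set $S_n^{RMS}$ is correlated with $\hat T$, and neither the invariance principle nor the stochastic-domination step applies conditionally on it.

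For the nonconservativeness of RMS when $f\equiv 0_p$, your heuristic ``the threshold $-2(c^{PIA}_{1-\gamma_n}+\beta)\to-\infty$'' is not what the paper uses and is not obviously correct at the needed rate (one only has $c^{PIA,0}_{1-\gamma_n}\lesssim\sqrt{\log n}/\gamma_n$). The paper instead observes that $\max_s(-\varepsilon_s/V_s)$ has, after the Gaussian coupling, the \emph{same} distribution as $\max_s(\varepsilon_s/V_s)$, so $P\{\min_s \varepsilon_s/V_s>-(c^{PIA,0}_{1-\gamma_n-\psi_n}+\beta_n)\}\ge 1-\gamma_n+o(1)$ follows directly from the definition of the quantile, without any rate comparison.
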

Proofs of all results are presented in the Appendix. From the proof
of theorem \ref{thm: size}, I also have
\begin{cor}
\label{Corr: deterministic version}If instead of \ref{ass: test function}(ii)
we assume $(\log n)^{19}/(h_{\min}^{3d}n)\rightarrow0$, then theorem
1 holds with $\beta=\beta_{n}=0$.
\end{cor}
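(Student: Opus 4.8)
The plan is to re-run the proof of Theorem \ref{thm: size} essentially verbatim, treating the randomization width as a vanishing analysis device rather than a fixed tuning parameter. Fix $P\in\{PIA,RMS\}$, let $t=t_{1-\alpha}^{P}$ be the deterministic ($\beta=0$) critical value, i.e. the $(1-\alpha)$-quantile of $T^{P}$, and introduce an auxiliary sequence $\beta_n\to0$ to be chosen at the end. With the explicit $g_0$ from Section \ref{sub:Remarks}, set $\psi_n^{-}(x)=g_0\bigl((x-t+\beta_n)/\beta_n\bigr)$ and $\psi_n^{+}(x)=g_0\bigl((x-t)/\beta_n\bigr)$, so that $\psi_n^{-}\le\mathbf 1\{\,\cdot\le t\,\}\le\psi_n^{+}$ pointwise, both $\psi_n^{\pm}$ are thrice differentiable with $\|(\psi_n^{\pm})^{(j)}\|_\infty\le\beta_n^{-j}\|g_0^{(j)}\|_\infty$, and $E_w[\psi_n^{+}(\hat{T})]-E_w[\psi_n^{-}(\hat{T})]\le P_w\{t-\beta_n\le\hat{T}\le t+\beta_n\}$. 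It therefore suffices to prove $E_w[\psi_n^{\pm}(\hat{T})]\ge1-\alpha+o(1)$ uniformly over $\mathcal{G}_0$, and $=1-\alpha+o(1)$ when $f^w\equiv0_p$ --- which is precisely the content of Theorem \ref{thm: size} but with the \emph{varying} test function $\psi_n^{\pm}$ in place of a fixed $g$.

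The proof of Theorem \ref{thm: size} now goes through with $g$ replaced by $\psi_n^{\pm}$; the only change is that the error terms carry $\beta_n$-dependence. Lemma \ref{lemma: aplication of Chatterjee} and Theorem \ref{thm:Chatterjee}, applied to $\psi_n^{\pm}$, replace $\hat{T}$ by its Gaussian proxy $\hat{T}^{G}$ up to an error $\Delta_n\lesssim p\,b_n\,n^{1/3}(\log|S_n|)^{2/3}F_n\,\beta_n^{-5/3}$, where under Assumptions \ref{ass:Design-points}, \ref{ass:Disturbances}, \ref{ass:bandwidth values} and \ref{ass:Choice function} one has $b_n\asymp(n h_{\min}^{d})^{-1/2}$, $F_n=O(1)$, $\log|S_n|\lesssim\log n$, and the replacement of $\Sigma_i$ by $\hat{\Sigma}_i$ costs a further $o_p(1)$ by Assumption \ref{ass:variance estimator}, all uniformly over $\mathcal{G}_0$. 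For $w\in\mathcal{G}_0$, $f^w\le0$ a.s., so $\hat{T}$ is stochastically dominated by its $f\equiv0_p$ counterpart, whose Gaussian proxy has $(1-\alpha)$-quantile equal to $t$ by construction of the bootstrap (Lemmas \ref{variance} and \ref{lemma: aplication of Chatterjee}); together with the control on the concentration of $\hat{T}^{G}$ near $t$ already supplied by the proof of Theorem \ref{thm: size} (the bound on the growth rate of the pdf of the test statistic), which gives $P_w^{G}\{t-\beta_n\le\hat{T}^{G}\le t+\beta_n\}\lesssim\beta_n(\log n)^{a}$ for a fixed $a>0$, this yields
\[
P_w\{\hat{T}\le t\}\ \ge\ E_w[\psi_n^{-}(\hat{T})]\ \ge\ 1-\alpha-C\Delta_n-C\beta_n(\log n)^{a}+o(1)
\]
uniformly over $\mathcal{G}_0$, with the upper sandwich $\psi_n^{+}$ giving the matching reverse inequality when $f^w\equiv0_p$. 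For the RMS test function the pretested set $S^{RMS}$ is handled exactly as in Theorem \ref{thm: size}: the dropped coordinates satisfy $\hat{f}_s/\hat{V}_s\to-\infty$ at a polynomial rate and are asymptotically negligible, while the correction $2\gamma$ (with $\gamma_n\to0$ for the nonconservativeness claim) and the threshold $-2(c_{1-\gamma}^{PIA}+\beta_n)$ depend continuously on $\beta_n$, so letting $\beta_n\to0$ introduces no new difficulty.

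Finally, $\beta_n$ must be chosen so that both $\beta_n\to0$ (hence $\beta_n(\log n)^{a}\to0$) and $\Delta_n\to0$. Since $\Delta_n\lesssim(\log n)^{2/3}n^{-1/6}h_{\min}^{-d/2}\beta_n^{-5/3}$ up to the polylogarithmic factors hidden in $|S_n|$ and in the Lindeberg constant, and up to the requirement that $\beta_n$ also dominate the variance-estimation rate $n^{-\kappa}$ with $\kappa=1/(2+d)-\phi$ to a suitable power, this forces $\beta_n^{10}$ to dominate a quantity of order $(\log n)^{c}/(n h_{\min}^{3d})$ while still vanishing; such a $\beta_n$ exists precisely when $(\log n)^{c}/(n h_{\min}^{3d})\to0$, and propagating all the logarithmic factors produces $c=19$, which is exactly the condition replacing Assumption \ref{ass: test function}(ii). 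The main obstacle --- and essentially the whole content of the corollary --- is this balancing act: the Lindeberg error blows up like $\beta_n^{-5/3}$ as the smoothing window closes, whereas the gain from closing it is only linear in $\beta_n$, so the window must close slowly and one must check that the slowest admissible rate still beats every accumulated power of $\log n$; the randomized test of Theorem \ref{thm: size} needs no such slack since there $\beta_n$ may be held bounded away from zero.
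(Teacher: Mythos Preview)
Your approach is correct and essentially the same as the paper's: introduce an auxiliary smoothing width $\beta_n\to0$, sandwich the indicator $\mathbf{1}\{\hat T\le c_{1-\alpha}^{P}\}$ between smooth test functions of width $\beta_n$, control the Lindeberg error via Theorem~\ref{thm:Chatterjee} (giving the $\beta_n^{-5/3}$ blow-up) and the anticoncentration cost via Lemma~\ref{anticoncentration} (giving the $\beta_n(\log n)^{3/2}$ term), and balance. The arithmetic $4+10\cdot(3/2)=19$ is exactly how the exponent arises in both arguments.

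The only difference is packaging: the paper observes that the auxiliary $\varrho_n$ (your $\beta_n$) can be chosen to satisfy Assumption~\ref{ass: test function} itself, so Theorem~\ref{thm: size} applies \emph{verbatim} to the randomized test with width $\varrho_n$, and then a single sandwich step plus one application of Lemma~\ref{anticoncentration} (to shift the quantile by $\varrho_n$, costing $C\varrho_n(\log n)^{3/2}$ in level) bridges to the deterministic test. You instead re-run the internals of Theorem~\ref{thm: size} with the $\beta_n$-dependent $\psi_n^{\pm}$. Both are fine; the paper's route is a bit shorter because it treats Theorem~\ref{thm: size} as a black box. One small imprecision in your write-up: the Gaussian proxy does not have $(1-\alpha)$-quantile \emph{equal} to $t$, since $t$ is built from $\hat\Sigma_i$ while the proxy uses $\Sigma_i$; you need Lemma~\ref{lemma: quantile approximation} (not just Lemma~\ref{variance}) to absorb that gap, as the paper does explicitly.
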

The case $\beta=0$ corresponds to the deterministic version of the
test, which rejects the null if and only if $\hat{T}>c_{1-\alpha}^{P}$
for $P=PIA$ or $RMS$. However, I can guarantee that this test maintains
the required size only if $h_{\min}$ converges to zero very slowly
since $(\log n)^{19}$ is a very large number for reasonable sample
sizes.

\subsection{\label{sub:Consistency-Fixed}Consistency Against a Fixed Alternative}

Let me introduce a distance between model $w\in\mathcal{G}$ and the
null hypothesis:
\[
\rho(w,H_{0})=\sup_{i=1,...,\infty;\, m=1,...,p}[f_{m}^{w}(X_{i})]_{+}
\]
For any alternative outside of the set $\Theta_{I}$, $\rho(w,H_{0})>0$.
In this section, I state the result that the test is consistent against
any fixed alternative $w$ with $\rho(w,H_{0})>0$ satisfying assumptions
\ref{ass:Design-points}-\ref{ass:Choice function}. Moreover, I show
that the test is consistent uniformly against alternatives whose distance
from the null hypothesis is bounded away from zero. For $\rho>0$,
let $\mathcal{G}_{\rho}$ denote the subset of all elements of $\mathcal{G}$
such that $\rho(w,H_{0})\geq\rho$ for all $w\in\mathcal{G}_{\rho}$.
Then
\begin{thm}
\label{Theorem: fixed alternatives}Let assumptions 1-8 hold. Then
for $P=PIA$ or $RMS$, 
\[
\sup_{w\in\mathcal{G}_{\rho}}E_{w}[g_{1-\alpha}^{P}(\hat{T})]\rightarrow0
\]
 as $n\rightarrow\infty$.
\end{thm}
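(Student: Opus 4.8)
The plan is to show that under any $w\in\mathcal{G}_\rho$ the statistic $\hat T$ diverges at the rate $\sqrt n$ while the critical value $c_{1-\alpha}^{P}$ is only of order $\sqrt{\log n}$, so that $\hat T>c_{1-\alpha}^{P}+\beta_n$ with probability tending to one, uniformly over $\mathcal{G}_\rho$. On that event $g_{1-\alpha}^{P}(\hat T)=g_0((\hat T-c_{1-\alpha}^{P})/\beta_n)=0$ since $g_0$ vanishes on $[1,\infty)$; for the deterministic version ($\beta_n=0$) one reads $g_{1-\alpha}^{P}(\hat T)$ as $\mathbf{1}\{\hat T\le c_{1-\alpha}^{P}\}$, which is $0$ there as well. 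Hence $\sup_{w\in\mathcal{G}_\rho}E_w[g_{1-\alpha}^{P}(\hat T)]\le\sup_{w\in\mathcal{G}_\rho}P_w\{\hat T\le c_{1-\alpha}^{P}+\beta_n\}\to0$, which is the claim.

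\emph{Bounding the critical value.} Conditionally on the data, each coordinate $\sum_j w_h(X_i,X_j)\tilde Y_{j,m}/\hat V_{i,m,h}$ entering $T^{PIA}$ has variance exactly one, so $T^{PIA}$ is the maximum of $|S_n|$ standard normal variables; since $|S_n|\le p\,|H_n|\,n\le Cn\log n$ by Assumption \ref{ass:bandwidth values}, Gaussian maximal and tail inequalities yield a constant $A$ with $c_{1-\alpha}^{PIA}\le A\sqrt{\log n}$ for all large $n$, and this bound is free of the realized $\hat\Sigma_i$ because only the unit marginal variances are used. As $S^{RMS}\subseteq S_n$ forces $T^{RMS}\le T^{PIA}$ pointwise and $\gamma_n<\alpha/2$, the same bound holds for $c_{1-\alpha+2\gamma_n}^{RMS}$. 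With $\beta_n=O(1)$ from Assumption \ref{ass: test function}(i), this gives $c_{1-\alpha}^{P}+\beta_n=O(\sqrt{\log n})$ for $P\in\{PIA,RMS\}$.

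\emph{Bounding the statistic from below.} Fix $w\in\mathcal{G}_\rho$. Since $\rho(w,H_0)\ge\rho$ there are $i^{\ast}\in\mathbb{N}$ and $m^{\ast}$ with $f_{m^{\ast}}^{w}(X_{i^{\ast}})\ge\rho/2$, and by uniform continuity of $f_{m^{\ast}}^{w}$ on the bounded design support, with a modulus depending only on the fixed constants defining $\mathcal{F}_\varsigma(\tau,L)$, there is $r=r(\rho)>0$, free of $n$ and $w$, with $f_{m^{\ast}}^{w}\ge\rho/4$ on $\{\,\Vert x-X_{i^{\ast}}\Vert\le r\,\}$. For large $n$ let $h^{\ast}$ be the largest element of $H_n$ with $h^{\ast}\le r/5$; it exists because $h_{\min,n}\to0$, and it is bounded below by a positive constant because $H_n$ decreases geometrically from the fixed $h_{\max}$. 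Assumption \ref{ass:Design-points}(iii), applied at $i^{\ast}\in\mathbb{N}$ with $h=h_{\min,n}$ (and $nh_{\min,n}^{d}\to\infty$ follows from Assumption \ref{ass: test function}(ii)), gives for large $n$ a design point $X_{i_0}$, $i_0\le n$, with $\Vert X_{i_0}-X_{i^{\ast}}\Vert\le h_{\min,n}\le h^{\ast}$; its associated test point $X_{i^{\ast\ast}}\in I_{h^{\ast},n}$ satisfies $\Vert X_{i^{\ast\ast}}-X_{i_0}\Vert\le2h^{\ast}$, so $\Vert X_{i^{\ast\ast}}-X_{i^{\ast}}\Vert\le3h^{\ast}\le3r/5$. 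Because $K$ is supported on the unit ball (Assumption \ref{ass:The-kernel}), $w_{h^{\ast}}(X_{i^{\ast\ast}},\cdot)$ is supported on $\{j:\Vert X_j-X_{i^{\ast\ast}}\Vert\le h^{\ast}\}$, and each such $X_j$ has $\Vert X_j-X_{i^{\ast}}\Vert\le4r/5<r$, hence $f_{m^{\ast}}^{w}(X_j)\ge\rho/4$; writing $s^{\ast}=(i^{\ast\ast},m^{\ast},h^{\ast})\in S_n$ and using $\sum_j w_{h^{\ast}}(X_{i^{\ast\ast}},X_j)=1$, this yields $E[\hat f_{s^{\ast}}\mid\{X_i\}]\ge\rho/4$. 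Assumptions \ref{ass:Design-points}(iii) and \ref{ass:The-kernel} give $\max_j w_{h^{\ast}}(X_{i^{\ast\ast}},X_j)\le C/(nh^{\ast d})$, so, using that $\Sigma_{j,m^{\ast}m^{\ast}}$ is bounded (Assumption \ref{ass:Disturbances}) and $h^{\ast}$ is bounded below, $V_{s^{\ast}}^{2}=\sum_j w_{h^{\ast}}^{2}(X_{i^{\ast\ast}},X_j)\Sigma_{j,m^{\ast}m^{\ast}}\le C/n$; Assumption \ref{ass:variance estimator} then gives $\hat V_{s^{\ast}}\le C/\sqrt n$ with probability tending to one, uniformly. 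Finally $\mathrm{Var}(\hat f_{s^{\ast}}-E[\hat f_{s^{\ast}}\mid\{X_i\}]\mid\{X_i\})=V_{s^{\ast}}^{2}\le C/n$, so Chebyshev's inequality gives $\hat f_{s^{\ast}}\ge\rho/8$ with probability at least $1-C/(n\rho^{2})$, uniformly in $w$. On the intersection of these events $\hat T\ge\hat f_{s^{\ast}}/\hat V_{s^{\ast}}\ge c_1(\rho)\sqrt n$ for a constant $c_1(\rho)>0$; combined with the preceding paragraph, $\hat T>c_{1-\alpha}^{P}+\beta_n$ for all large $n$ with probability tending to one, uniformly over $\mathcal{G}_\rho$, which completes the argument.

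\emph{Main obstacle.} The delicate part is this last lower bound: one has to convert the scalar condition $\rho(w,H_0)\ge\rho$ into a fixed-radius region on which $f_{m^{\ast}}^{w}$ stays above a fixed level, uniformly over $\mathcal{G}_\rho$ and using only the Hölder constants; then pick a bandwidth $h^{\ast}\in H_n$ small enough that the kernel window fits inside this region yet bounded away from zero, so that $V_{s^{\ast}}^{2}$ is $O(1/n)$ rather than merely $o(1)$; and locate an admissible test point near $X_{i^{\ast}}$, which is where Assumption \ref{ass:Design-points}(iii) is needed in its ``for all $i\in\mathbb{N}$'' form (the witness $X_{i^{\ast}}$ need not be among the first $n$ design points), together with $nh_{\min,n}^{d}\to\infty$. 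The critical-value bound is routine, being a Gaussian maximal inequality over a set of polynomial cardinality.
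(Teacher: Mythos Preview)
Your argument is correct and follows essentially the same route as the paper: locate a fixed-radius ball on which $f_{m^\ast}^w\ge c\rho$ using the uniform H\"older modulus, pick a bandwidth in $H_n$ bounded away from zero so that $V_{s^\ast}^2=O(1/n)$, and compare the resulting $\sqrt{n}$-growth of $\hat T$ with the $\sqrt{\log n}$-size of $c_{1-\alpha}^{P}+\beta_n$.

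The one genuine difference is how you handle the noise term. The paper writes $\hat T\ge f_{s_n}/\hat V_{s_n}-\max_{s\in S_n}|\varepsilon_s/\hat V_s|$ and then invokes Lemma~\ref{Lemma: Initial conv} to get $\max_{s\in S_n}|\varepsilon_s/\hat V_s|=O_p(\sqrt{\log n})$; that lemma in turn rests on the full invariance-principle machinery (Lemma~\ref{lemma: aplication of Chatterjee}). You instead control only $\varepsilon_{s^\ast}$ at the single selected index via Chebyshev and $V_{s^\ast}^2\le C/n$, which is more elementary and entirely sufficient here. Likewise, your direct Gaussian union bound on the simulated maximum (using that each studentized coordinate is exactly $N(0,1)$ regardless of $\hat\Sigma_i$) gives a deterministic bound $c_{1-\alpha}^{PIA}\le A\sqrt{\log n}$ without going through Lemma~\ref{lemma: quantile approximation}; this is again a slight simplification over the paper's Lemma~\ref{lemma: Growth of critical value}.
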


\subsection{\label{sub:Consistency-One-Directional}Consistency Against One-Directional
Alternatives}

Let $w(0)\in\mathcal{G}$ be such that $\rho(w(0),H_{0})>0$. For
some sequence $\{a_{n}\}_{n=1}^{\infty}$ of positive numbers converging
to zero, let $f^{n}=a_{n}f^{w(0)}$ be a sequence of local alternatives.
I refer to such sequences as local one-directional alternatives. This
section establishes the consistency of the test against such alternatives
whenever $\sqrt{n/\log n}a_{n}\rightarrow\infty$.
\begin{thm}
\label{Theorem: one directional alternative}Let assumptions 1-8 hold.
Then for $P=PIA$ or $RMS$, 
\[
\sup_{w\in\mathcal{G},f^{w}=f^{n}}E_{w}[g_{1-\alpha}^{P}(\hat{T})]\rightarrow0
\]
 as $n\rightarrow\infty$ if $\sqrt{n/\log n}a_{n}\rightarrow\infty$.\end{thm}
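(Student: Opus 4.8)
Since $g_{1-\alpha}^{P}(\cdot)\in[0,1]$ and $g_{1-\alpha}^{P}(x)=0$ for every $x>c_{1-\alpha}^{P}+\beta_{n}$, where $c_{1-\alpha}^{P}$ stands for $c_{1-\alpha}^{PIA}$ if $P=PIA$ and for $c_{1-\alpha+2\gamma}^{RMS}$ if $P=RMS$, and since $E_{w}[g_{1-\alpha}^{P}(\hat{T})]=P_{w}\{\hat{T}\le t_{1-\alpha}\}$ by lemma \ref{lemma: equivalence of tests}, it is enough to prove that
\[
\sup_{w\in\mathcal{G},\,f^{w}=f^{n}}P_{w}\{\hat{T}\le c_{1-\alpha}^{P}+\beta_{n}\}\to0\qquad\text{as }n\to\infty .
\]
The plan is to combine (a) a crude upper bound $c_{1-\alpha}^{P}+\beta_{n}=O(\sqrt{\log n})$ on the critical value with (b) a lower bound on $\hat{T}$ of order $a_{n}\sqrt{n}$, which dominates $\sqrt{\log n}$ precisely when $\sqrt{n/\log n}\,a_{n}\to\infty$. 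In contrast to the proof of theorem \ref{thm: size}, no invariance principle is needed here: it suffices to control one well-chosen coordinate of the maximum defining $\hat{T}$.

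For (a) I would use that, because $\hat{V}_{i,m,h}^{2}=\sum_{j=1}^{n}w_{h}^{2}(X_{i},X_{j})\hat{\Sigma}_{j,mm}$ is exactly the conditional variance of the bootstrap summand $\sum_{j=1}^{n}w_{h}(X_{i},X_{j})\tilde{Y}_{j,m}/\hat{V}_{i,m,h}$, each such summand is exactly $N(0,1)$; hence $T^{PIA}$ and $T^{RMS}$ are maxima of at most $|S_{n}|\le Cpn\log n$ standard normal variables (assumption \ref{ass:bandwidth values}). A standard Gaussian maximal inequality then gives $c_{1-\alpha}^{P}\le C\sqrt{\log|S_{n}|}\le C\sqrt{\log n}$ for all large $n$, and $\beta_{n}\le C$ by assumption \ref{ass: test function}(i). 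This is essentially the already-quoted fact that $|\hat{T}|=O_{p}(\sqrt{\log n})$ when $f\equiv0_{p}$.

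For (b), put $\rho:=\rho(w(0),H_{0})>0$. By definition of $\rho(\cdot,H_{0})$ there are a fixed component index $m$ and a fixed design-point index $i_{0}$ with $f^{w(0)}_{m}(X_{i_{0}})>3\rho/4$, and by continuity of $f^{w(0)}_{m}$ (assumption \ref{ass:Regression-function}, $\tau\ge1/4$) a fixed $h_{0}>0$ with $f^{w(0)}_{m}(x)\ge\rho/2$ whenever $\|x-X_{i_{0}}\|\le3h_{0}$. Since $h_{\min,n}\to0$, for all large $n$ the set $H_{n}$ has a largest element $h_{1}\le h_{0}$, eventually a fixed constant; by assumption \ref{ass:Choice function}(i) pick a test point $i^{*}=j(i_{0})\in I_{h_{1},n}$ with $\|X_{i^{*}}-X_{i_{0}}\|\le2h_{1}$, so that $(i^{*},m,h_{1})\in S_{n}$ and $\hat{T}\ge\hat{f}_{i^{*},m,h_{1}}/\hat{V}_{i^{*},m,h_{1}}$ for $n\ge i_{0}$. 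Write $\hat{f}_{i^{*},m,h_{1}}=\bar{f}_{n}+N_{n}$ with $\bar{f}_{n}=a_{n}\sum_{j}w_{h_{1}}(X_{i^{*}},X_{j})f^{w(0)}_{m}(X_{j})$ and $N_{n}=\sum_{j}w_{h_{1}}(X_{i^{*}},X_{j})\varepsilon_{j,m}$. The weights are nonnegative, sum to one, and vanish unless $\|X_{j}-X_{i^{*}}\|\le h_{1}$ (the kernel is supported on the unit ball), hence $\|X_{j}-X_{i_{0}}\|\le3h_{1}\le3h_{0}$ and therefore $\bar{f}_{n}\ge a_{n}\rho/2$. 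Assumptions \ref{ass:Design-points}(iii) and \ref{ass:The-kernel} give $\max_{j}w_{h_{1}}(X_{i^{*}},X_{j})\le C/(nh_{1}^{d})$; since $\Sigma_{j,mm}$ is bounded (assumption \ref{ass:Disturbances}(ii)) and $\hat{\Sigma}_{j,mm}-\Sigma_{j,mm}=o_{p}(1)$ uniformly (assumption \ref{ass:variance estimator}), $\hat{V}_{i^{*},m,h_{1}}^{2}=\sum_{j}w_{h_{1}}^{2}(X_{i^{*}},X_{j})\hat{\Sigma}_{j,mm}\le C/(nh_{1}^{d})$ with probability tending to one, i.e.\ $\hat{V}_{i^{*},m,h_{1}}\le C/\sqrt{n}$ as $h_{1}$ is fixed. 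Finally $\mathrm{Var}(N_{n})=\sum_{j}w_{h_{1}}^{2}(X_{i^{*}},X_{j})\Sigma_{j,mm}\le C/n$ uniformly over $w$, so $N_{n}=O_{p}(n^{-1/2})$ uniformly by Chebyshev; and because $a_{n}\sqrt{n}\to\infty$ forces $a_{n}\gg n^{-1/2}$, we get $\hat{f}_{i^{*},m,h_{1}}\ge\bar{f}_{n}-|N_{n}|\ge a_{n}\rho/4$ with probability tending to one uniformly. Hence $\hat{T}\ge\hat{f}_{i^{*},m,h_{1}}/\hat{V}_{i^{*},m,h_{1}}\ge c'a_{n}\sqrt{n}$ on an event of probability tending to one uniformly over $w\in\mathcal{G}$ with $f^{w}=f^{n}$, for some constant $c'>0$.

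Combining (a) and (b): with probability tending to one uniformly over the class, $\hat{T}\ge c'a_{n}\sqrt{n}=c'\sqrt{\log n}\cdot\big(a_{n}\sqrt{n/\log n}\big)$ while $c_{1-\alpha}^{P}+\beta_{n}\le C\sqrt{\log n}$, so $\hat{T}>c_{1-\alpha}^{P}+\beta_{n}$ as soon as $a_{n}\sqrt{n/\log n}$ exceeds a fixed constant, which holds for all large $n$ under the hypothesis $\sqrt{n/\log n}\,a_{n}\to\infty$. Therefore $\sup_{w}P_{w}\{\hat{T}\le c_{1-\alpha}^{P}+\beta_{n}\}\to0$, hence $\sup_{w}E_{w}[g_{1-\alpha}^{P}(\hat{T})]\to0$, which is the assertion. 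I expect the only genuinely delicate point to be the uniformity of the $O_{p}$ remainders ($N_{n}$ and the fluctuation of $\hat{V}$) over the admissible class of disturbance laws — which is exactly why Chebyshev together with the uniformly bounded moments of assumption \ref{ass:Disturbances} is used in place of any distributional limit; once the single well-chosen coordinate $(i^{*},m,h_{1})$ carrying a signal of order $a_{n}\sqrt{n}$ is identified, the rest is bookkeeping.
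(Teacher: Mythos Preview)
Your proof is correct and follows essentially the same route as the paper: pick a fixed bandwidth $h$ and a test point where the signal exceeds $a_{n}\rho/2$, use the variance bound $\hat V_{s_n}\le C/\sqrt{n}$, and compare against the $O_{p}(\sqrt{\log n})$ critical value. The only noteworthy difference is that you control the noise at the single chosen coordinate via Chebyshev, whereas the paper invokes the uniform bound $\max_{s\in S_n}|\varepsilon_s/\hat V_s|=O_p(\sqrt{\log n})$ from lemma \ref{Lemma: Initial conv} (which in turn rests on the invariance principle); your variant is slightly more elementary but otherwise equivalent.
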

\begin{rem*}
Recall the CMI model from the first example mentioned in the introduction
where $m(X,W,\theta)=\theta\tilde{m}(X,W)$ and $E[\tilde{m}(X,W)|X]>0$
almost surely. The theorem above shows that the test developed in
this paper is consistent against sequences of alternatives $\theta_{0}=\theta_{0,n}$
whenever $\sqrt{n/\log n}\theta_{0,n}\rightarrow\infty$ in this model.
So, my test is consistent against virtually the same set of alternatives
in this model as the test of \citet{AndrewsandShi2010}.
\end{rem*}

\subsection{\label{sub:Uniform-Consistency}Uniform Consistency Against Holder
Smoothness Classes}

In this section, I present the rate of uniform consistency of the
test against the class $\mathcal{F}_{\varsigma}(\tau,L)$ under certain
additional constraints. These additional constraints are needed to
deal with some boundary effects. Let $S=\text{cl}\{X_{i}:\, i\in\mathbb{N}\}$
denote the closure of the infinite set of design points. For any $\vartheta>0$,
let $S_{\vartheta}$ be the subset of $S$ such that for any $x\in S_{\vartheta}$,
the ball with center at $x$ and radius $\vartheta$, $B_{\vartheta}(x)$,
is contained in $S$, i.e. $B_{\vartheta}(x)\subset S$. Denote $\zeta=\min(\varsigma+1,\tau)$.
When $\zeta\leq d$, set $\vartheta=\vartheta_{n}=4\sqrt{d}h_{\min}$.
When $\zeta>d$, set $\vartheta=\vartheta_{n}=4\sqrt{d}(\log n/n)^{1/(2\zeta+d)}$.
Let $\mathbb{N}_{\vartheta_{n}}=\{i\in\mathbb{N}:\, X_{i}\in S_{\vartheta_{n}}\}$.
For any $w\in\mathcal{G}$, let 
\[
\rho_{\vartheta_{n}}(w,H_{0})=\sup_{i\in\mathbb{N}_{\vartheta},\, m=1,...,p}[f_{m}^{w}(X_{i})]_{+}
\]
denote the distance between $w$ and $H_{0}$ over set $S_{\vartheta_{n}}$.
For the next theorem, I will use $\rho_{\vartheta_{n}}$-metric (instead
of $\rho$-metric) to measure the distance between alternatives and
the null hypothesis. Such restrictions are quite common in the literature.
See, for example, \citet{Dumbgen2001} and \citet{LeeandSongandWhang2011}.
Let $\mathcal{G}_{\vartheta}$ be the subset of all elements of $\mathcal{G}$
such that $\inf_{w\in\mathcal{G}_{\vartheta}}\rho_{\vartheta_{n}}(w,H_{0})\geq Ch_{\min}^{\zeta}$
for some large constant $C$ if $\zeta\leq d$ and $\inf_{w\in\mathcal{G}_{\vartheta}}\rho_{\vartheta_{n}}(w,H_{0})(n/\log n)^{\zeta/(2\zeta+d)}\rightarrow\infty$
if $\zeta>d$. Then
\begin{thm}
\label{thm: uniform power}Let assumptions 1-8 hold. For $P=PIA$
or $RMS$, if (i) $\zeta\leq d$ or (ii) $\zeta>d$ and $h_{\min}<(\log n/n)^{1/(2\zeta+d)}$
for large enough $n$, then 
\[
\sup_{w\in\mathcal{G}_{\vartheta}}E_{w}[g_{1-\alpha}^{P}(\hat{T})]\rightarrow0
\]
 as $n\rightarrow\infty$.\end{thm}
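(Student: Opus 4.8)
The plan is to show that $\hat{T}$ exceeds, with probability tending to one uniformly over $w\in\mathcal{G}_{\vartheta}$, the threshold $c_{1-\alpha}^{P}+\beta_{n}$ at which $g_{1-\alpha}^{P}$ drops to zero. Since $0\le g_{0}\le1$ and $g_{0}\equiv0$ on $[1,\infty)$, we have $g_{1-\alpha}^{P}(\hat{T})\le\mathbf{1}\{\hat{T}<c_{1-\alpha}^{P}+\beta_{n}\}$, so it suffices to drive $\sup_{w\in\mathcal{G}_{\vartheta}}P_{w}\{\hat{T}<c_{1-\alpha}^{P}+\beta_{n}\}$ to zero. First I would bound the critical value: by the lemmas giving the Gaussian invariance principle and $|\hat{T}|=O_{p}(\sqrt{\log n})$ when $f\equiv0_{p}$ (lemmas \ref{normal conv}, \ref{variance}, \ref{lemma: aplication of Chatterjee}), together with $|S_{n}|\le Cp\,|H_{n}|\,h_{\min}^{-d}=n^{O(1)}$ (hence $\log|S_{n}|\le C\log n$) and assumption \ref{ass:variance estimator} (replacing $\hat{V}$ by $V$ up to a $1+o_{p}(1)$ factor), the bootstrap statistics are maxima of at most $|S_{n}|$ approximately unit-variance Gaussians; hence $c_{1-\gamma}^{PIA}$, $c_{1-\alpha}^{PIA}$, and — since $S^{RMS}\subseteq S_{n}$ so $T_{b}^{RMS}\le T_{b}^{PIA}$ — $c_{1-\alpha+2\gamma}^{RMS}$ are all $\le C_{0}\sqrt{\log n}$ with probability $\to1$ uniformly, for some $C_{0}$ depending only on $\alpha$ and the fixed constants of assumptions 1--8. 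Since $\beta_{n}\le C$ (assumption \ref{ass: test function}(i)), it then suffices to exhibit, for each $w\in\mathcal{G}_{\vartheta}$, an index $s^{\ast}=s^{\ast}(w)\in S_{n}$ with $\hat{f}_{s^{\ast}}/\hat{V}_{s^{\ast}}\gg\sqrt{\log n}$, uniformly.

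The core is the choice of $s^{\ast}$. Write $\bar{\rho}=\rho_{\vartheta_{n}}(w,H_{0})$, and pick $h=h_{n}(w)\in H_{n}$ comparable to $h_{\min}$ when $\zeta\le d$, and comparable to $(\log n/n)^{1/(2\zeta+d)}$ when $\zeta>d$ (feasible for large $n$: the target lies in $[h_{\min},h_{\max}]$ by hypothesis (ii) and tends to zero), so that $\vartheta_{n}\asymp h$. Choose a component $m_{0}$ and an interior design point $X_{i_{0}}$, $i_{0}\in\mathbb{N}_{\vartheta_{n}}$, with $f_{m_{0}}(X_{i_{0}})>\bar{\rho}/2$. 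Using that $B_{\vartheta_{n}}(X_{i_{0}})$ lies in the closure of the design, together with the smoothness/flatness of $\mathcal{F}_{\varsigma}(\tau,L)$ — for $\tau<1$ directly from $\tau$-Hölder continuity, and for $\tau\ge1$ by passing to a local maximum $x^{\ast}$ of $f_{m_{0}}$ in a ball of radius $\asymp\vartheta_{n}$ about $X_{i_{0}}$, where all first directional derivatives vanish and hence (by definition of $\mathcal{F}_{\varsigma}(\tau,L)$ and assumption \ref{ass:Regression-function}(ii)) the directional derivatives up to order $\varsigma$ vanish — one produces $x^{\ast}$ with $f_{m_{0}}(x^{\ast})\ge\bar{\rho}/2$, with $B_{3h}(x^{\ast})$ in the closure of the design, and with $f_{m_{0}}(x)\ge f_{m_{0}}(x^{\ast})-C\Vert x-x^{\ast}\Vert^{\zeta}$ on $B_{3h}(x^{\ast})$. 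Take $i^{\ast}\in I_{h}$ with $\Vert X_{i^{\ast}}-x^{\ast}\Vert\le2h$ (assumption \ref{ass:Choice function}(i)) and set $s^{\ast}=(i^{\ast},m_{0},h)$: every $X_{j}$ entering $\hat{f}_{s^{\ast}}$ lies in $B_{3h}(x^{\ast})$, so since the kernel is positive (the $w_{h}(X_{i^{\ast}},\cdot)$ are nonnegative and sum to one) and using the $\mathcal{G}_{\vartheta}$ lower bound on $\bar{\rho}$ (with its defining constant taken large when $\zeta\le d$, and $\bar{\rho}\gg h^{\zeta}$ when $\zeta>d$), $E[\hat{f}_{s^{\ast}}]=\sum_{j}w_{h}(X_{i^{\ast}},X_{j})f_{m_{0}}(X_{j})\ge\bar{\rho}/4$. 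By assumptions \ref{ass:Design-points}(iii), \ref{ass:Disturbances}(ii),(iv) and \ref{ass:The-kernel}, $w_{h}(X_{i^{\ast}},\cdot)\asymp(nh^{d})^{-1}$ on the relevant ball and $\Sigma_{j,m_{0}m_{0}}$ is bounded above and below, so $V_{s^{\ast}}^{2}\asymp(nh^{d})^{-1}$, and assumption \ref{ass:variance estimator} gives $\hat{V}_{s^{\ast}}=V_{s^{\ast}}(1+o_{p}(1))$ uniformly; finally $\hat{f}_{s^{\ast}}-E[\hat{f}_{s^{\ast}}]=\sum_{j}w_{h}(X_{i^{\ast}},X_{j})\varepsilon_{j,m_{0}}$ has mean zero and variance $V_{s^{\ast}}^{2}$, so Chebyshev's inequality yields $|\hat{f}_{s^{\ast}}-E[\hat{f}_{s^{\ast}}]|=o_{p}(\sqrt{\log n}\,V_{s^{\ast}})$ uniformly.

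Combining, with probability $\to1$ uniformly over $\mathcal{G}_{\vartheta}$,
\[
\hat{T}\ \ge\ \frac{\hat{f}_{s^{\ast}}}{\hat{V}_{s^{\ast}}}\ \ge\ c_{1}\,\bar{\rho}\sqrt{nh^{d}}-o_{p}\big(\sqrt{\log n}\big),
\]
and it remains to check $\bar{\rho}\sqrt{nh^{d}}\gg\sqrt{\log n}$. When $\zeta\le d$, $h\asymp h_{\min}$ and $\bar{\rho}\ge Ch_{\min}^{\zeta}$ give $\bar{\rho}\sqrt{nh^{d}}\gtrsim\sqrt{nh_{\min}^{2\zeta+d}}$; since $2\zeta+d\le3d$ and $h_{\min}<1$, assumption \ref{ass: test function}(ii) (with $\beta_{n}\le C$) forces $nh_{\min}^{2\zeta+d}\ge nh_{\min}^{3d}\gg(\log n)^{4}\gg\log n$. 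When $\zeta>d$, $h\asymp(\log n/n)^{1/(2\zeta+d)}$ and $\bar{\rho}(n/\log n)^{\zeta/(2\zeta+d)}\to\infty$ give $\bar{\rho}\sqrt{nh^{d}}\asymp\bar{\rho}(n/\log n)^{\zeta/(2\zeta+d)}\sqrt{\log n}\gg\sqrt{\log n}$. In either case $\hat{T}>C_{0}\sqrt{\log n}\ge c_{1-\alpha}^{P}+\beta_{n}$ with probability $\to1$ uniformly, proving the theorem.

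The step I expect to be the main obstacle is the construction of $x^{\ast}$ and the bound $E[\hat{f}_{s^{\ast}}]\gtrsim\bar{\rho}$: turning ``$f_{m_{0}}$ is of order at least $h^{\zeta}$ at a single interior design point'' into ``the conditional mean of some studentized kernel estimate is of the same order'' is exactly where positivity of the kernel (so that a kernel average inherits a pointwise lower bound over the flat region), the flatness built into $\mathcal{F}_{\varsigma}(\tau,L)$ (so that the flat region around a maximum has width $\gtrsim\bar{\rho}^{1/\zeta}\gtrsim h$, ruling out a wrong-sign contribution from the lower-order Taylor terms of $f_{m_{0}}$), and the interiority encoded in $S_{\vartheta_{n}}$ (so that the ball feeding the estimate sits inside the design, where assumption \ref{ass:Design-points}(iii) applies and a genuine local maximum can be located, with the boundary-of-the-ball case handled by a short iteration) all enter. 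The remaining ingredients — the critical-value bound, the order-of-magnitude computation of $V_{s^{\ast}}$, and the Chebyshev control of the noise — are routine given the lemmas already developed in the Appendix.
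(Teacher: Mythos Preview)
Your overall architecture --- bound the critical value by $C_{0}\sqrt{\log n}$ via Lemmas~\ref{normal conv}, \ref{variance}, \ref{lemma: aplication of Chatterjee} (the paper packages this as Lemma~\ref{lemma: Growth of critical value}), pick a single favorable $s^{\ast}$, and show $\hat{f}_{s^{\ast}}/\hat{V}_{s^{\ast}}\gg\sqrt{\log n}$ by a mean/variance computation plus a noise bound --- is exactly what the paper does. (The paper controls the noise by $\max_{s\in S_{n}}|\varepsilon_{s}/\hat{V}_{s}|=O_{p}(\sqrt{\log n})$ via Lemma~\ref{Lemma: Initial conv} rather than Chebyshev on the single coordinate; either works.) Your bandwidth choice and the final rate calculation in both regimes are also correct and match the paper's.

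The step you yourself flag as the main obstacle is indeed where your argument has a gap. For $\tau>1$ you take $x^{\ast}$ to be a maximizer of $f_{m_{0}}$ on a closed ball of radius $\asymp\vartheta_{n}$ about $X_{i_{0}}$ and then invoke $\mathcal{F}_{\varsigma}$-flatness at $x^{\ast}$. But the maximizer may sit on the boundary of the ball, where the gradient need not vanish (think of $f_{m_{0}}$ affine with a small slope: then there is \emph{no} interior critical point anywhere). Your ``short iteration'' does not rescue this: each step may again land on the boundary, and after $O(\vartheta_{n}/h)=O(1)$ steps you can exit $B_{\vartheta_{n}}(X_{i_{0}})\subset S$ without ever producing a point where $\nabla f_{m_{0}}=0$. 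Absent such a point, the Taylor bound $f_{m_{0}}(x)\ge f_{m_{0}}(x^{\ast})-C\Vert x-x^{\ast}\Vert^{\zeta}$ does not follow from membership in $\mathcal{F}_{\varsigma}(\tau,L)$.

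The paper avoids looking for a critical point. From $X_{i(w)}$ it reads off the \emph{signs} of the partial derivatives $\partial f_{m_{0}}(X_{i(w)})/\partial x_{m}$, sets $e_{m}=\pm4h$ accordingly, and works on the axis-aligned cube in the indicated orthant. Lemma~\ref{lemma: restricted holder} then gives $f_{m_{0}}(X_{i(w)}+y)-f_{m_{0}}(X_{i(w)})\ge -C\Vert h\Vert^{\zeta}$ for all $y$ in that orthant with $|y_{m}|\le|e_{m}|$: along the segment the directional derivative either stays nonnegative, or it hits zero at some intermediate point, at which the $\mathcal{F}_{\varsigma}$ definition forces the first $\varsigma$ directional derivatives to vanish and Taylor/H\"older bounds the remaining drop by $C\Vert h\Vert^{\zeta}$. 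This produces a cube of side $\asymp h$ inside $B_{\vartheta_{n}}(X_{i(w)})$ on which $f_{m_{0}}\ge\bar{\rho}/2-Ch^{\zeta}$, and the test point is then placed inside it (first a design point via Assumption~\ref{ass:Design-points}(iii), then an $I_{h}$ point within $2h$ via Assumption~\ref{ass:Choice function}). If you replace your local-maximum construction by this gradient-orthant argument, the rest of your proof goes through unchanged. One minor correction: ``$i^{\ast}\in I_{h}$ with $\Vert X_{i^{\ast}}-x^{\ast}\Vert\le2h$'' should be $\le3h$, since Assumption~\ref{ass:Choice function}(i) covers only design points and you must first pass through one.
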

\begin{rem*}
Recall the CMI model from the second example mentioned in the introduction
where $m(X,W,\theta)=\tilde{m}(X,W)+\theta$. Assume that $X\in\mathbb{R}$
and $E[\tilde{m}(X,W)|X]=-|X|^{\nu}$ with $\nu>1$. In this model,
the identified set is $\Theta_{I}=\{\theta\in\mathbb{R}:\,\theta\leq0\}$.
The theorem above shows that the test developed in this paper is consistent
against sequences of alternatives $\theta_{0}=\theta_{0,n}$ whenever
$(n/\log n)^{\nu/(2\nu+1)}\theta_{0,n}\rightarrow\infty$. At the
same time, it follows from \citet{Armstrong1}, the test of \citet{AndrewsandShi2010}
is consistent only if $n^{\nu/(2(\nu+1))}\theta_{n,0}\rightarrow\infty$,
so their test has a slower rate of consistency than that developed
in this paper.
\end{rem*}

\subsection{\label{sub:Lower-Bound}Lower Bound on the Minimax Rate of Testing}

In this section, I give a lower bound on the minimax rate of testing.
For $S_{\vartheta}$ defined in the previous section, let $N(h,S_{\vartheta_{n}})$
be the largest $m$ such that there exists $\{x_{1},...,x_{m}\}\subset S_{\vartheta_{n}}$
with $\Vert x_{i}-x_{j}\Vert\geq h$ for all $i,j=1,...,m$ if $i\neq j$.
I will assume that $N(h,S_{\vartheta_{n}})\geq Ch^{-d}$ for all $h\in(0,1)$
and large enough $n$ for some constant $C>0$. This condition holds
almost surely under the conditions of lemma \ref{lemma for support}.
Let $\phi_{n}(Y_{1},...,Y_{n})$ denote a sequence of tests, i.e.
$\phi_{n}(Y_{1},...,Y_{n})$ equals the probability of rejecting the
null hypothesis upon observing sample $Y=(Y_{1},...,Y_{n})$. 
\begin{thm}
Let assumptions 1-8 hold. Assume that (i) $N(h,S_{\vartheta_{n}})\geq Ch^{-d}$
for all $h\in(0,1)$ and large enough $n$ for some constant $C>0$,
(ii) $\varsigma=[\tau]$, and (iii) $r_{n}(n/\log n)^{\tau/(2\tau+d)}\rightarrow0$
as $n\rightarrow\infty$ for some sequence of positive numbers $r_{n}$.
Then for any sequence of tests $\phi_{n}(Y_{1},...,Y_{n})$ with $\sup_{w\in\mathcal{G}_{0}}E_{w}[\phi_{n}(Y_{1},...,Y_{n})]\leq\alpha$,
\[
\lim\sup_{n\rightarrow\infty}\inf_{w\in\mathcal{G},\rho_{\vartheta}(w,H_{0})\geq Cr_{n}}E_{w}[\phi_{n}(Y_{1},...,Y_{n})]\leq\alpha
\]

\end{thm}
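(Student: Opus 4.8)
The plan is to prove this lower bound by the classical reduction to a Bayes problem over a finite family of ``bump'' alternatives. Fix once and for all a function $\psi:\mathbb{R}^{d}\to[0,1]$ that is supported on the closed unit ball, satisfies $\psi(0)=\max\psi=1$ and $\psi>0$ on a ball of positive radius, and lies in the flat H\"older class $\mathcal{F}_{[\tau]}(\tau,L_{\psi})$ for some $L_{\psi}>0$ -- that is, all directional derivatives up to order $[\tau]$ vanish at every critical point of $\psi$, while $\psi^{([\tau]+1,l)}(x)\neq0$ for some critical point $x$ and direction $l$. (Such a $\psi$ exists; a radial choice suffices when $[\tau]$ is odd, and a mild non-radial perturbation handles $[\tau]$ even.) Let $h_{n}\to0$ be a bandwidth to be fixed below and set $\psi_{n,z}(x)=Cr_{n}\,\psi((x-z)/h_{n})$. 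Differentiating $[\tau]$ times one gets $D^{\mu}\psi_{n,z}(x)=Cr_{n}h_{n}^{-[\tau]}(D^{\mu}\psi)((x-z)/h_{n})$ for every multi-index $\mu$ with $|\mu|=[\tau]$, so the H\"older bound defining $\mathcal{F}_{[\tau]}(\tau,L)$ is satisfied provided $Cr_{n}h_{n}^{-\tau}L_{\psi}\leq L$, i.e. provided $h_{n}\gtrsim r_{n}^{1/\tau}$ with an appropriate constant; then $\psi_{n,z}\in\mathcal{F}_{[\tau]}(\tau,L)$. Consequently Assumption~\ref{ass:Regression-function} holds for any regression function built from a single such bump with the constants that define $\mathcal{G}$ (Assumption~\ref{ass:Regression-function}(ii) is vacuous since $\varsigma=[\tau]$).

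Next I choose locations. By hypothesis (i), $N(h_{n},S_{\vartheta_{n}})\geq Ch_{n}^{-d}$; combining this with Assumption~\ref{ass:Design-points} and the space-filling property of the design, I select $N_{n}\geq C'h_{n}^{-d}$ design points $X_{i_{1}},\dots,X_{i_{N_{n}}}\in S_{\vartheta_{n}}$ that are pairwise separated by more than $2h_{n}$ (realizing a $2h_{n}$-net by design points forces $nh_{n}^{d}\to\infty$). For $k=1,\dots,N_{n}$ let $w_{k}\in\mathcal{G}$ be the model whose first regression-function component is $f^{(k)}=\psi_{n,X_{i_{k}}}$, whose remaining components vanish, and whose disturbances $\varepsilon_{i}$ are i.i.d.\ $N(0,I_{p})$ (rescaled if necessary to match the variance constants of $\mathcal{G}$); let $w_{0}\in\mathcal{G}_{0}$ be the same model with $f\equiv0$. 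Since $f^{(k)}(X_{i_{k}})=Cr_{n}$ and $X_{i_{k}}\in S_{\vartheta_{n}}$, we have $\rho_{\vartheta_{n}}(w_{k},H_{0})\geq Cr_{n}$. Write $P_{k}$ for the law of $(Y_{1},\dots,Y_{n})$ under $w_{k}$ and $\bar P_{n}=N_{n}^{-1}\sum_{k=1}^{N_{n}}P_{k}$. For any test sequence with $\sup_{w\in\mathcal{G}_{0}}E_{w}[\phi_{n}]\leq\alpha$ we have in particular $E_{w_{0}}[\phi_{n}]\leq\alpha$, and therefore
\[
\inf_{w\in\mathcal{G},\,\rho_{\vartheta}(w,H_{0})\geq Cr_{n}}E_{w}[\phi_{n}]\leq\frac1{N_{n}}\sum_{k=1}^{N_{n}}E_{w_{k}}[\phi_{n}]\leq E_{w_{0}}[\phi_{n}]+\|\bar P_{n}-P_{0}\|_{TV}\leq\alpha+\sqrt{\chi^{2}(\bar P_{n},P_{0})},
\]
so it suffices to prove $\chi^{2}(\bar P_{n},P_{0})\to0$.

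This is the computational core, but it is routine once one notes that the bumps have pairwise disjoint supports (they are centred more than $2h_{n}$ apart and each is supported on a ball of radius $h_{n}$). With $L_{k}=dP_{k}/dP_{0}$, Gaussianity gives $E_{P_{0}}[L_{k}L_{l}]=\exp\big(\sum_{i}f^{(k)}(X_{i})f^{(l)}(X_{i})\big)$, which equals $1$ when $k\neq l$ and $e^{\lambda_{k}}$ with $\lambda_{k}=\sum_{i}f^{(k)}(X_{i})^{2}\leq(Cr_{n})^{2}M_{h_{n}}(X_{i_{k}})\leq C^{2}C_{2}\,nh_{n}^{d}r_{n}^{2}$ by Assumption~\ref{ass:Design-points}(iii). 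Hence
\[
\chi^{2}(\bar P_{n},P_{0})=\frac1{N_{n}^{2}}\sum_{k=1}^{N_{n}}\big(e^{\lambda_{k}}-1\big)\leq\frac{e^{\lambda_{\max}}}{N_{n}}\leq\frac{h_{n}^{d}}{C'}\exp\!\big(C^{2}C_{2}\,nh_{n}^{d}r_{n}^{2}\big),
\]
which tends to $0$ as soon as $C^{2}C_{2}\,nh_{n}^{d}r_{n}^{2}\leq\tfrac{d}{2}\log(1/h_{n})$ for large $n$. This inequality is precisely where hypothesis (iii) enters: choosing $h_{n}$ of order $\max\{r_{n}^{1/\tau},\,(\log n/n)^{1/(2\tau+d)}/\log\log n\}$ one has $nh_{n}^{d}r_{n}^{2}\lesssim nh_{n}^{2\tau+d}=o(\log n)=o(\log(1/h_{n}))$ -- the bound $nr_{n}^{(2\tau+d)/\tau}=o(\log n)$ being a restatement of (iii) -- while at the same time $h_{n}\gtrsim r_{n}^{1/\tau}$ (H\"older admissibility), $nh_{n}^{d}\to\infty$ (coverage by design points), and $h_{n}\lesssim\vartheta_{n}$ (so the bumps fit inside $S_{\vartheta_{n}}$). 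Thus $\chi^{2}(\bar P_{n},P_{0})\to0$, giving $\limsup_{n}\inf_{w\in\mathcal{G},\,\rho_{\vartheta}(w,H_{0})\geq Cr_{n}}E_{w}[\phi_{n}]\leq\alpha$, which is the claim.

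I expect the real difficulties to be bookkeeping rather than conceptual. First, one must exhibit a single bump genuinely lying in $\mathcal{F}_{[\tau]}(\tau,L)$: a purely radial bump fails the requirement that $\psi^{([\tau]+1,l)}(x)\neq0$ at some critical point when $[\tau]$ is even, so a non-radial construction is needed there. Second, one must check that $h_{n}$ can be chosen to make H\"older admissibility, $\chi^{2}$-smallness (the origin of the $\log n$ factor in the rate), the growth condition $nh_{n}^{d}\to\infty$, and the containment $h_{n}\lesssim\vartheta_{n}$ all simultaneously compatible -- this is delicate exactly when $r_{n}$ decays much faster than the critical rate and forces $h_{n}$ to be inflated above $r_{n}^{1/\tau}$. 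Everything else (the mixture/total-variation reduction and the $\chi^{2}$ algebra) is standard.
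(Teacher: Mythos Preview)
Your proposal is correct and is essentially the paper's argument: a Bayes mixture over disjoint bump alternatives with Gaussian noise, reduced to a second-moment bound on the mixture likelihood ratio. Your $\chi^{2}$ route and the paper's $L^{1}$ route (Lemma~\ref{property of Gauss rv}) unwind to the same variance calculation $N_{n}^{-2}\sum_{k}(e^{\lambda_{k}}-1)$ under the same condition $\lambda_{k}<c\log N_{n}$. The one place the paper does more work is precisely the point you flag as a difficulty: rather than asserting the existence of a bump in $\mathcal{F}_{[\tau]}(\tau,L)$, it constructs one explicitly by prescribing the $[\tau]$-th derivative via an iterated sign-pattern on dyadic intervals and then radializing, so no parity casework or non-radial perturbation is needed. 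Conversely, your inflation of $h_{n}$ above $r_{n}^{1/\tau}$ is unnecessary in the paper's framework: because $\rho_{\vartheta}$ is a supremum over $\mathbb{N}_{\vartheta}$ (not over $\{1,\dots,n\}$) and $\{X_{i}\}_{i\in\mathbb{N}}$ is dense in $S$, bump centres can be chosen from the full design sequence, and the paper simply sets $h_{n}=\psi_{n}(\log n/n)^{1/(2\tau+d)}$ with $r_{n}=\psi_{n}^{\tau}(\log n/n)^{\tau/(2\tau+d)}$, whereupon $n^{1/2}h_{n}^{\tau+d/2}=o(\sqrt{\log N_{n}})$ follows directly from (iii).
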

Since $\mathcal{F}_{[\tau]}(\tau,L)\subset\mathcal{F}(\tau,L)$, the
same lower bound applies for the class $\mathcal{F}(\tau,L)$ as well.
Comparing this result with theorem 4 shows that the test presented
in this paper is minimax rate optimal if $\zeta=\tau>d$ and $h_{\min}$
is chosen to converge to zero fast enough. When $\zeta=\tau=d$ and
$\beta_{n}$ is set to be constant, the test is rate optimal upto
some logarithmic factors if $h_{\min}$ is chosen to converge to zero
as fast as possible satisfying assumption \ref{ass: test function}.
When $\tau<d$, the test is not rate optimal since the rate of consistency
does not match the lower bound.

\section{Models with Infinitely Many CMI}

In this section, I briefly outline an extention of the test to the
case of infinitely many CMI. Suppose that the parameter $\theta$
is restricted by a countably infinite number of CMI, i.e. $p=\infty$.
As before, I am interested in testing the null hypothesis, $H_{0}$,
that $\theta=\theta_{0}$ against the alternative, $H_{a}$, that
$\theta\neq\theta_{0}$. One possible approach to testing in this
model is to construct a test as described in section \ref{sec:The-Test}
based on some finite subset of CMI assuming that as the sample size
$n$ increases, this subset expands covering all CMI in the asymptotics.
The advantage of the finite sample approach used in this paper is
that it immediately gives certain conditions that insure that such
a test maintain the required size asymptotically. Assume that the
test is based on $K=K_{n}\rightarrow\infty$ inequalities. Then
\begin{cor}
Let assumptions \ref{ass:Design-points}-\ref{ass:bandwidth values},
\ref{ass:The-kernel} and \ref{ass:Choice function} hold. In addition,
assume that (i) $\max_{i=1,...,n}\Vert\hat{\Sigma}_{i}-\Sigma_{i}\Vert_{o}=o_{p}(n^{-\kappa})$
for some $\kappa>0$, (ii) $K_{n}\log n/n^{\kappa/4}\rightarrow0$,
(iii) $\beta=\beta_{n}\leq C$, and (iv) $K_{n}^{6}(\log n)^{4}/(\beta_{n}^{10}h_{\min}^{3d}n)\rightarrow0$
as $n\rightarrow\infty$. Then for $P=PIA$ or $RMS$, 
\[
\inf_{w\in\mathcal{G}_{0}}E_{w}[g_{1-\alpha}^{P}(\hat{T})]\geq1-\alpha+o(1)
\]
as $n\rightarrow\infty$. In addition, 
\[
E_{w}[g_{1-\alpha}^{P}(\hat{T})]\rightarrow0
\]
for any $w\in\mathcal{G}_{\rho}$ with $\rho>0$.
\end{cor}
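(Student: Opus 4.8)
The plan is to re-run the proofs of Theorems \ref{thm: size} and \ref{Theorem: fixed alternatives} essentially verbatim, now treating $K_{n}$ as the number $p$ of moment functions but letting it grow with $n$, and to check that every quantity that was bounded by a constant in $p$ in the fixed-$p$ arguments grows at most at a rate absorbed by the new conditions (ii) and (iv). Write $S_{n}=\{(i,m,h):h\in H_{n},\,i\in I_{h,n},\,m=1,\dots,K_{n}\}$, so $|S_{n}|\le C K_{n}n\log n$. The first observation is that condition (iv), together with $\beta_{n}\le C$ and with the fact that Assumption \ref{ass:bandwidth values} forces $h_{\min}$ to decay at most polynomially, implies $K_{n}\le C n^{1/6}$; hence $K_{n}$ is polynomial in $n$ and $\log|S_{n}|=O(\log n)$, which keeps all logarithmic factors under control.

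For the size statement I would follow the three ingredients behind Theorem \ref{thm: size}. First, the Lindeberg/Chatterjee invariance principle (Theorem \ref{thm:Chatterjee} and Lemma \ref{lemma: aplication of Chatterjee}) bounds the gap between $E_{w}[g(\hat T)]$ and the same expectation computed under Gaussian disturbances by the quantity $\Delta$ of Section \ref{sub:Remarks}; that bound is \emph{linear} in $p$, and with $p=K_{n}$, with $F=O(1)$ by Assumption \ref{ass:Disturbances}, with $b\asymp(nh_{\min}^{d})^{-1/2}$ by Assumptions \ref{ass:Design-points}(iii), \ref{ass:Disturbances}(iv) and \ref{ass:The-kernel}, and with $\log|S_{n}|=O(\log n)$, one gets $\Delta\lesssim K_{n}(\log n)^{2/3}/(n^{1/6}h_{\min}^{d/2}\beta_{n}^{5/3})$, so $\Delta\to0$ is precisely condition (iv) --- the analogue of Assumption \ref{ass: test function}(ii) with the extra factor $K_{n}^{6}$. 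Second, the lower bound on the growth rate of the density of $\hat T$, which rests on the joint independence of $\{\hat f_{s}\}_{i\in I_{h}}$ and converts the previous step into an approximation of the c.d.f., is unaffected: enlarging the index set only adds independent coordinates. Combining these with Lemmas \ref{normal conv} and \ref{variance} gives, when $f\equiv0_{p}$, $\hat T=O_{p}(\sqrt{\log|S_{n}|})=O_{p}(\sqrt{\log n})$ and the correct asymptotic size with the infeasible $V_{s}$ in place of $\hat V_{s}$. Third, the replacement of $V_{s}$ by $\hat V_{s}$ both in the statistic and in the bootstrap draws $\tilde Y_{i}\sim N(0_{p},\hat\Sigma_{i})$: condition (i) gives $\max_{s}|\hat V_{s}^{2}/V_{s}^{2}-1|=o_{p}(n^{-\kappa})$, and with $\max_{s}|\hat f_{s}/V_{s}|=O_{p}(\sqrt{\log n})$ each studentized coordinate and each bootstrap coordinate is perturbed by $o_{p}(n^{-\kappa/2}\sqrt{\log n})$; propagating this uniformly over the $|S_{n}|$ pretest decisions that define $S^{RMS}$ and through the Gaussian comparison for the bootstrap is where the factor $K_{n}\log n$ enters, and the net requirement is condition (ii), $K_{n}\log n=o(n^{\kappa/4})$. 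The remaining RMS bookkeeping --- truncated coordinates with $f_{m}(X_{i})<0$ and small $h$ diverge to $-\infty$ at a polynomial rate and so are discarded with probability tending to one, while the $2\gamma$ term absorbs the truncation --- is identical to the fixed-$p$ case, and one concludes $\inf_{w\in\mathcal G_{0}}E_{w}[g^{P}_{1-\alpha}(\hat T)]\ge1-\alpha+o(1)$ exactly as in Theorem \ref{thm: size}.

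For the consistency statement, fix $w\in\mathcal G_{\rho}$ with $\rho>0$. Since the family of CMI subsets used by the test is nested and eventually exhausts all inequalities and $K_{n}\to\infty$, there is $n_{0}(w)$ such that for $n\ge n_{0}(w)$ the binding index $m^{\ast}$, for which $f^{w}_{m^{\ast}}(X_{i^{\ast}})\ge\rho/2$ for some $i^{\ast}$, satisfies $m^{\ast}\le K_{n}$; this is why the claim is pointwise in $w$ rather than uniform over $\mathcal G_{\rho}$. For such $n$ the argument of Theorem \ref{Theorem: fixed alternatives} applies to that single coordinate: positivity of the kernel, Assumption \ref{ass:Design-points}(iii) and the H\"older continuity in Assumption \ref{ass:Regression-function} let one pick $h^{\ast}\in H_{n}$ with $E[\hat f_{i^{\ast},m^{\ast},h^{\ast}}]\ge c\rho$ and $V_{i^{\ast},m^{\ast},h^{\ast}}\asymp(n(h^{\ast})^{d})^{-1/2}$, so $\hat f_{i^{\ast},m^{\ast},h^{\ast}}/\hat V_{i^{\ast},m^{\ast},h^{\ast}}\to+\infty$ at a polynomial rate (using $nh_{\min}^{3d}\to\infty$, which follows from condition (iv)), whereas the critical value is $O_{p}(\sqrt{\log|S_{n}|})=O_{p}(\sqrt{\log n})$ (for RMS, no larger than the plug-in one up to the shift induced by $\gamma<\alpha/2$). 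Hence $P_{w}\{\hat T>t_{1-\alpha}\}\to1$, and by Lemma \ref{lemma: equivalence of tests} this is $E_{w}[g^{P}_{1-\alpha}(\hat T)]=P_{w}\{\hat T\le t_{1-\alpha}\}\to0$.

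The hard part will be the third size ingredient: showing that the error from estimating $\Sigma_{i}$ stays negligible \emph{uniformly} over the growing index set $S_{n}$, over the $K_{n}\times K_{n}$ covariance entries, over the $|S_{n}|$ events defining $S^{RMS}$, and through the Gaussian comparison for the bootstrap draws, and pinning down that this chain forces exactly the power $n^{\kappa/4}$ in condition (ii) while the Chatterjee bound forces the power $K_{n}^{6}$ in condition (iv). Everything else is a line-by-line re-run of the proofs of Theorems \ref{thm: size} and \ref{Theorem: fixed alternatives} with the fixed $p$ replaced by $K_{n}$.
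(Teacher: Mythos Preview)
Your proposal is correct and follows essentially the same route as the paper: replace $p$ by $K_{n}$ throughout the preliminary lemmas and the proofs of Theorems~\ref{thm: size} and~\ref{Theorem: fixed alternatives}, observe that condition~(iv) is exactly what makes the Chatterjee bound in Lemma~\ref{lemma: aplication of Chatterjee} vanish (replacing Assumption~\ref{ass: test function}) and that condition~(ii) is what makes the sequence $\psi_{n}=C_{\psi}K_{n}\log n/n^{\kappa/4}$ tend to zero so that Lemmas~\ref{lemma: quantile approximation}--\ref{Lemma: Inclusion with null function} go through, then invoke Theorem~\ref{Theorem: fixed alternatives} once the binding index $m(w)$ enters the test statistic. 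Your additional remarks (that condition~(iv) forces $K_{n}$ to be polynomial in $n$, hence $\log|S_{n}|=O(\log n)$; and that consistency is only pointwise in $w$ because one must wait for $m(w)\le K_{n}$) are correct and make explicit what the paper leaves implicit.
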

This corollary shows that the randomized test has correct asymptotic
size both with plug-in and RMS critical values and is consistent against
fixed alternatives outside of the set $\Theta_{I}$. Note that $\kappa$
appearing in condition (i) in this corollary will generally be different
from $\kappa$ used in assumption \ref{ass:variance estimator} because
of increasing number of moment functions. Results concerning the test
with determinstic critical values and local power of the test, with
suitable modifications, can also be easily obtained using arguments
similar to those used in the proofs of corollary \ref{Corr: deterministic version}
and theorems \ref{Theorem: one directional alternative} and \ref{thm: uniform power}.
For brevity, I do not discuss these results.

\section{\label{sec:Monte-Carlo-Results}Monte Carlo Results}

In this section, I present results of Monte Carlo simulations. The
aim of these simulations is twofold. First, I demonstrate that my
test accurately maintain size in finite samples reasonably well. Second,
I compare relative advantages and disadvantages of my test and the
tests of \citet{AndrewsandShi2010}, \citet{ChernozhukovLeeRosen2009},
and \citet{LeeandSongandWhang2011}. The methods of \citet{AndrewsandShi2010}
and \citet{LeeandSongandWhang2011} are most appropriate for detecting
flat alternatives, which represent one-directional local alternatives.
These methods have low power against alternatives with peaks, however.
The test of \citet{ChernozhukovLeeRosen2009} has higher power against
such alternatives, but it requires knowing smoothness properties of
the moment functions. The authors suggest certain rule-of-thumb techniques
to choose a bandwidth value. Finally, the main advantage of my test
is its adaptiveness. In comparison with \citet{AndrewsandShi2010}
and \citet{LeeandSongandWhang2011}, my test has higher power against
alternatives with peaks. In comparison with \citet{ChernozhukovLeeRosen2009},
my test has higher power when their rule-of-thumb techniques lead
to an inappropriate bandwidth value. For example, this happens when
the underlying regression function is mostly flat but varies significantly
in the region where the null hypothesis is violated (the case of spatially
inhomogeneous alternatives, see \citet{LepskiSpokoiny}).

The data generating process in the experiments is
\[
Y=L(M-|X|)_{+}-m+\varepsilon
\]
where $X$, $Y$, and $\varepsilon$ are scalar random variables and
$L$, $M$, and $m$ are some constants. $X$ is distributed uniformly
on $(-2,2)$. Depending on the experiment, $\varepsilon$ is distributed
according to $0.1\cdot N(0,1)$ or $(\xi\cdot0.07+(1-\xi)\cdot0.18)\cdot N(0,1)$
where $\xi$ is a Bernoilly random variable with $p(\xi=1)=0.8$ and
$p(\xi=0)=0.2$ independent of $N(0,1)$. In both cases, $\varepsilon$
is independent of $X$. I consider the following specifications for
parameters. Case 1: $L=M=m=0$. Case 2: $L=0.1$, $M=0.2$, $m=0.02$.
Case 3: $L=M=0$, $m=-0.02$. Case 4: $L=2$, $M=0.2$, $m=0.2$.
Note that $E[Y|X]\leq0$ almost surely in cases 1 and 2 while $P\{E[Y|X]>0\}>0$
in cases 3 and 4. In case 3, the alternative is flat. In case 4, the
alternative has a peak in the region where the null hypothesis is
violated. I have chosen parameters so that rejection probabilities
are strictly greater than 0 and strictly smaller than 1 in most cases
so that meaningful comparisons are possible. I generate samples $(X_{i},Y_{i})_{i=1}^{n}$
of size $n=250$ and $500$ from the distribution of $(X,Y)$. In
all cases, I consider tests with the nominal size $10\%$. The results
are based on 1000 simulations for each specification.

For the test of \citet{AndrewsandShi2010}, I consider their Kolmogorov-Smirnov
test statistic with boxes and truncation parameter $0.05$. I simulate
both plugin (AS, plugin) and GMS (AS, GMS) critical values based on
the bootstrap suggested in their paper. I use the support of the empirical
distribution of $X$ to choose a set of weighting functions. All other
tuning parameters are set as prescribed in their paper. Implementing
all other tests requires selecting a kernel function. In all cases,
I use the following kernel function
\[
K(x)=1.5(1-4x^{2})_{+}
\]
For the test of \citet{ChernozhukovLeeRosen2009}, I use their kernel
type test statistic with critical values based on the multiplier bootstrap
both with (CLR, $\hat{V}$) and without (CLR, $V$) the set estimation.
Both \citet{ChernozhukovLeeRosen2009} and \citet{LeeandSongandWhang2011}
(LSW) circumvent edge effects of kernel estimators by restricting
their test statistics to the proper subsets of the support of $X$.
So, I select 10 and 90\% quantiles of the empirical distribution of
$X$ as bounds for the set over which the test statistics are calculated.
Both tests are nonadaptive. In particular, there is no formal theory
on how to choose bandwidth values in their tests. I use their suggestions
to choose bandwidth values. For the test of \citet{LeeandSongandWhang2011},
I use their test statistic based on one-sided $L_{1}$-norm.

Let me now describe the choice of parameters for the test developed
in this paper. The largest bandwidth value, $h_{\max}$, is set to
be one half of the length of the support of the empirical distribution.
I choose the smallest bandwidth value, $h_{\min}$, so that the kernel
estimator uses on average 15 data points when $n=250$ and 20 data
points when $n=500$. The scaling parameter, $a$, equals $0.8$ so
that the set of bandwidth values is 
\[
H_{n}=\{h=h_{\max}0.8^{k}:\, h\geq h_{\min},k=0,1,2,...\}
\]
My test requires choosing the set $S_{n}$. For each bandwidth value,
$h$, I select the largest subset, $S_{n,h}$, of $X_{i}$'s such
that $X_{i}-X_{j}\geq h$ for any nonequal elements in $S_{n,h}$,
and the smallest $X_{i}$ is always in $S_{n,h}$. Then $S_{n}=\{(i,h):\, h\in H_{n},\, X_{i}\in S_{n,h}\}$.
In all cases, I set $\beta=0$ so that the deterministic version of
the critical values is used. Finally, for the RMS critical value,
I set $\gamma=0.1/\log(n)$ to make meaningful comparisons with the
test of \citet{ChernozhukovLeeRosen2009}. In all bootstrap procedures,
for all tests, I use $1000$ repetitions when $n=250$ and $500$
repetions when $n=500$.

The results of the experiments are presented in table 1 for $n=250$
and in table 2 for $n=500$. In both tables, my test is denoted as
Adaptive test with plug-in and RMS critical values. Consider first
results for $n=250$. In case 1, where the null hypothesis holds,
all tests have rejecting probabilities close to the nominal size 10\%
both for normal and mixture of normals disturbances. In particular,
RMS procedure for my test, GMS procedure for the test of \citet{AndrewsandShi2010}
and the test of \citet{ChernozhukovLeeRosen2009} with the set estimation
do not overreject, which might be concerned based on the construction
of these tests. In case 2, where the null hypothesis holds but the
underlying regression function is mainly strictly below the borderline,
all tests are conservative. When the null hypothesis is violated with
a flat alternative (case 3), the tests of \citet{AndrewsandShi2010}
and \citet{LeeandSongandWhang2011} have highest rejection probabilities
as expected from the theory. In this case, my test is less powerful
in comparison with these tests and somewhat similar to the method
of \citet{ChernozhukovLeeRosen2009}. This is compensated in case
4 where the null hypothesis is violated with the peak-shaped alternative.
In this case, the power of my test is much higher than that of competing
tests. This is especially true for my test with RMS critical values
whose rejection probability exceeds 80\% while rejection probabilities
of competing tests do not exceed 20\%. Note that all results are stable
across distributions of disturbances. Also note that my test with
RMS critical values has much higher power than the test with plugin
critical values in case 4. So, among these two tests, I recommend
the test with RMS critical values. Results for $n=500$ indicate a
similar pattern. Concluding this section, I note that all simulation
results are consistent with the presented theory.

\begin{table}
\caption{Results of Monte Carlo Experiments, $n=250$}

\begin{tabular}{ccccccc>{\centering}p{1.5cm}>{\centering}p{1.5cm}}
\hline 
 &  & \multicolumn{7}{c}{{\scriptsize Probability of Rejecting Null Hypothesis}}\tabularnewline
\cline{3-9} 
{\scriptsize Distribution $\varepsilon$} & {\scriptsize Case} & {\scriptsize AS, plugin} & {\scriptsize AS, GMS} & {\scriptsize LSW} & {\scriptsize CLR, $V$} & {\scriptsize CLR, $\hat{V}$} & {\scriptsize Adaptive test, plugin} & {\scriptsize Adaptive test, RMS}\tabularnewline
\hline 
\multirow{4}{*}{{\scriptsize Normal}} & {\scriptsize 1} & {\scriptsize 0.099} & {\scriptsize 0.102} & {\scriptsize 0.124} & {\scriptsize 0.151} & {\scriptsize 0.151} & {\scriptsize 0.101} & {\scriptsize 0.101}\tabularnewline
 & {\scriptsize 2} & {\scriptsize 0.002} & {\scriptsize 0.007} & {\scriptsize 0.000} & {\scriptsize 0.008} & {\scriptsize 0.008} & {\scriptsize 0.009} & {\scriptsize 0.009}\tabularnewline
 & {\scriptsize 3} & {\scriptsize 0.910} & {\scriptsize 0.910} & {\scriptsize 0.941} & {\scriptsize 0.808} & {\scriptsize 0.808} & {\scriptsize 0.723} & {\scriptsize 0.723}\tabularnewline
 & {\scriptsize 4} & {\scriptsize 0.000} & {\scriptsize 0.143} & {\scriptsize 0.000} & {\scriptsize 0.122} & {\scriptsize 0.191} & {\scriptsize 0.589} & {\scriptsize 0.821}\tabularnewline
\multirow{4}{*}{{\scriptsize Mixture}} & {\scriptsize 1} & {\scriptsize 0.078} & {\scriptsize 0.086} & {\scriptsize 0.107} & {\scriptsize 0.134} & {\scriptsize 0.134} & {\scriptsize 0.124} & {\scriptsize 0.124}\tabularnewline
 & {\scriptsize 2} & {\scriptsize 0.002} & {\scriptsize 0.002} & {\scriptsize 0.000} & {\scriptsize 0.010} & {\scriptsize 0.010} & {\scriptsize 0.016} & {\scriptsize 0.016}\tabularnewline
 & {\scriptsize 3} & {\scriptsize 0.904} & {\scriptsize 0.905} & {\scriptsize 0.925} & {\scriptsize 0.833} & {\scriptsize 0.833} & {\scriptsize 0.692} & {\scriptsize 0.692}\tabularnewline
 & {\scriptsize 4} & {\scriptsize 0.000} & {\scriptsize 0.121} & {\scriptsize 0.000} & {\scriptsize 0.111} & {\scriptsize 0.197} & {\scriptsize 0.555} & {\scriptsize 0.808}\tabularnewline
\hline 
\end{tabular}
\end{table}

\begin{table}

\caption{Results of Monte Carlo Experiments, $n=500$}

\begin{tabular}{ccccccc>{\centering}p{1.5cm}>{\centering}p{1.5cm}}
\hline 
 &  & \multicolumn{7}{c}{{\scriptsize Probability of Rejecting Null Hypothesis}}\tabularnewline
\cline{3-9} 
{\scriptsize Distribution $\varepsilon$} & {\scriptsize Case} & {\scriptsize AS, plugin} & {\scriptsize AS, GMS} & {\scriptsize LSW} & {\scriptsize CLR, $V$} & {\scriptsize CLR, $\hat{V}$} & {\scriptsize Adaptive test, plugin} & {\scriptsize Adaptive test, RMS}\tabularnewline
\hline 
\multirow{4}{*}{{\scriptsize Normal}} & {\scriptsize 1} & {\scriptsize 0.095} & {\scriptsize 0.104} & {\scriptsize 0.119} & {\scriptsize 0.126} & {\scriptsize 0.126} & {\scriptsize 0.103} & {\scriptsize 0.103}\tabularnewline
 & {\scriptsize 2} & {\scriptsize 0.000} & {\scriptsize 0.001} & {\scriptsize 0.000} & {\scriptsize 0.002} & {\scriptsize 0.002} & {\scriptsize 0.008} & {\scriptsize 0.008}\tabularnewline
 & {\scriptsize 3} & {\scriptsize 0.997} & {\scriptsize 0.997} & {\scriptsize 0.996} & {\scriptsize 0.954} & {\scriptsize 0.954} & {\scriptsize 0.903} & {\scriptsize 0.903}\tabularnewline
 & {\scriptsize 4} & {\scriptsize 0.008} & {\scriptsize 0.587} & {\scriptsize 0.000} & {\scriptsize 0.497} & {\scriptsize 0.694} & {\scriptsize 0.976} & {\scriptsize 0.999}\tabularnewline
\multirow{4}{*}{{\scriptsize Mixture}} & {\scriptsize 1} & {\scriptsize 0.120} & {\scriptsize 0.123} & {\scriptsize 0.130} & {\scriptsize 0.117} & {\scriptsize 0.117} & {\scriptsize 0.119} & {\scriptsize 0.119}\tabularnewline
 & {\scriptsize 2} & {\scriptsize 0.000} & {\scriptsize 0.001} & {\scriptsize 0.000} & {\scriptsize 0.000} & {\scriptsize 0.000} & {\scriptsize 0.010} & {\scriptsize 0.010}\tabularnewline
 & {\scriptsize 3} & {\scriptsize 0.993} & {\scriptsize 0.993} & {\scriptsize 0.996} & {\scriptsize 0.949} & {\scriptsize 0.949} & {\scriptsize 0.903} & {\scriptsize 0.903}\tabularnewline
 & {\scriptsize 4} & {\scriptsize 0.005} & {\scriptsize 0.549} & {\scriptsize 0.000} & {\scriptsize 0.456} & {\scriptsize 0.625} & {\scriptsize 0.978} & {\scriptsize 0.997}\tabularnewline
\hline 
\end{tabular}
\end{table}

\section{\label{sec:Conclusions}Conclusions}

In this paper, I developed a new test of conditional moment inequalities.
In contrast to some other tests in the literature, my test is directed
against general nonparametric alternatives, which gives high power
in a large class of CMI models. Considering kernel estimates of moment
functions with many different values of the bandwidth parameter allows
me to construct a test that automatically adapts to the unknown smoothness
of moment functions and selects the most appropriate testing bandwidth
value. The test developed in this paper has uniformly correct asymptotic
size, no matter whether the model is identified, weakly identified,
or not identified, and is uniformly consistent against certain, but
not all, large classes of smooth alternatives whose distance from
the null hypothesis converges to zero at a fastest possible rate.
The tests of \citet{AndrewsandShi2010} and \citet{LeeandSongandWhang2011}
have nontrivial power against $n^{-1/2}$-local one-directional alternatives
whereas my method only allows for nontrivial testing against $(n/\log n)^{-1/2}$-local
alternatives of this type. Additional $(\log n)^{1/2}$ factor should
be regarded as a price for having fast rate of uniform consistency.
There exist sequences of local alternatives against which their tests
are not consistent whereas mine is. Monte Carlo experiments give an
example of a CMI model where finite sample power of my test greatly
exceeds that of competing tests.

\appendix

\section{Appendix}

This Appendix contains proofs of all results stated in the main part
of the paper. Section \ref{sub: equivalence} explains the equivalent
representations for the randomized test. Section \ref{sub:lemma on square root operator}
derives a bound on the modulus of continuity in the operator norm
of the square root operator on the space of symmetric positive semidefinite
matrices. Section \ref{sub:Invariance principle} gives a straighforward
generalization of results in \citet{Chatterjee2005} to the case of
multidimensional random variables. They are concerned with conditions
when the distribution of some function of several independent random
variables with unknown distributions can be approximated by substituting
Gaussian distributions with the same first two moments. They are based
on the Linderberg's argument. The result is specialized to the situation
when the function of interest can be written in the form of the maximum
of linear functions of the data. These results have their own value
as they can be used as an alternative to results on stochastic approximation
from empirical process theory. They are also useful because they give
an explicit bound on the approximation error. Section \ref{sub:Primitive-Conditions}
gives sufficient conditions for assumption \ref{ass:Design-points}
in the main part of the paper. Section \ref{sub:Anticoncentration-Inequality}
presents an anticoncentration inequality for the maximum of Gaussian
random variables with unit variance. Section \ref{sub:Result-on-Gaussian RVs}
describes a result on Gaussian random variables which is used in the
proof of lower bound on the minimax rate. Section \ref{sub:Preliminary-Technical-Results}
develops some preliminary technical results necessary for the proofs
of the main theorems. Finally, section \ref{sub:Proofs-of-Theorems}
presents the proofs of the theorems stated in the main part of the
paper.

Note that all convergence results proven in this Appendix hold uniformly
over the class of models $\mathcal{G}.$ This fact will not be stated
seperately in each special case, but it is assumed everywhere in this
Appendix.

\subsection{\label{sub: equivalence}Lemma on the equivalent representation of
the test}

The lemma below was used in section \ref{sub:The-Test-Function} to
show that the randomized test is equivalent to the test with the random
critical value.
\begin{lem}
\label{lemma: equivalence of tests}$E[g(\hat{T})]=P\{\hat{T}\leq t_{1-\alpha}\}$.\end{lem}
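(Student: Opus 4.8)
The statement is purely about the interplay between the deterministic, nonincreasing function $g$ and the extra uniform randomization $U$, so the argument should be elementary. The plan is to condition on everything that determines $\hat T$ and the test function $g$ --- i.e.\ the data $(X_i,Y_i)_{i=1}^n$ together with the bootstrap draws used to form $c$ --- so that $\hat T=x$ and $g$ become fixed, and then to evaluate $P\{\hat T\le t_{1-\alpha}\}$ as an integral over $U$ alone, using that $U$ is independent of all of this. Writing $t_{1-\alpha}=t_{1-\alpha}(U)$ for the solution of $g(t)=U$, this reduces the claim to the pointwise identity $P_U\{x\le t_{1-\alpha}(U)\}=g(x)$ for every real $x$, after which taking the expectation over the conditioning variables gives $P\{\hat T\le t_{1-\alpha}\}=E[g(\hat T)]$.

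First I would record the structure of $g$ and of the map $U\mapsto t_{1-\alpha}(U)$. By construction $g$ is continuous and nonincreasing, $g\equiv1$ on $(-\infty,c]$, $g$ is strictly decreasing on $[c,c+\beta]$ (for the recommended choice of $g_0$, and in any case this is what makes the solution of $g(t)=U$ unique, as asserted in Section~\ref{sub:The-Test-Function}), and $g\equiv0$ on $[c+\beta,\infty)$; hence $g$ restricts to a continuous strictly decreasing bijection from $[c,c+\beta]$ onto $[0,1]$. Consequently, for $U\in(0,1)$ the equation $g(t)=U$ has a unique root $t_{1-\alpha}(U)\in(c,c+\beta)$, and in all cases $t_{1-\alpha}(U)\in[c,c+\beta]$; the boundary values $U\in\{0,1\}$ occur with $P$-probability zero and may be handled by any fixed convention.

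Then I would verify $P_U\{x\le t_{1-\alpha}(U)\}=g(x)$ by a three-way case split. If $x\le c$, then $g(x)=1$ and, since $t_{1-\alpha}(U)\ge c\ge x$ with probability one, the probability equals $1=g(x)$. If $x\ge c+\beta$, then $g(x)=0$ and $t_{1-\alpha}(U)\le c+\beta\le x$, with equality only on a null set of $U$, so the probability equals $0=g(x)$. If $x\in(c,c+\beta)$, then $g(x)\in(0,1)$ and, using that $g$ is strictly decreasing on $(c,c+\beta)$ and that $t_{1-\alpha}(U)$ lies in this interval a.s., we have $x\le t_{1-\alpha}(U)\iff g(x)\ge g(t_{1-\alpha}(U))=U$, so the probability equals $P_U\{U\le g(x)\}=g(x)$ because $U$ is uniform on $[0,1]$. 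Combining the three cases and integrating over the conditioning variables completes the proof.

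I do not expect a genuine obstacle; the only mildly delicate point is that $g$ is only weakly monotone globally, so $t_{1-\alpha}(U)$ is literally well-defined only for $U\in(0,1)$. This is harmless since that event has full probability, but if one prefers to avoid assuming strict monotonicity of $g_0$ altogether, one can instead set $t_{1-\alpha}(U)=\inf\{t:g(t)\le U\}$; continuity and monotonicity of $g$ then give $\{x\le t_{1-\alpha}(U)\}=\{g(x)>U\}$ up to the null set $\{U=g(x)\}$, and again $P_U\{x\le t_{1-\alpha}(U)\}=P_U\{U<g(x)\}=g(x)$, so the conclusion is unchanged.
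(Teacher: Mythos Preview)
Your proof is correct. The paper reaches the same conclusion by a slightly more compact route: it writes $E[g(\hat T)]=\int_0^1 P\{g(\hat T)\ge x\}\,dx$ via the layer-cake representation for a $[0,1]$-valued random variable, recognizes this integral as $P\{g(\hat T)\ge U\}$ by independence of $U$, and then invokes in one line the equivalence $\{g(\hat T)\ge U\}=\{\hat T\le t_{1-\alpha}\}$. Your version conditions first on everything but $U$ and establishes the pointwise identity $P_U\{x\le t_{1-\alpha}(U)\}=g(x)$ by a three-case split, which is more explicit about where monotonicity of $g$ and the uniform law of $U$ enter, and also handles the boundary and non-strict-monotonicity issues more carefully than the paper does. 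Both arguments rest on the same observation---that $g$ decreasing and $g(t_{1-\alpha})=U$ make $\{\hat T\le t_{1-\alpha}\}$ and $\{U\le g(\hat T)\}$ the same event up to a null set---so the difference is purely presentational.
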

\begin{proof}
Since $g(\hat{T})\in[0,1]$ almost surely,
\begin{equation}
E[g(\hat{T})]=\int_{0}^{1}P\{g(\hat{T})\geq x\}dx\label{eq:56}
\end{equation}
Given that $U$ is independent of the data and, hence, of $\hat{T}$,
\begin{equation}
\int_{0}^{1}P\{g(\hat{T})\geq x\}dx=P\{g(\hat{T})\geq U\}\label{eq:57}
\end{equation}
Finally, note that $\{g(\hat{T})\geq U\}$ is equivalent to $\{\hat{T}\leq t_{1-\alpha}\}$
so that
\begin{equation}
P\{g(\hat{T})\geq U\}=P\{\hat{T}\leq t_{1-\alpha}\}\label{eq:58}
\end{equation}
Combining (\ref{eq:56}), (\ref{eq:57}), and (\ref{eq:58}) gives
the result.
\end{proof}

\subsection{\label{sub:lemma on square root operator}Continuity of the square
root operator on the set of positive semidefinite matrices}
\begin{lem}
\label{lemma:Square root operator}Let $A$ and $B$ be $p\times p$-dimensional
symmetric positive semidefinite matrices. Then $\Vert A^{1/2}-B^{1/2}\Vert_{o}\leq p^{1/2}\Vert A-B\Vert_{o}^{1/2}$
where $\Vert\cdot\Vert_{o}$ means the spectral norm corresponding
to the Euclidean norm on $\mathbb{R}^{p}$.\end{lem}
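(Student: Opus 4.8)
The plan is to pass from the spectral norm to the Frobenius (Hilbert--Schmidt) norm $\Vert\cdot\Vert_{F}$, prove the estimate there, and convert back. Recall the elementary bounds $\Vert M\Vert_{o}\le\Vert M\Vert_{F}\le p^{1/2}\Vert M\Vert_{o}$, valid for any symmetric $p\times p$ matrix $M$ (the first because the largest squared singular value is dominated by the sum of all of them, the second because there are at most $p$ of them). Hence it suffices to establish the single inequality $\Vert A^{1/2}-B^{1/2}\Vert_{F}^{2}\le p^{1/2}\Vert A-B\Vert_{F}$: combining it with the two bounds gives $\Vert A^{1/2}-B^{1/2}\Vert_{o}^{2}\le\Vert A^{1/2}-B^{1/2}\Vert_{F}^{2}\le p^{1/2}\Vert A-B\Vert_{F}\le p\,\Vert A-B\Vert_{o}$, and taking square roots yields the claim.

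To prove the Frobenius estimate I would write spectral decompositions $A=\sum_{i=1}^{p}\lambda_{i}u_{i}u_{i}^{T}$ and $B=\sum_{j=1}^{p}\mu_{j}v_{j}v_{j}^{T}$, with $\{u_{i}\}$ and $\{v_{j}\}$ orthonormal bases of $\mathbb{R}^{p}$ and $\lambda_{i},\mu_{j}\ge0$, so that $A^{1/2}=\sum_{i}\lambda_{i}^{1/2}u_{i}u_{i}^{T}$ and $B^{1/2}=\sum_{j}\mu_{j}^{1/2}v_{j}v_{j}^{T}$. Expanding $\Vert X-Y\Vert_{F}^{2}=\operatorname{tr}(X^{2})+\operatorname{tr}(Y^{2})-2\operatorname{tr}(XY)$, using $\operatorname{tr}(u_{i}u_{i}^{T}v_{j}v_{j}^{T})=(u_{i}^{T}v_{j})^{2}$, and using $\sum_{j}(u_{i}^{T}v_{j})^{2}=\Vert u_{i}\Vert^{2}=1$ to rewrite $\operatorname{tr}(A)$ and $\operatorname{tr}(B)$, I obtain the two identities $\Vert A^{1/2}-B^{1/2}\Vert_{F}^{2}=\sum_{i,j}(u_{i}^{T}v_{j})^{2}(\lambda_{i}^{1/2}-\mu_{j}^{1/2})^{2}$ and $\Vert A-B\Vert_{F}^{2}=\sum_{i,j}(u_{i}^{T}v_{j})^{2}(\lambda_{i}-\mu_{j})^{2}$.

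Now I would apply, term by term, the scalar inequality $(\sqrt{s}-\sqrt{t})^{2}\le|s-t|$ for $s,t\ge0$ (immediate, since for $s\ge t$ one has $(\sqrt{s}-\sqrt{t})^{2}\le(\sqrt{s}-\sqrt{t})(\sqrt{s}+\sqrt{t})=s-t$), giving $\Vert A^{1/2}-B^{1/2}\Vert_{F}^{2}\le\sum_{i,j}(u_{i}^{T}v_{j})^{2}|\lambda_{i}-\mu_{j}|$. Cauchy--Schwarz then bounds the right-hand side by $\bigl(\sum_{i,j}(u_{i}^{T}v_{j})^{2}\bigr)^{1/2}\bigl(\sum_{i,j}(u_{i}^{T}v_{j})^{2}(\lambda_{i}-\mu_{j})^{2}\bigr)^{1/2}$; since $\sum_{i,j}(u_{i}^{T}v_{j})^{2}=\sum_{i}\Vert u_{i}\Vert^{2}=p$ and the second factor equals $\Vert A-B\Vert_{F}$, this is exactly $p^{1/2}\Vert A-B\Vert_{F}$, as needed. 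The only mildly delicate point is the bookkeeping in the trace expansions --- keeping the cross terms $(u_{i}^{T}v_{j})^{2}$ and the normalization $\sum_{i,j}(u_{i}^{T}v_{j})^{2}=p$ correct when $A$ and $B$ do not commute; past that, the argument is just a one-line scalar estimate plus Cauchy--Schwarz, so I do not anticipate a genuine obstacle.
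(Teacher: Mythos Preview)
Your proof is correct. Both your argument and the paper's rely on the same three ingredients---spectral decompositions of $A$ and $B$, the scalar inequality $(\sqrt{s}-\sqrt{t})^{2}\le|s-t|$, and Cauchy--Schwarz---but they are organized differently. The paper never leaves the operator norm: it fixes an eigenvector $a_i$ of $A$, expands it in the eigenbasis of $B$ with coefficients $f_{ij}$, and shows directly that $\Vert(A^{1/2}-B^{1/2})a_i\Vert^{2}\le\Vert A-B\Vert_{o}$ via the same scalar bound and Cauchy--Schwarz (with $\sum_j f_{ij}^{2}=1$); the factor $p^{1/2}$ then enters only at the last step, when an arbitrary unit vector is written as $\sum_i d_i a_i$ and the triangle inequality gives $\sum_i|d_i|\le p^{1/2}$. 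Your detour through the Frobenius norm is slightly more symmetric in $A$ and $B$ and packages the whole estimate into one global Cauchy--Schwarz, with the $p^{1/2}$ appearing instead from $\sum_{i,j}(u_i^{T}v_j)^{2}=p$ together with $\Vert A-B\Vert_{F}\le p^{1/2}\Vert A-B\Vert_{o}$. Either way the constant and the argument length are the same; your version has the minor aesthetic advantage of producing exact identities for both Frobenius norms before any inequality is applied.
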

\begin{proof}
Let $a_{1},...,a_{p}$ and $b_{1},...,b_{n}$ be orthogonal eigenvectors
of matrices $A$ and $B$ correspondingly. Without loss of generality,
I can and will assume that $\Vert a_{i}\Vert=\Vert b_{i}\Vert=1$
for all $i=1,...,p$ where $\Vert\cdot\Vert$ denotes the Euclidean
norm on $\mathbb{R}^{p}$. Let $\lambda_{1}(A),...,\lambda_{p}(A)$
and $\lambda_{1}(B),...,\lambda_{p}(B)$ be corresponding eigenvalues.
Let $f_{i1},...,f_{ip}$ be coordinates of $a_{i}$ in the basis $(b_{1},...,b_{p})$
for all $i=1,...,p$. Then $\sum_{j=1}^{p}f_{ij}^{2}=1$ for all $i=1,...,p$. 

For any $i=1,...,p$,
\begin{eqnarray*}
\sum_{j=1}^{p}(\lambda_{i}(A)-\lambda_{j}(B))^{2}f_{ij}^{2} & = & \Vert\sum_{j=1}^{p}(\lambda_{i}(A)-\lambda_{j}(B))f_{ij}b_{j}\Vert^{2}\\
 & = & \Vert\lambda_{i}(A)a_{i}-\sum_{j=1}^{p}\lambda_{j}(B)f_{ij}b_{j}\Vert^{2}\\
 & = & \Vert(A-B)a_{i}\Vert^{2}\\
 & \leq & \Vert A-B\Vert_{o}^{2}
\end{eqnarray*}
since $\Vert(A-B)a_{i}\Vert\leq\Vert A-B\Vert_{o}\Vert a_{i}\Vert=\Vert A-B\Vert_{o}$.

For $P=A,B$, $P^{1/2}$ has the same eigenvectors as $P$ with corresponding
eigenvalues equal to $\lambda_{1}^{1/2}(P),...,\lambda_{n}^{1/2}(P)$.
Therefore, for any $i=1,...,p$,
\begin{eqnarray*}
\Vert(A^{1/2}-B^{1/2})a_{i}\Vert^{2} & = & \sum_{j=1}^{p}(\lambda_{i}^{1/2}(A)-\lambda_{j}^{1/2}(B))^{2}f_{ij}^{2}\\
 & \leq & \sum_{j=1}^{p}|\lambda_{i}(A)-\lambda_{j}(B)|f_{ij}^{2}\\
 & \leq & \left(\sum_{j=1}^{p}(\lambda_{i}(A)-\lambda_{j}(B))^{2}f_{ij}^{2}\right)^{1/2}\\
 & \leq & \Vert A-B\Vert_{o}
\end{eqnarray*}
where the last line used the inequality derived above. For any $c\in\mathbb{R}^{p}$
with $\Vert c\Vert=1$, let $d_{1},...,d_{p}$ be coordinates of $c$
in the basis $(a_{1},...,a_{p})$. Then
\begin{eqnarray*}
\Vert(A^{1/2}-B^{1/2})c\Vert & = & \Vert(A^{1/2}-B^{1/2})\sum_{i=1}^{p}d_{i}a_{i}\Vert\\
 & \leq & \sum_{i=1}^{p}|d_{i}|\Vert(A^{1/2}-B^{1/2})a_{i}\Vert\\
 & \leq & \sum_{i=1}^{p}|d_{i}|\Vert A-B\Vert_{o}^{1/2}\\
 & \leq & p^{1/2}\Vert A-B\Vert_{o}^{1/2}
\end{eqnarray*}
since $\sum_{i=1}^{p}d_{i}^{2}=1$. Thus, $\Vert A^{1/2}-B^{1/2}\Vert_{o}\leq p^{1/2}\Vert A-B\Vert_{o}^{1/2}$.
\end{proof}

\subsection{\label{sub:Invariance principle}Invariance principle}

In this section, I generalize results of \citet{Chatterjee2005} to
the case of random vectors ($p>1$). I also specialize results for
the case of linear functions because it allows to greatly improve
some constants in Chatterjee's derivation. Let $Z_{1},...,Z_{n}$
be a sequence of independent $p$-dimensional random vectors with
$E[Z_{j}]=0$ for all $j=1,...,n$. Denote $Z=(Z_{1},...,Z_{n})$.
For each $k=1,...,K$ and $m=1,...,p$, let $f_{km}(Z)=\sum_{j=1}^{n}a_{kjm}Z_{j,m}$
be some linear function of $Z$ where $a_{kjm}\geq0$ for each $k=1,...,K$,
$j=1,...,n$, and $m=1,...,p$, and $Z_{j,m}$ denotes $m$-th component
of vector $Z_{j}$. Let $U_{1},...,U_{n}$ be a sequence of independent
normal $p$-dimensional random vectors such that $E[U_{j}]=0$ and
$E[Z_{j}Z_{j}^{T}]=E[U_{j}U_{j}^{T}]$ for each $j=1,...,n$. Denote
$U=(U_{1},...,U_{n})$ and 
\[
C(g)=\Vert g^{\prime\prime\prime}\Vert_{\infty}+3\Vert g^{\prime\prime}\Vert_{\infty}+\Vert g^{\prime}\Vert_{\infty}
\]
Denote $a=\max_{k,j,m}a_{kjm}$. Then
\begin{thm}
\label{thm:Chatterjee}For any thrice differentiable function $g$
on $\mathbb{R}$, 
\begin{multline*}
E[g(\max_{k,m}f_{km}(Z))]-E[g(\max_{k,m}f_{km}(U))]\leq\\
(3/6^{1/3})pa(C(g)n)^{1/3}(\Vert g^{\prime}\Vert_{\infty}\log(Kp))^{2/3}\{\max_{j,m}E[|Z_{j,m}|^{3}]+\max_{j,m}E[|U_{j,m}|^{3}]\}^{1/3}
\end{multline*}
\end{thm}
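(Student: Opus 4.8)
The plan is to run a Lindeberg swapping argument in the spirit of \citet{Chatterjee2005}, but carried out on a smooth surrogate for the maximum and with all constants tracked explicitly. I would first replace the nonsmooth functional $F(z)=\max_{k,m}f_{km}(z)$ by its log-sum-exp smoothing
\[
F_{\eta}(z)=\eta^{-1}\log\Big(\textstyle\sum_{k=1}^{K}\sum_{m=1}^{p}e^{\eta f_{km}(z)}\Big),
\]
where $\eta>0$ is a free parameter chosen at the end. Since each summand is at most $e^{\eta F(z)}$ and at least one equals it, $F(z)\le F_{\eta}(z)\le F(z)+\eta^{-1}\log(Kp)$, so $|g(F_{\eta}(z))-g(F(z))|\le\Vert g'\Vert_{\infty}\eta^{-1}\log(Kp)$ pointwise; hence it suffices to bound $|E[g(F_{\eta}(Z))]-E[g(F_{\eta}(U))]|$ and add the smoothing cost $2\Vert g'\Vert_{\infty}\eta^{-1}\log(Kp)$.

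For the smooth part I would telescope block by block. Let $W^{(j)}$ be the array with the first $j$ blocks taken from $U$ and the remaining ones from $Z$, so $W^{(0)}=Z$, $W^{(n)}=U$, and write $E[g(F_{\eta}(Z))]-E[g(F_{\eta}(U))]=\sum_{j=1}^{n}\big(E[g(F_{\eta}(W^{(j-1)}))]-E[g(F_{\eta}(W^{(j)}))]\big)$. In the $j$-th summand the two arrays differ only in block $j$; conditioning on the common blocks and Taylor-expanding the smooth map $x\mapsto g(F_{\eta}(\dots,x,\dots))$, with $x\in\mathbb{R}^{p}$ being block $j$, to third order about $x=0$, the order-$0$, $1$ and $2$ terms have coefficients depending only on the common blocks, hence are independent of block $j$. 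Because $Z_{j}$ and $U_{j}$ have the same (zero) mean and the same covariance matrix, these terms have equal conditional expectations and cancel, leaving only the third-order Taylor remainders for $Z_{j}$ and for $U_{j}$.

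To bound a remainder I differentiate $g\circ F_{\eta}$ three times in block $j$. By the chain rule this splits into a $g'''$-term carrying $(\nabla F_{\eta})^{\otimes 3}$, a $g''$-term with three symmetric placements carrying $\nabla^{2}F_{\eta}\otimes\nabla F_{\eta}$, and a $g'$-term carrying $\nabla^{3}F_{\eta}$. Since the $f_{km}$ are linear with nonnegative coefficients bounded by $a$ and the softmax weights $e^{\eta f_{km}}/\sum_{k',m'}e^{\eta f_{k'm'}}$ are nonnegative and sum to one, the block-derivatives of $F_{\eta}$ obey bounds of the form $\Vert\nabla F_{\eta}\Vert\le Ca$, $\Vert\nabla^{2}F_{\eta}\Vert\le C\eta a^{2}$, $\Vert\nabla^{3}F_{\eta}\Vert\le C\eta^{2}a^{3}$, so the third block-derivatives of $g\circ F_{\eta}$ are at most $Ca^{3}(\Vert g'''\Vert_{\infty}+\eta\Vert g''\Vert_{\infty}+\eta^{2}\Vert g'\Vert_{\infty})\le Ca^{3}\eta^{2}C(g)$ for $\eta\ge1$ — the factor $3$ multiplying $\Vert g''\Vert_{\infty}$ in $C(g)=\Vert g'''\Vert_{\infty}+3\Vert g''\Vert_{\infty}+\Vert g'\Vert_{\infty}$ being exactly the count of symmetric placements. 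Multiplying by the Taylor coefficient $1/3!$, by the third absolute moment of the block (crude inequalities such as $\sum_{m_{1},m_{2},m_{3}}|x_{m_{1}}x_{m_{2}}x_{m_{3}}|\le p^{2}\sum_{m}|x_{m}|^{3}$ and $\sum_{m}E|Z_{j,m}|^{3}\le p\max_{m}E|Z_{j,m}|^{3}$ producing the factor $p^{3}$), and summing over the $n$ blocks, I obtain $|E[g(F_{\eta}(Z))]-E[g(F_{\eta}(U))]|\le B_{0}\eta^{2}$ with $B_{0}=\tfrac16 p^{3}na^{3}C(g)\{\max_{j,m}E|Z_{j,m}|^{3}+\max_{j,m}E|U_{j,m}|^{3}\}$.

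Finally I optimize over $\eta$. Writing $A=\Vert g'\Vert_{\infty}\log(Kp)$, the two bounds combine to $|E[g(F(Z))]-E[g(F(U))]|\le 2A\eta^{-1}+B_{0}\eta^{2}$, minimized at $\eta_{\ast}=(A/B_{0})^{1/3}$ with value $3A^{2/3}B_{0}^{1/3}$; substituting the values of $A$ and $B_{0}$ reproduces exactly $(3/6^{1/3})pa(C(g)n)^{1/3}(\Vert g'\Vert_{\infty}\log(Kp))^{2/3}\{\max_{j,m}E|Z_{j,m}|^{3}+\max_{j,m}E|U_{j,m}|^{3}\}^{1/3}$, the $3$ coming from $2+1$ in the optimization and the $6^{-1/3}$ from the $1/3!$ inside $B_{0}$. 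The main difficulty is the block-derivative analysis of $g\circ F_{\eta}$: one must extract precisely the right powers of $\eta$ and $a$ from the softmax and the chain rule, with constants sharp enough to land on $C(g)$ rather than a looser combination of $\Vert g'\Vert_{\infty},\Vert g''\Vert_{\infty},\Vert g'''\Vert_{\infty}$, and then balance the smoothing error $O(\eta^{-1})$ against the swapping error $O(\eta^{2})$ — this balancing is what converts the nonsmoothness of the maximum into the $n^{1/3}$ rate; some extra care is needed if $\eta_{\ast}<1$, where the $g'''$-contribution no longer carries the factor $\eta^{2}$.
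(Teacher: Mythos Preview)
Your proposal is correct and follows essentially the same route as the paper: log-sum-exp smoothing of the maximum, block-by-block Lindeberg replacement with a third-order Taylor expansion, explicit bounds on the softmax derivatives producing the $a^{3}\eta^{2}C(g)$ factor (with the $3$ in $C(g)$ coming from the three symmetric chain-rule placements), the $p^{3}$ moment count, and optimization of $2A\eta^{-1}+B_{0}\eta^{2}$ over $\eta$. The only cosmetic difference is that the paper tracks the partial-derivative products $A_{1},A_{2},A_{3}$ case by case (indices all distinct, one pair repeated, all equal) before bounding by the worst case times $p^{3}$, which is exactly your entrywise sup-norm bound with constant $1$; your remark that the regime $\eta_{\ast}<1$ needs separate care is a detail the paper leaves implicit.
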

\begin{rem*}
The constant in the inequality above can be improved somewhat by using
expressions for $A_{1}$, $A_{2}$, and $A_{3}$ in the proof given
below. I do not follow this step because that would mess up the statement
of the theorem significantly.\end{rem*}
\begin{proof}
As in \citet{Chatterjee2005}, for $\alpha\geq1$, let $F_{\alpha}:\,\mathbb{R}^{p\times n}$
be such that 
\[
F_{\alpha}(x)=\alpha^{-1}\log(\sum_{k,m}\exp(\alpha f_{km}(x)))
\]
 for all $x\in\mathbb{R}^{p\times n}$. Then
\begin{eqnarray*}
\max_{k,m}f_{km}(x) & = & \alpha^{-1}\log(\exp(\alpha\max_{k,m}f_{km}(x)))\\
 & \leq & \alpha^{-1}\log(\sum_{k,m}\exp(\alpha f_{km}(x)))\\
 & \leq & \alpha^{-1}\log(Kp\exp(\alpha\max_{k,m}f_{km}(x)))\\
 & \leq & \alpha^{-1}\log(Kp)+\max_{k,m}f_{km}(x)
\end{eqnarray*}
So,
\[
|\max_{k,m}f_{km}(x)-F_{\alpha}(x)|\leq\alpha^{-1}\log(Kp)
\]
Thus,
\begin{multline*}
|E[g(\max_{k,m}f_{km}(Z))]-E[g(\max_{k,m}f_{km}(U))]|\leq\\
2\Vert g^{\prime}\Vert_{\infty}\alpha^{-1}\log(Kp)+|E[g(F_{\alpha}(Z))]-E[g(F_{\alpha}(U))]|
\end{multline*}
For any $j=0,...,n$, denote $Z^{j}=(Z_{1},...,Z_{j},U_{j+1},...,U_{n})$.
Then
\[
|E[g(F_{\alpha}(Z))]-E[g(F_{\alpha}(U))]|\leq\sum_{j=1}^{n}|E[g(F_{\alpha}(Z^{j})]-E[g(F(Z^{j-1}))]|
\]
For $Z_{1},...,Z_{j-1},U_{j+1},...,U_{n}$ fixed, denote $l(Z_{j})=g(F_{\alpha}(Z^{j}))$.
By Taylor formula,
\begin{eqnarray*}
g(F_{\alpha}(Z^{j})-g(F_{\alpha}(Z^{j-1}) & = & l(Z_{j})-l(U_{j})\\
 & = & \sum_{m_{1}}\frac{\partial l(0)}{\partial Z_{jm_{1}}}(Z_{jm_{1}}-U_{jm_{1}})\\
 & + & (1/2)\sum_{m_{1},m_{2}}\frac{\partial^{2}l(0)}{\partial Z_{jm_{1}}\partial Z_{jm_{2}}}(0)(Z_{jm_{1}}Z_{jm_{2}}-U_{jm_{1}}U_{jm_{2}})\\
 & + & (1/6)\sum_{m_{1},m_{2},m_{3}}\frac{\partial^{3}l(\tilde{Z})}{\partial Z_{jm_{1}}\partial Z_{jm_{2}}\partial Z_{jm_{3}}}Z_{jm_{1}}Z_{jm_{2}}Z_{jm_{3}}\\
 & - & (1/6)\sum_{m_{1},m_{2},m_{3}}\frac{\partial^{3}l(\tilde{U})}{\partial Z_{jm_{1}}\partial Z_{jm_{2}}\partial Z_{jm_{3}}}U_{jm_{1}}U_{jm_{2}}U_{jm_{3}}
\end{eqnarray*}
where $\tilde{Z}$ and $\tilde{U}$ are on the lines connecting $0$
and $Z_{j}$ and $0$ and $U_{j}$ correspondingly. By independence,
\begin{multline*}
|E[g(F_{\alpha}(Z^{j})]-E[g(F(Z^{j-1}))]|\\
\leq(1/6)\sum_{m_{1},m_{2},m_{3}}\sup_{X\in\mathbb{R}^{p\times n}}\left|\frac{\partial^{3}g(F_{\alpha}(X))}{\partial X_{jm_{1}}\partial X_{jm_{2}}\partial X_{jm_{3}}}\right|(E[|Z_{jm_{1}}Z_{jm_{2}}Z_{jm_{3}}|]+E[|U_{jm_{1}}U_{jm_{2}}U_{jm_{3}}|])
\end{multline*}
By Holder inequality, 
\[
E[|Z_{jm_{1}}Z_{jm_{2}}Z_{jm_{3}}|]\leq\max_{m}E[|Z_{jm}|^{3}]
\]
 and 
\[
E[|U_{jm_{1}}U_{jm_{2}}U_{jm_{3}}|]\leq\max_{m}E[|U_{jm}|^{3}]
\]

Denote
\[
A_{1}=\sup_{X\in\mathbb{R}^{p\times n}}\left|\frac{\partial F_{\alpha}(X)}{\partial X_{jm_{1}}}\frac{\partial F_{\alpha}(X)}{\partial X_{jm_{2}}}\frac{\partial F_{\alpha}(X)}{\partial X_{jm_{3}}}\right|
\]
\begin{multline*}
A_{2}=\sup_{X\in\mathbb{R}^{p\times n}}\left|\frac{\partial F_{\alpha}(X)}{\partial X_{jm_{1}}}\frac{\partial^{2}F_{\alpha}(X)}{\partial X_{jm_{2}}\partial X_{jm_{3}}}\right|+\\
\sup_{X\in\mathbb{R}^{p\times n}}\left|\frac{\partial F_{\alpha}(X)}{\partial X_{jm_{2}}}\frac{\partial^{2}F_{\alpha}(X)}{\partial X_{jm_{1}}\partial X_{jm_{3}}}\right|+\sup_{X\in\mathbb{R}^{p\times n}}\left|\frac{\partial F_{\alpha}(X)}{\partial X_{jm_{3}}}\frac{\partial^{2}F_{\alpha}(X)}{\partial X_{jm_{1}}\partial X_{jm_{2}}}\right|
\end{multline*}
and
\[
A_{3}=\sup_{X\in\mathbb{R}^{p\times n}}\left|\frac{\partial^{3}F_{\alpha}(X)}{\partial X_{jm_{1}}\partial X_{jm_{2}}\partial X_{jm_{3}}}\right|
\]
Then
\[
\sup_{X\in\mathbb{R}^{p\times n}}\left|\frac{\partial^{3}g(F_{\alpha}(X))}{\partial X_{jm_{1}}\partial X_{jm_{2}}\partial X_{jm_{3}}}\right|\leq\Vert g^{\prime\prime\prime}\Vert_{\infty}A_{1}+\Vert g^{\prime\prime}\Vert_{\infty}A_{2}+\Vert g^{\prime}\Vert_{\infty}A_{3}
\]
So, it only remains to bound partial derivatives of $F_{\alpha}$. 

To simplify notation, denote $B_{km}=\exp(\alpha f_{km}(X))$ for
$k=1,...,K$ and $m=1,...,p$. Then
\[
\frac{\partial F_{\alpha}(X)}{\partial X_{jm_{1}}}=\frac{\sum_{k}B_{km_{1}}a_{kjm_{1}}}{\sum_{k,m}B_{km}}
\]
The expression on the right hand side of the formula above is the
expectation of a random variable which takes value $a_{kjm_{1}}$
with probability $B_{km_{1}}/\sum_{km}B_{km}$ for $k=1,...,K$ and
$0$ with probability $1-\sum_{k}B_{km_{1}}/\sum_{km}B_{km}$. If
$m_{1}$, $m_{2}$, and $m_{3}$ are all different, then 
\[
\frac{\partial F_{\alpha}(X)}{\partial X_{jm_{1}}}\frac{\partial F_{\alpha}(X)}{\partial X_{jm_{2}}}\frac{\partial F_{\alpha}(X)}{\partial X_{jm_{3}}}
\]
will be the product of expectations of 3 random variables with nonitersecting
supports. It is easy to see that this product will be not greater
than $a^{3}/27$. All other cases can be treated by the same argument.
We have 
\[
A_{1}\leq\begin{cases}
\begin{array}{cc}
a^{3}/27 & \text{if }m_{1}\text{, }m_{2}\text{, and }m_{3}\text{ are all different}\\
4a^{3}/27 & \text{if }m_{1}=m_{2}\neq m_{3}\\
a^{3} & \text{if }m_{1}=m_{2}=m_{3}
\end{array}\end{cases}
\]
If $m_{1}$, $m_{2}$, and $m_{3}$ are all different, then 
\[
\frac{\partial^{2}F_{\alpha}(X)}{\partial X_{jm_{1}}\partial X_{jm_{2}}}=-\alpha\frac{\sum_{k}B_{km_{1}}a_{kjm_{1}}\sum_{k}B_{km_{2}}a_{kjm_{2}}}{(\sum_{km}B_{km})^{2}}
\]
and
\[
\frac{\partial^{3}F_{\alpha}(X)}{\partial X_{jm_{1}}\partial X_{jm_{2}}\partial X_{jm_{3}}}=2\alpha^{2}\frac{\sum_{k}B_{km_{1}}a_{kjm_{1}}\sum_{k}B_{km_{2}}a_{kjm_{2}}\sum_{k}B_{km_{3}}a_{kjm_{3}}}{(\sum_{km}B_{km})^{3}}
\]
If $m_{1}=m_{2}\neq m_{3}$, then
\[
\frac{\partial^{2}F_{\alpha}(X)}{\partial X_{jm_{1}}\partial X_{jm_{2}}}=-\alpha\frac{(\sum_{k}B_{km_{1}}a_{kjm_{1}})^{2}}{(\sum_{km}B_{km})^{2}}+\alpha\frac{\sum_{k}B_{km_{1}}a_{kjm_{1}}^{2}}{\sum_{km}B_{km}}
\]
and
\begin{multline*}
\frac{\partial^{3}F_{\alpha}(X)}{\partial X_{jm_{1}}\partial X_{jm_{2}}\partial X_{jm_{3}}}\\
=2\alpha^{2}\frac{(\sum_{k}B_{km_{1}}a_{kjm_{1}})^{2}\sum_{k}B_{km_{3}}a_{kjm_{3}}}{(\sum_{km}B_{km})^{3}}-\alpha^{2}\frac{\sum_{k}B_{km_{1}}a_{kjm_{1}}^{2}\sum_{k}B_{km_{3}}a_{kjm_{3}}}{(\sum_{km}B_{km})^{2}}
\end{multline*}
If $m_{1}=m_{2}=m_{3}$, then
\begin{multline*}
\frac{\partial^{3}F_{\alpha}(X)}{\partial X_{jm_{1}}\partial X_{jm_{2}}\partial X_{jm_{3}}}\\
=\alpha^{2}\frac{\sum_{k}B_{km_{1}}a_{kjm_{1}}^{3}}{(\sum_{km}B_{km})}-3\alpha^{2}\frac{\sum_{k}B_{km_{1}}a_{kjm_{1}}^{2}\sum_{k}B_{km_{1}}a_{kjm_{1}}}{(\sum_{km}B_{km})^{2}}+2\alpha^{2}\frac{(\sum_{k}B_{km_{1}}a_{kjm_{1}})^{3}}{(\sum_{km}B_{km})^{3}}
\end{multline*}
So,
\[
A_{2}\leq\begin{cases}
\begin{array}{cc}
3\alpha a^{3}/27 & \text{if }m_{1}\text{, }m_{2}\text{, and }m_{3}\text{ are all different}\\
59\alpha a^{3}/108 & \text{if }m_{1}=m_{2}\neq m_{3}\\
3\alpha a^{3} & \text{if }m_{1}=m_{2}=m_{3}
\end{array}\end{cases}
\]
and
\[
A_{3}\leq\begin{cases}
\begin{array}{cc}
2\alpha^{2}a^{3}/27 & \text{if }m_{1}\text{, }m_{2}\text{, and }m_{3}\text{ are all different}\\
8\alpha^{2}a^{3}/27 & \text{if }m_{1}=m_{2}\neq m_{3}\\
\alpha^{2}a^{3} & \text{if }m_{1}=m_{2}=m_{3}
\end{array}\end{cases}
\]
Therefore,
\begin{multline*}
|E[g(\max_{k,m}f_{km}(Z))]-E[g(\max_{k,m}f_{km}(U))]|\\
\leq2\Vert g^{\prime}\Vert_{\infty}\alpha^{-1}\log(Kp)+\frac{np^{3}\alpha^{2}a^{3}}{6}C(g)\left[\max_{j,m}E[|Z_{jm}|^{3}+\max_{j,m}E[|U_{jm}|^{3}]]\right]
\end{multline*}
Optimizing with respect to $\alpha$ yields the result.
\end{proof}

\subsection{\label{sub:Primitive-Conditions}Primitive Conditions for Assumption
1}

In this section, I give a counter-example for the statement that for
assumption \ref{ass:Design-points} to hold, it siffices to assume
that $\{X_{i}:\, i=1,...,n\}$ are sampled from a distribution that
is absolutely continuous with respect to Lebegue measure, has bounded
support, and whose density is bounded from above and away from zero
on the support. I also prove that assumption \ref{ass:Design-points}
holds if, in addition to above conditions, one assumes that the support
is a convex set.
\begin{lem}
\label{lem:wrong statement}There exist a probability distribution
on $[-1,1]^{2}$ which is uniform on its support such that if $\{X_{i}:\, i=1,...,n\}$
are sampled from this distribution, then assumption \ref{ass:Design-points}
fails.\end{lem}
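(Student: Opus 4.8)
The plan is to take a uniform distribution on a planar region with a cusp and show that the lower bound in Assumption~\ref{ass:Design-points}(iii) is forced to fail near the cusp tip, no matter how the constants $C_{1},C_{2}$ are chosen. Concretely, let the support be the ``parabolic horn''
\[
S=\{(x_{1},x_{2}):\,0\leq x_{1}\leq1,\ 0\leq x_{2}\leq x_{1}^{2}\}\subset[-1,1]^{2},
\]
and let $\{X_{i}\}_{i=1}^{\infty}$ be i.i.d.\ uniform on $S$, so that the common density equals $1/\operatorname{area}(S)=3$ on $S$ and $0$ elsewhere. This distribution is absolutely continuous with respect to Lebesgue measure, has bounded support, has density bounded away from zero on its support, and is uniform on its support --- it satisfies every hypothesis under which Assumption~\ref{ass:Design-points} was claimed to hold --- yet $S$ is not convex, so it is not covered by Lemma~\ref{lemma for support}.

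The geometric point is that $S$ pinches at the tip $(0,0)$. For $x=(x_{1},x_{2})\in S$ with $x_{1}\leq h\leq1/2$, every $u$ in the ball $B_{h}(x)=\{u:\Vert u-x\Vert\leq h\}$ satisfies $0\leq u_{1}\leq x_{1}+h\leq2h$, so
\[
\operatorname{area}\bigl(B_{h}(x)\cap S\bigr)\leq\operatorname{area}\{u\in S:\,u_{1}\leq2h\}=\int_{0}^{2h}t^{2}\,dt=\tfrac{8}{3}h^{3}.
\]
Thus a ball of radius $h$ centred near the tip meets $S$ in a set of area $O(h^{3})=O(h^{d+1})$, one power of $h$ smaller than the $\asymp h^{d}$ area one gets at a generic point of $S$. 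Since by Assumption~\ref{ass:bandwidth values} the bandwidths satisfy $h_{\min}=h_{\min,n}\to0$, this extra factor $h_{\min,n}$ is exactly what will destroy the lower bound.

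Fix a sample size $n$ large enough that $h_{\min,n}\leq1/2$. Since $P\{X_{1,1}\leq h_{\min,n}\}=3\int_{0}^{h_{\min,n}}t^{2}\,dt=h_{\min,n}^{3}>0$, almost surely there is an index $i=i(n)\in\mathbb{N}$ with $X_{i,1}\leq h_{\min,n}$; this is where the ``for all $i\in\mathbb{N}$'' clause of Assumption~\ref{ass:Design-points}(iii) is used, and if one insists that $i\leq n$ it suffices to add $nh_{\min,n}^{3}\to\infty$, which holds whenever $h_{\min,n}$ decays polynomially and in particular under Assumption~\ref{ass: test function}(ii) with $d=2$. For any such $i$, every $j\leq n$ counted in $M_{h_{\min,n}}(X_{i})$ has $X_{j,1}\leq X_{i,1}+h_{\min,n}\leq2h_{\min,n}$, so $M_{h_{\min,n}}(X_{i})\leq N_{n}:=|\{j\leq n:\,X_{j,1}\leq2h_{\min,n}\}|$, and by the area bound above $N_{n}$ is a $\mathrm{Binomial}(n,8h_{\min,n}^{3})$ variable with $E[N_{n}]=8nh_{\min,n}^{3}$. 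For an arbitrary constant $C_{1}>0$, Markov's inequality gives
\[
P\bigl\{N_{n}\geq C_{1}nh_{\min,n}^{2}\bigr\}\leq\frac{E[N_{n}]}{C_{1}nh_{\min,n}^{2}}=\frac{8h_{\min,n}}{C_{1}}\longrightarrow0,
\]
so with probability tending to one $M_{h_{\min,n}}(X_{i})\leq N_{n}<C_{1}nh_{\min,n}^{2}=C_{1}nh_{\min,n}^{d}$, i.e.\ the lower bound $C_{1}nh^{d}\leq M_{h}(X_{i})$ of Assumption~\ref{ass:Design-points}(iii) fails at $h=h_{\min,n}$ for this index $i$. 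Since $C_{1}>0$ was arbitrary, there is no choice of constants for which Assumption~\ref{ass:Design-points} holds for large $n$ with probability tending to one, which is the claimed counter-example. The upper bound in Assumption~\ref{ass:Design-points}(iii) is, by contrast, unproblematic: $M_{h}(X_{i})\leq|\{j\leq n:X_{j}\in B_{h}(X_{i})\}|$ and $\operatorname{area}(B_{h}(X_{i}))\leq\pi h^{2}$, so a union bound over $i$ plus a Chernoff estimate give $\max_{i}M_{h}(X_{i})=O_{p}(nh^{d})$, and the failure is purely a lower-bound, hence purely geometric, phenomenon.

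The one step that requires care is the interplay of scales: one needs some $X_{i}$ whose first coordinate lies below $h_{\min,n}$ while simultaneously the number of sample points within Euclidean distance $h_{\min,n}$ of it is of order $nh_{\min,n}^{3}=o(nh_{\min,n}^{d})$; the former is automatic from the infinitude of the i.i.d.\ sample, and the latter follows from $h_{\min,n}\to0$ alone. Everything else is elementary. The moral is that a cusp in the support defeats the uniform lower density-of-points requirement however smooth --- indeed, constant --- the density itself may be, which is precisely why the convexity of the support enters Lemma~\ref{lemma for support}.
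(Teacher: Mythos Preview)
Your proof is correct and uses the same geometric idea as the paper --- a uniform distribution on a planar cusp --- but the execution is genuinely different. The paper parametrizes the cusp by an exponent $\alpha$ (the support is $\{|x_{2}|\leq(1+\alpha)x_{1}^{\alpha}/2\}$) and constructs a specific isolation event $A_{n}$: exactly one of $X_{1},\dots,X_{n}$ has first coordinate below $\underline{h}\asymp n^{-1/(1+\alpha)}$ while all others lie above $\overline{h}\asymp n^{-1/(1+\alpha)}$. It then shows $P(A_{n})\to C_{1}e^{-C_{2}}>0$, so that on $A_{n}$ the near-tip point has no neighbor within distance $\overline{h}-\underline{h}\asymp n^{-1/(1+\alpha)}$; to make this gap exceed $h_{\min}$ the paper must take $\alpha$ large enough relative to the rate of $h_{\min}$. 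Your argument instead fixes the exponent (the parabola $x_{2}\leq x_{1}^{2}$), invokes the ``for all $i\in\mathbb{N}$'' clause of Assumption~\ref{ass:Design-points}(iii) to guarantee a point with $X_{i,1}\leq h_{\min,n}$ almost surely, and then bounds $M_{h_{\min,n}}(X_{i})$ by a binomial count with mean $8nh_{\min,n}^{3}=o(nh_{\min,n}^{2})$, which Markov converts into failure of the lower bound with probability tending to one. So your route avoids tuning the cusp to $h_{\min}$, yields a stronger conclusion (failure w.p.~$\to1$ rather than with limiting positive probability), and is shorter; the price is that it leans on the somewhat unusual ``all $i\in\mathbb{N}$'' formulation of the assumption, though you correctly note that the alternative $i\leq n$ goes through under the mild side condition $nh_{\min,n}^{3}\to\infty$.
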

\begin{proof}
As an example of such a probability distribution, consider the uniform
distribution on 
\[
S=\{(x_{1},x_{2})\in[-1,1]^{2}:\, x_{1}\geq0;\,-(1+\alpha)x_{1}^{\alpha}/2\leq x_{2}\leq(1+\alpha)x_{1}^{\alpha}/2\}
\]
for some $\alpha>0$. For fixed $i$, the probability that $X_{i,1}\leq\underline{h}$
is $\underline{p}=\underline{h}^{1+\alpha}$, and the probability
that $X_{i,1}>\overline{h}$ is $\overline{p}=1-\overline{h}^{1+\alpha}$.
Let $A_{n}$ be an event that $X_{i,1}\leq\underline{h}$ for exactly
one $i=1,...,n$ whereas $X_{i,1}>\overline{h}$ for all other $i=1,...,n$
with $\underline{h}<\overline{h}$. The probability of this event
is 
\[
P(A_{n})=n\underline{p}\overline{p}^{n-1}=n\underline{h}^{1+\alpha}(1-\overline{h}^{1+\alpha})^{n-1}
\]
Set $\underline{h}=(C_{1}/n)^{1/(1+\alpha)}$ and $\overline{h}=(C_{2}/n)^{1/(1+\alpha)}$
with $0<C_{1}<C_{2}<1$. Then we can find the limit of $P(A_{n})$
as $n\rightarrow\infty$:
\[
\lim_{n\rightarrow\infty}P(A_{n})=\lim_{n\rightarrow\infty}C_{1}(1-C_{2}/n)^{n-1}=C_{1}e^{-C_{2}}>0
\]
Note that on $A_{n}$, there is an observation $X_{i}$ such that
there is no other observations in the ball with center at $X_{i}$
and radius $(C_{2}^{1/(1+\alpha)}-C_{1}^{1/(1+\alpha)})/n^{1/(1+\alpha)}$.
The result now follows by choosing $\alpha$ sufficiently large such
that $n^{-1/(1+\alpha)}$ converges to zero slower then $h_{\min}$.
\end{proof}
Now I give a sufficient primitive condition for assumption \ref{ass:Design-points}.
\begin{lem}
\label{lemma for support}If $\{X_{i}:\, i=1,...,n\}$ are sampled
from a distribution which is absolutely continuous with respect to
Lebegue measure, has bounded and convex support $S\subset\mathbb{R}^{d}$,
and whose density is bounded from above and away from zero on the
support, then assumption \ref{ass:Design-points} holds for large
$n$ almost surely.\end{lem}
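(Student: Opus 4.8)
The plan is to verify the two nontrivial parts of Assumption \ref{ass:Design-points}. Since the whole analysis is conditional on $\{X_i\}_{i=1}^n$, part (i) is automatic, and since the support $S$ is bounded, part (ii) holds with any $\bar C>\sup_{x\in S}\Vert x\Vert$ (which is the value one then also takes as $h_{\max}$). For part (iii), observe that almost surely every $X_i$ lies in $S$, so it suffices to show that, almost surely, for all large $n$, all $h\in H_n$, and \emph{all} $x\in S$,
\[
C_1 nh^d\le M_h(x)\le C_2 nh^d,\qquad M_h(x):=\#\{j\le n:\Vert X_j-x\Vert\le h\},
\]
for fixed constants $0<C_1<C_2<\infty$; the uniformity over $x\in S$ is exactly what takes care of the requirement in Assumption \ref{ass:Design-points}(iii) that the bound hold for all $i\in\mathbb N$, not only $i\le n$. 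Writing $\mu$ for the law of $X_1$, we have $M_h(x)\sim\mathrm{Bin}(n,\mu(B_h(x)))$, so everything reduces to two-sided control of $\mu(B_h(x))$ together with a uniform concentration estimate.

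The crux is a deterministic geometric fact: there is $\kappa_0=\kappa_0(d,S)>0$ with $\kappa_0 h^d\le\mathrm{Leb}(B_h(x)\cap S)\le v_d h^d$ for all $x\in S$ and $h\in(0,\bar C]$, where $v_d$ is the volume of the unit ball. The upper bound is trivial. For the lower bound I would use convexity: since the density is bounded away from zero and integrates to one, $0<\mathrm{Leb}(S)<\infty$, so the convex set $S$ has nonempty interior; fix $x_0$ and $r>0$ with $B_r(x_0)\subseteq S$ and put $D=\mathrm{diam}(S)$, so $r\le D/2$. For $x\in S$ and $h\le 2D$, convexity gives $B_{(1-\lambda)r}(\lambda x+(1-\lambda)x_0)\subseteq S$ for every $\lambda\in[0,1]$; choosing $1-\lambda=h/(2D)$, this ball has radius $hr/(2D)$ and its centre lies within $h/2$ of $x$, so it is contained in $B_h(x)$, giving $\mathrm{Leb}(B_h(x)\cap S)\ge v_d(r/(2D))^d h^d$; the remaining range $2D<h\le\bar C$ (nonempty only if $\bar C>2D$) is handled directly, since then $S\subseteq B_h(x)$ and $\mathrm{Leb}(S)\ge v_d r^d$. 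Combining with the density bounds $\underline c\le p\le\bar c$ on $S$ yields $p_1 h^d\le\mu(B_h(x))\le p_2 h^d$ for all $x\in S$ and $h\in(0,\bar C]$, with $0<p_1<p_2$. I expect this geometric lemma to be the main obstacle: it is precisely what fails for the cusp-shaped support in Lemma \ref{lem:wrong statement}, and convexity is exactly the hypothesis that rules such cusps out.

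The remaining step is uniform concentration of the binomial empirical measure over the class of balls. Fixing $n$, set $\delta_n=h_{\min}/2$ and let $\mathcal N_n\subseteq S$ be a maximal $\delta_n$-separated subset of $S$; a volume comparison gives $|\mathcal N_n|\le(CD/h_{\min})^d$, which is polynomial in $n$. For $x\in S$ pick $x'\in\mathcal N_n$ with $\Vert x-x'\Vert\le\delta_n$; then $B_{h-\delta_n}(x')\subseteq B_h(x)\subseteq B_{h+\delta_n}(x')$, and since $h\ge h_{\min}=2\delta_n$ all three radii are within a constant factor of $h$. For each $x'\in\mathcal N_n$ and each $\rho$ in the finite set $\{h\pm\delta_n:h\in H_n\}$ (cardinality $\le 2|H_n|\le C\log n$), $M_\rho(x')\sim\mathrm{Bin}(n,\mu(B_\rho(x')))$ has mean $\asymp n\rho^d\ge c\, nh_{\min}^d$, so a Chernoff bound gives deviation from the mean by more than a factor $1/2$ with probability $\le 2\exp(-c'nh_{\min}^d)$; a union bound then bounds the probability of the "bad" stage-$n$ event by $\pi_n:=(\text{poly in }n)\exp(-c'nh_{\min}^d)$. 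Under the conditions on the bandwidth sequence maintained in the paper — $h_{\min}\to0$ only polynomially (Assumption \ref{ass:bandwidth values}) and slowly enough that $nh_{\min}^d/\log n\to\infty$ (which follows, e.g., from Assumption \ref{ass: test function}) — one has $\sum_n\pi_n<\infty$, so by Borel--Cantelli the bad event occurs only finitely often almost surely. On its complement, for all large $n$, every $h\in H_n$ and every $x\in S$ satisfy $\tfrac12 n\mu(B_{h-\delta_n}(x'))\le M_h(x)\le\tfrac32 n\mu(B_{h+\delta_n}(x'))$, which by the geometric lemma gives the displayed two-sided bound with $C_1,C_2$ depending only on $d$, $p_1$, $p_2$; in particular this holds at $x=X_i$ for every $i\in\mathbb N$, establishing Assumption \ref{ass:Design-points}(iii) and completing the proof. (The union-bound step can alternatively be replaced by a Vapnik--Chervonenkis inequality for the class of Euclidean balls, whose VC dimension is $d+1$; the conclusion is the same.)
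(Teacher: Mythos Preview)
Your proposal is correct and follows essentially the same strategy as the paper: a geometric lower bound on $\mathrm{Leb}(B_h(x)\cap S)$ via convexity, reduction to a finite family of balls by a covering/net argument, a Hoeffding/Chernoff bound with a union bound, and Borel--Cantelli. Two minor differences are worth noting: your proof of the geometric fact (pulling a fixed interior ball $B_r(x_0)$ toward $x$ by a convex combination) is the standard argument and is cleaner than the paper's sketch via diameters of hyperplane slices; and you make explicit the uniformity over all $x\in S$, which is what guarantees Assumption~\ref{ass:Design-points}(iii) for \emph{all} $i\in\mathbb N$, a point the paper leaves implicit.
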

\begin{proof}
Consider sets of the following form: $I(a_{1},...,a_{d},c)=S\cap\{x:\, a_{1}x_{1}+...+a_{d}x_{d}=c\}$
with $a_{1}^{2}+...+a_{d}^{2}=1$. These are convex sets. It follows
from the fact that the density is bounded from above that $\inf_{a_{1},...,a_{d}}\sup_{c}D(I(a_{1},...,a_{d},c))>0$
where $D(\cdot)$ denotes the diameter of the set. So, there exists
some constant $0<C\leq1$ such that for all $r<1$ and all $x\in S$,
each ball with center at $x$ and radius $r$ has at least fraction
$C$ of its Lebegue measure inside of the support $S$: $\lambda(B(x,r)\cap S)/\lambda(B(x,r))>C$. 

Note that $\delta$-covering numbers of the set $S$ satisfy $N(\delta)\lesssim\delta^{d}$
as $\delta\rightarrow0$, i.e. there exists some constant $C>0$ such
that $N(\delta,S)<C/\delta^{d}$. Consider the lower bound. For each
$h\in H_{n}$, consider the set of covering balls with centers $G_{h,1}$,...,$G_{h,N(h)}$
and radii $\delta_{h}=h/2$. Then for each $X_{i}$ and $h\in H_{n}$,
there exists some $j\in\{1,...,N(h)\}$ such that $B(X_{i},h)\supset B(G_{h,j},\delta_{h})$.
Thus, it is enough to prove the lower bound for the number of observations
droping into these covering balls. Since the density is bounded away
from zero, there exists some constant $C>0$ such that for each $h\in H_{n}$
and $j=1,...,N(h)$, $P(X_{i}\in B(G_{h,j},\delta_{h}))>Ch^{d}$.
Denote $I_{h,j}(X_{i})=I\{X_{i}\in B(G_{h,j},\delta_{h})\}$. A Hoeffding
inequality (see proposition 1.3.5 in \citet{Dudley1999}) gives
\[
P\{\sum_{i=1}^{n}I_{h,j}(X_{i})/n<Ch^{d}/2\}\leq P\{\sum_{i=1}^{n}I_{h,j}(X_{i})/n-E[I_{h,d}(X_{i})]<-Ch^{d}/2)\}\leq C\exp(-Cnh^{d})
\]
Then by union bound,
\[
P(\cup_{h\in H_{n},j=1,...,N(h)}\{\sum_{i=1}^{n}I_{h,j}(X_{i})/n<Ch^{d}/2\})\leq Ch_{\min}^{-d}\log n\exp(-Cnh_{\min}^{d})\rightarrow0
\]
as $n\rightarrow\infty$. Summing the probabilities above over $n$,
we conclude, by the Borel-Cantelli lemma, that the lower bound in
assumption \ref{ass:Design-points}(iii) holds for large $n$ almost
surely. A similar argument gives the upper bound.
\end{proof}

\subsection{\label{sub:Anticoncentration-Inequality}Anticoncentration Inequality
for the Maximum of Gaussian Random Variables}

In this section, I derive an upper bound for the pdf of the maximum
of correlated Gaussian random variables satisfying certain assumptions.
Let $\{Z_{i}:\, i=1,...,S\}$ be a set of standard Gaussian random
variables. Assume that this set contains at least $M$ independent
random variables. Define $W=\max_{i=1,...,S}Z_{i}$. Let $m$ denote
the median of $W$ and $f_{W}(\cdot)$ denote its pdf. Then
\begin{lem}
\label{anticoncentration}$\sup_{w>m}f_{W}(w)\leq C\sqrt{\log(M+1)}S/M$
for some universal constant $C$.\end{lem}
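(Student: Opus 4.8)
The plan is to combine two ingredients: a union bound showing that the density of a maximum is dominated by the sum of the marginal densities, and the observation that above its median each standard normal density $\varphi(w)$ is small because the presence of $M$ independent coordinates forces the median of $W$ to be large. Throughout, $\Phi$ and $\varphi$ denote the standard normal distribution function and its density.

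First I would show $f_W(w)\le S\varphi(w)$ for every $w$ at which the density exists. For $\delta>0$, the event $\{w<W\le w+\delta\}$ forces the index attaining the maximum to have its coordinate in $(w,w+\delta]$, so $\{w<W\le w+\delta\}\subseteq\bigcup_{i=1}^{S}\{w<Z_i\le w+\delta\}$ and hence $P(w<W\le w+\delta)\le S\bigl(\Phi(w+\delta)-\Phi(w)\bigr)$; dividing by $\delta$ and letting $\delta\downarrow0$ gives the claim. Since $\varphi$ is decreasing on $[0,\infty)$, it then suffices to prove $m\ge0$ and $\varphi(m)\le C\sqrt{\log(M+1)}/M$.

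Next I would exploit independence. Let $A$ index $M$ independent coordinates and set $W_A=\max_{i\in A}Z_i$; then $W\ge W_A$ pointwise, so the median $m$ of $W$ is at least the median $m_M$ of $W_A$, characterised by $\Phi(m_M)^M=1/2$, i.e. $1-\Phi(m_M)=1-2^{-1/M}$. The elementary bounds $x/2\le1-e^{-x}\le x$ on $[0,1]$, used with $x=(\ln2)/M$, give $\tfrac{\ln2}{2M}\le1-\Phi(m_M)\le\tfrac{\ln2}{M}$; in particular $1-\Phi(m_M)\le1/2$, so $m_M\ge0$, hence $m\ge0$ and $\varphi(w)\le\varphi(m)\le\varphi(m_M)$ for all $w>m$.

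Finally I would convert the tail bound on $m_M$ into a density bound. The Mills-ratio inequality $1-\Phi(w)\ge\frac{w}{1+w^2}\varphi(w)$ for $w>0$ yields $\varphi(m_M)\le\bigl(m_M+m_M^{-1}\bigr)\bigl(1-\Phi(m_M)\bigr)\le\bigl(m_M+m_M^{-1}\bigr)\tfrac{\ln2}{M}$, while the Gaussian tail bound $1-\Phi(w)\le\tfrac12e^{-w^2/2}$ combined with $1-\Phi(m_M)\ge\tfrac{\ln2}{2M}$ gives $m_M^2\le2\log M+O(1)$, hence $m_M\le C_1\sqrt{\log(M+1)}$. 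Since moreover $1-\Phi(m_M)\le\tfrac{\ln2}{M}$ keeps $m_M$ bounded away from $0$ by an absolute constant once $M\ge2$, the factor $m_M+m_M^{-1}$ is at most $C_2\sqrt{\log(M+1)}$, so $\varphi(m)\le\varphi(m_M)\le C\sqrt{\log(M+1)}/M$ for $M\ge2$; the case $M=1$ is immediate from $f_W\le S\varphi(0)=S/\sqrt{2\pi}$. Together with the first step this proves the lemma. The step I expect to need the most care is precisely this last one, in the small-$M$ (equivalently small-$m_M$) regime where the Mills-ratio lower bound degenerates and must be supplemented by the crude estimate $m_M\ge c>0$; one must also check that $m_M+m_M^{-1}$ contributes only a $\sqrt{\log M}$ factor, not a full $\log M$, which is what leaves $\sqrt{\log(M+1)}$ rather than $\log(M+1)$ in the bound.
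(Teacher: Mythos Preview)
Your proof is correct and follows essentially the same strategy as the paper's: bound $f_W(w)\le S\varphi(w)$ by a union bound, then use the $M$ independent coordinates to get a lower bound on the median and control $\varphi$ at that point via Mills-ratio estimates and the $\sqrt{\log M}$ upper bound. The only cosmetic difference is that the paper introduces an auxiliary point $y$ defined through a Mills-ratio-adjusted equation $(1-\phi(y)/(Cy))^M=1/2$ and shows $y\le m$, whereas you work directly with the median $m_M$ of $\max_{i\in A}Z_i$ satisfying $\Phi(m_M)^M=1/2$ and $m_M\le m$; your choice is arguably more natural and the subsequent estimates are the same.
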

\begin{proof}
The case $M=1$ is trivial. So, assume that $M>1$. Let $\Phi(\cdot)$
and $\phi(\cdot)$ denote the cdf and the pdf of the standard Gaussian
distribution. Since there is at least $M$ independent standard Gaussian
random variables, $\Phi^{M}(m)\geq1/2$ and $m>0$. So, there exists
some constant $C>0$ such that $\Phi(x)\leq1-\phi(x)/(Cx)$ for any
$x\geq m$ (see proposition 2.2.1 in \citet{Dudley1999})and
\[
\left(1-\frac{\phi(m)}{Cm}\right)^{M}\geq\frac{1}{2}
\]
Let $y$ denote the unique positive real number such that
\begin{equation}
\left(1-\frac{\phi(y)}{Cy}\right)^{M}=\frac{1}{2}\label{eq: normal bound}
\end{equation}
Note that $y\leq m$. In addition, $y$ is increasing in $M$, so
there exists some constant $C_{1}>0$ such that $y>C_{1}$ for any
$M\geq2$. Taking logs of both sides of equation (\ref{eq: normal bound})
and noting that $\log(1+x)\leq x$ for any $x\in\mathbb{R}$, we obtain
$\phi(y)\leq y\log C/M$ for some constant $C>1$. On the other hand,
$\phi(y)/(Cy)<1/2$. So inequality $\log(1+x)\geq2x$ for any $x\in(-1/2,0]$
gives
\[
\frac{2M\phi(y)}{Cy}\geq\log2
\]
Combining this inequality with $y>C_{1}$ yields $y\leq C\sqrt{\log(M+1)}$
for any $M$ if $C$ is sufficiently large. Therefore, $\phi(y)\leq C\sqrt{\log(M+1)}/M$
and for any $w>m$, 
\[
f_{W}(w)\leq S\phi(w)\leq S\phi(m)\leq S\phi(y)\leq C\sqrt{\log(M+1)}S/M
\]

\end{proof}

\subsection{\label{sub:Result-on-Gaussian RVs}Result on Gaussian Random Variables}

In this section, I state a result on Gaussian random variables which
will be used in the derivation of the lower bound on the rate of uniform
consistency.
\begin{lem}
\label{property of Gauss rv}Let $\xi_{n}$, $n=1,...,\infty$, be
a sequence of independent standard Gaussian random variables and $w_{i,n}$,
$i=1,...,n$, $n=1,...,\infty$, be a triangular array of positive
numbers. If $w_{i,n}<C\sqrt{\log n}$ with $C\in(0,1)$ for all $i=1,...,n$,
$n=1,...,\infty$, then
\[
\lim_{n\rightarrow\infty}E[|n^{-1}\sum_{i=1}^{n}\exp(w_{i,n}\xi_{i}-w_{i,n}^{2}/2)-1|]=0
\]
\end{lem}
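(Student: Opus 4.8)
The plan is to exploit straightforward $L^2$ control, using crucially that $C<1$. Write $Z_{i,n}=\exp(w_{i,n}\xi_i-w_{i,n}^2/2)$ and $\bar Z_n=n^{-1}\sum_{i=1}^n Z_{i,n}$, so that the quantity inside the absolute value is $\bar Z_n-1$. First I would record the log-normal moment identities: since $\xi_i\sim N(0,1)$ we have $E[\exp(t\xi_i)]=\exp(t^2/2)$, hence $E[Z_{i,n}]=1$ and $E[Z_{i,n}^2]=\exp(-w_{i,n}^2)\exp(2w_{i,n}^2)=\exp(w_{i,n}^2)$. In particular $E[\bar Z_n]=1$, so it suffices to show $\mathrm{Var}(\bar Z_n)\to0$.

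Second, I would plug in the hypothesis $w_{i,n}^2<C^2\log n$ with $C^2<1$ to obtain $\mathrm{Var}(Z_{i,n})=\exp(w_{i,n}^2)-1<n^{C^2}$. By independence of the $\xi_i$, $\mathrm{Var}(\bar Z_n)=n^{-2}\sum_{i=1}^n\mathrm{Var}(Z_{i,n})<n^{-2}\cdot n\cdot n^{C^2}=n^{C^2-1}$, which tends to zero because $C^2-1<0$. Note this bound depends only on $C$ and $n$, so the convergence is automatically uniform over all triangular arrays $\{w_{i,n}\}$ satisfying the stated constraint.

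Third, I would pass from $L^2$ to $L^1$ by Cauchy--Schwarz (equivalently Jensen): $E[|\bar Z_n-1|]\le\bigl(E[(\bar Z_n-1)^2]\bigr)^{1/2}=\bigl(\mathrm{Var}(\bar Z_n)\bigr)^{1/2}<n^{(C^2-1)/2}\to0$, which is exactly the claim.

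The only real subtlety — and the point where $C<1$ is indispensable — is the second step: if $C\geq1$ the second moment $\exp(w_{i,n}^2)$ could be of order $n$ or larger, the bound $n^{C^2-1}$ would not vanish, and the plain second-moment argument would collapse (one would then need a truncation: split $Z_{i,n}$ at a level growing slower than $\sqrt n$, handle the truncated part by the $L^2$ estimate, and bound the contribution of the tail $\{Z_{i,n}$ large$\}$ separately using that $E[Z_{i,n}\mathbf{1}\{Z_{i,n}>t\}]\to0$). Under the stated hypothesis, however, this extra work is unnecessary and the elementary variance computation closes the proof.
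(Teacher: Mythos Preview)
Your proof is correct and follows essentially the same approach as the paper: compute $E[Z_{i,n}]=1$ and $E[Z_{i,n}^2]=\exp(w_{i,n}^2)$, bound the variance of $\bar Z_n$ by $n^{C^2-1}\to 0$ using $C<1$, and pass from $L^2$ to $L^1$. The only cosmetic difference is that the paper converts the $L^2$ bound into an $L^1$ bound via the tail integral $\int_0^\infty P(|\bar Z_n-1|>t)\,dt$ and Chebyshev (yielding $2t_n$), whereas you apply Cauchy--Schwarz directly (yielding $t_n$); both are the same idea.
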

\begin{proof}
The proof is based on the generalization of lemma 6.2 in \citet{Dumbgen2001}.
Denote $Z_{i,n}=\exp(w_{i,n}\xi_{i}-w_{i,n}^{2}/2)$ and $t_{n}=(E[\sum_{i=1}^{n}Z_{i,n}/n-1]^{2})^{1/2}$.
Note that $EZ_{i,n}=1$ and $EZ_{i,n}^{2}=\exp(w_{i,n}^{2})$. Thus,
\[
t_{n}^{2}=(\sum_{i=1}^{n}(EZ_{i,n}^{2}-(EZ_{i,n})^{2}))/n^{2}\leq\sum_{i=1}^{n}\exp(w_{i,n}^{2})/n^{2}\rightarrow0
\]
if $\max_{i=1,...,n}\exp(w_{i,n}^{2})/n\rightarrow0$. The last condition
holds by assumption. So,
\begin{eqnarray*}
E|n^{-1}\sum_{i=1}^{n}\exp(w_{i,n}\xi_{i}-w_{i,n}^{2}/2)-1| & = & \int_{0}^{\infty}P(|n^{-1}\sum_{i=1}^{n}Z_{i,n}-1|>t)dt\\
 & \leq & t_{n}+\int_{t_{n}}^{\infty}t_{n}^{2}/t^{2}dt\\
 & = & 2t_{n}\rightarrow0
\end{eqnarray*}

\end{proof}

\subsection{\label{sub:Preliminary-Technical-Results}Preliminary Technical Results}

In this section, I derive some necessary preliminary results that
are used in the proofs of the theorems stated in the main part of
the paper. It is assumed throughout that assumptions \ref{ass:Design-points}-\ref{ass:Choice function}
hold. I will use the following additional notation. Let $\{\psi_{n}\}_{n=1}^{\infty}$
be a sequence of positive real numbers such that $\psi_{n}\geq C_{\psi}p\log n/n^{\kappa/4}$
for some large constant $C_{\psi}>0$ and $\psi_{n}\rightarrow0$
as $n\rightarrow\infty$. For any $\lambda\in(0,1)$, define $c_{1-\lambda}^{PIA,0}\in\mathbb{R}$
and $g_{1-\lambda}^{PIA,0}:\,\mathbb{R}\rightarrow[0,1]$ by analogy
with $c_{1-\lambda}^{PIA}$ and $g_{1-\lambda}^{PIA}$ with $\Sigma_{i}$
used instead of $\hat{\Sigma}_{i}$ for all $i=1,...,n$. Denote $S_{n}^{D}=\{s\in S_{n}:\, f_{s}/V_{s}>-(c_{1-\gamma_{n}-\psi_{n}}^{PIA,0}+\beta_{n})\}$.
For any $\lambda\in(0,1)$, define $c_{1-\lambda}^{D}\in\mathbb{R}$
and $g_{1-\lambda}^{D}:\,\mathbb{R}\rightarrow[0,1]$ by analogy with
$c_{1-\lambda}^{RMS}$ and $g_{1-\lambda}^{RMS}$ with $S_{n}^{D}$
used instead of $S_{n}^{RMS}$. Let $\{\epsilon_{i}:\, i=1,...,n\}$
be an iid sequence of $p$-dimensional standard Gaussian random vectors
that are independent of the data. Denote $\hat{e}_{j}=\hat{\Sigma}^{1/2}\epsilon_{j}$
and $e_{j}=\Sigma^{1/2}\epsilon_{j}.$ Note that $\hat{e}_{j}$ is
equal in distribution to $\tilde{Y}_{j}$. Finally, denote 
\[
\varepsilon_{i,m,h}=\sum_{j=1}^{n}w_{h}(X_{i},X_{j})\varepsilon_{j,m}
\]
\[
f_{i,m,h}=\sum_{j=1}^{n}w_{h}(X_{i},X_{j})f_{m}(X_{j})
\]
\[
e_{i,m,h}=\sum_{j=1}^{n}w_{h}(X_{i},X_{j})e_{j}
\]
\[
\hat{e}_{i,m,h}=\sum_{j=1}^{n}w_{h}(X_{i},X_{j})\hat{e}_{j}
\]

\[
T^{PIA}=\max_{s\in S_{n}}(\hat{e}_{s}/\hat{V}_{s})
\]
\[
T^{PIA,0}=\max_{s\in S_{n}}(e_{s}/V_{s})
\]
Note that $T^{PIA}$ is equal in distribution to the simulated statistic.

I start with a result on bounds for weights and variances of the kernel
estimator. The same result can be found in \citet{Horowitz2001}.
\begin{lem}
\label{HS}There exist constants $C>0$ and $0<C_{1}<C_{2}<\infty$
such that, for any $i,j=1,...,n$, $m=1,...,p$, and $h\in H_{n}$,
\[
w_{h}(X_{i},X_{j})\leq C/(nh^{d})
\]
and
\[
C_{1}/\sqrt{nh^{d}}\leq V_{i,m,h}\leq C_{2}/\sqrt{nh^{d}}
\]
\end{lem}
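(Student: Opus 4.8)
The plan is to reduce everything to two elementary observations: first, the weights $w_h(X_i,\cdot)$ form a probability distribution, i.e.\ $\sum_{j=1}^n w_h(X_i,X_j)=1$; second, the number of nonzero weights and the maximal mass they can place on a single point are both pinned down, up to constants, by the design-regularity condition in assumption \ref{ass:Design-points}(iii) together with the two-sided kernel bounds in assumption \ref{ass:The-kernel}. Note at the outset that the factors $h^{d}$ in the numerator and denominator of $w_h$ cancel, so $w_h(X_i,X_j)=K((X_i-X_j)/h)/\bigl(\sum_{k=1}^n K((X_i-X_k)/h)\bigr)$, and that $w_h(X_i,X_j)>0$ precisely when $\Vert X_i-X_j\Vert\le h$, so the number of nonzero weights is exactly $M_h(X_i)$.

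For the weight bound I would estimate the numerator of $w_h(X_i,X_j)$ from above by $1$ (assumption \ref{ass:The-kernel}(ii)) and the denominator from below: every $X_k$ with $\Vert X_i-X_k\Vert\le h/2$ contributes at least the constant from assumption \ref{ass:The-kernel}(ii), so the denominator is at least a constant times $M_{h/2}(X_i)$, which is of order $nh^{d}$ by assumption \ref{ass:Design-points}(iii). This gives $w_h(X_i,X_j)\le C/(nh^{d})$.

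For the variance, write $V_{i,m,h}^{2}=\sum_{j=1}^n w_h^{2}(X_i,X_j)\Sigma_{j,mm}$ and use that $\Sigma_{j,mm}$ is bounded above (by Lyapunov's inequality and assumption \ref{ass:Disturbances}(ii)) and bounded away from zero (assumption \ref{ass:Disturbances}(iv)), uniformly in $j$ and $m$. The upper bound then follows from $\sum_{j}w_h^{2}(X_i,X_j)\le\bigl(\max_j w_h(X_i,X_j)\bigr)\sum_j w_h(X_i,X_j)=\max_j w_h(X_i,X_j)\le C/(nh^{d})$. For the lower bound, the Cauchy--Schwarz inequality yields $\sum_{j}w_h^{2}(X_i,X_j)\ge\bigl(\sum_j w_h(X_i,X_j)\bigr)^{2}/M_h(X_i)=1/M_h(X_i)\ge 1/(C_2 nh^{d})$ by assumption \ref{ass:Design-points}(iii). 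Taking square roots gives $C_1/\sqrt{nh^{d}}\le V_{i,m,h}\le C_2/\sqrt{nh^{d}}$; since all of the bounds above are uniform in $i,j,m$ and $h\in H_n$, the constants do not depend on these indices.

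The one point requiring care is that the lower bound on the denominator of $w_h$ uses assumption \ref{ass:Design-points}(iii) at radius $h/2$, which need not be a grid point of $H_n$. If one reads assumption \ref{ass:Design-points}(iii) literally only for $h\in H_n$, this is the place to appeal instead to the primitive conditions of lemma \ref{lemma for support}: the covering argument in its proof delivers the two-sided bound on $M_r(X_i)$ simultaneously for all radii $r$ up to $h_{\max}$, not merely at the grid bandwidths. Everything else is immediate, and the same computation can be found in \citet{Horowitz2001}.
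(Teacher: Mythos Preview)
Your proof is correct and follows essentially the same route as the paper: bound the denominator of $w_h$ below via $M_{h/2}(X_i)$ and the kernel lower bound, then use that the weights are a probability distribution to control $\sum_j w_h^2$. The only substantive difference is in the lower bound on $V_{i,m,h}$: the paper bounds $\sum_j w_h^2(X_i,X_j)$ from below by writing it as $(\sum_k K)^{-2}\sum_j K^2$ and separately estimating $\sum_j K^2((X_i-X_j)/h)\ge C nh^d$, whereas you use Cauchy--Schwarz to get $\sum_j w_h^2\ge 1/M_h(X_i)$ directly. Your argument is a touch cleaner since it avoids the auxiliary bound on $\sum K^2$; the paper's version is slightly more explicit about where the kernel lower bound enters. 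Your caveat about the $h/2$ radius is well taken and applies equally to the paper's own proof, which invokes $M_{h/2}(X_i)\gtrsim nh^d$ without comment.
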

\begin{proof}
By assumptions \ref{ass:Design-points} and \ref{ass:The-kernel},
for any $i=1,...,n$ and $h\in H_{n}$,
\[
C_{1}nh^{d}\leq CM_{h/2}(X_{i})\leq\sum_{k=1}^{n}K(X_{i}-X_{k})\leq M_{h}(X_{i})\leq C_{2}nh^{d}
\]
and 
\[
C_{1}nh^{d}\leq\sum_{k=1}^{n}K^{2}(X_{i}-X_{k})\leq C_{2}nh^{d}
\]
for some constants $C>0$ and $0<C_{1}<C_{2}<\infty$. In addition,
$K(X_{i}-X_{j})\leq1$ for any $j=1,...,n$. So,
\[
w_{h}(X_{i}-X_{j})=K(X_{i}-X_{j})/\sum_{k=1}^{n}K(X_{i}-X_{k})\leq C/(nh^{d})
\]
By assumption \ref{ass:Disturbances}, since $\sum_{j=1}^{n}w_{h}(X_{i},X_{j})=1$,
\begin{eqnarray*}
V_{i,m,h} & = & \left(\sum_{j=1}^{n}w_{h}^{2}(X_{i},X_{j})\Sigma_{j,mm}\right)^{1/2}\\
 & \leq & C\left(\sum_{j=1}^{n}w_{h}^{2}(X_{i},X_{j})\right)^{1/2}\\
 & \leq & C\max_{j=1,...,n}w_{h}^{1/2}(X_{i},X_{j})\\
 & \leq & C/\sqrt{nh^{d}}
\end{eqnarray*}
and
\[
V_{i,m,h}\geq C\left(\sum_{j=1}^{n}w_{h}^{2}(X_{i},X_{j})\right)^{1/2}\geq(C/nh^{d})\left(\sum_{j=1}^{n}K^{2}(X_{i}-X_{j})\right)^{1/2}\geq C/\sqrt{nh^{d}}
\]
\end{proof}
\begin{lem}
\label{normal conv}$E[\max_{s\in S_{n}}|e_{s}/V_{s}|]\leq C(\log n)^{1/2}$.\end{lem}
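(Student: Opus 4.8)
I want to bound the expectation of the maximum of $|e_s/V_s|$ over $s \in S_n$, where $e_{i,m,h} = \sum_j w_h(X_i,X_j) e_{j}$ with $e_j = \Sigma_j^{1/2}\epsilon_j$ a mean-zero Gaussian vector and $V_{i,m,h}^2 = \sum_j w_h^2(X_i,X_j)\Sigma_{j,mm}$. The key observation is that for each fixed $s=(i,m,h)$, the scalar $e_s/V_s$ is a standard Gaussian random variable: $e_{i,m,h}$ is a linear combination of independent Gaussians with variance exactly $V_{i,m,h}^2$ by Assumption~\ref{ass:Disturbances}. So $\max_{s\in S_n}|e_s/V_s|$ is the maximum of $|S_n|$ standard (though correlated) Gaussian random variables.

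The first step is to invoke the standard maximal inequality for (sub-)Gaussian variables: if $W_1,\dots,W_N$ are each $N(0,1)$ (not necessarily independent), then $E[\max_{k} |W_k|] \le C\sqrt{\log(2N)}$ for a universal constant $C$. This follows from a routine union-bound / Orlicz-norm argument (e.g. bounding $E[\max |W_k|] \le \sqrt{\log N}\cdot \max_k \|W_k\|_{\psi_2}$ up to constants, or integrating the tail bound $P(\max_k|W_k|>t)\le 2N e^{-t^2/2}$). The second step is to control $|S_n| = |S|$. By Assumption~\ref{ass:bandwidth values}, $|H_n| \le C\log n$; by Assumption~\ref{ass:Design-points}(iii), each $I_{h,n}$ is a $2h$-separated set inside a bounded region, so $|I_{h,n}| \le C h^{-d} \le C h_{\min}^{-d}$, which is polynomial in $n$ since $h_{\min}$ decays polynomially (Assumption~\ref{ass: test function}(ii)); and $m$ ranges over $p$ values. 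Hence $|S_n| \le C n^{A}$ for some fixed $A$, so $\log|S_n| \le C\log n$. Combining, $E[\max_{s\in S_n}|e_s/V_s|] \le C\sqrt{\log(2|S_n|)} \le C(\log n)^{1/2}$, as claimed.

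The only mild technical point — not really an obstacle — is that the variables $e_s/V_s$ are genuinely correlated across $s$, so I cannot use independence; but the maximal inequality above needs only that each marginal is $N(0,1)$, which holds exactly here by construction of the studentization. One should also note that Lemma~\ref{HS} is not strictly needed for this bound (the studentization makes each term exactly unit variance regardless of the $V_s$ bounds), though it is reassuring that $V_s$ is bounded away from zero so the ratio is well-defined. The argument is uniform over the class $\mathcal{G}$ because the constants entering $|S_n|$ depend only on the fixed constants in the assumptions, not on the particular model.
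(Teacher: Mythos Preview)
Your proposal is correct and takes essentially the same approach as the paper: observe that each $e_s/V_s$ is exactly standard Gaussian, apply the standard sub-Gaussian maximal inequality (the paper phrases this via the $\psi_2$-Orlicz norm and Lemma 2.2.2 of van der Vaart--Wellner), and use that $|S_n|$ is polynomial in $n$ so that $\log|S_n|\le C\log n$. Your justification of the polynomial bound on $|S_n|$ is slightly more explicit than the paper's, but the argument is the same.
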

\begin{proof}
For any $s\in S_{n}$, $e_{s}/V_{s}$ is a standard Gaussian random
variable. Denote $\psi=\exp(x^{2})-1$. Let $\Vert\cdot\Vert_{\psi}$
denote $\psi$-Orlicz norm. It is easy to check that $\Vert e_{s}/V_{s}\Vert_{\psi}<C<\infty$.
So, by lemma 2.2.2 in \citet{VaartWellner1996}, 
\[
E[\max_{s\in S_{n}}|e_{s}/V_{s}|]\leq C\Vert\max_{s\in S_{n}}|e_{s}/V_{s}|\Vert_{\psi}\leq C(\log n)^{1/2}
\]
since $|S_{n}|\leq Cn^{\phi}$ for some $\phi>0$.\end{proof}
\begin{lem}
\label{variance}$\max_{s\in S_{n}}|\hat{V}_{s}/V_{s}-1|=o_{p}(n^{-\kappa})$
and $\max_{s\in S_{n}}|V_{s}/\hat{V}_{s}-1|=o_{p}(n^{-\kappa})$.\end{lem}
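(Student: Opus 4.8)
The plan is to reduce the claim to the uniform consistency of $\hat\Sigma_i$ supplied by Assumption \ref{ass:variance estimator}, together with the weight/variance bounds of Lemma \ref{HS}. Recall $V_{i,m,h}^2=\sum_{j=1}^n w_h^2(X_i,X_j)\Sigma_{j,mm}$ and $\hat V_{i,m,h}^2=\sum_{j=1}^n w_h^2(X_i,X_j)\hat\Sigma_{j,mm}$, and the weights are nonnegative with $\sum_j w_h(X_i,X_j)=1$. First I would write
\[
\hat V_{i,m,h}^2-V_{i,m,h}^2=\sum_{j=1}^n w_h^2(X_i,X_j)\bigl(\hat\Sigma_{j,mm}-\Sigma_{j,mm}\bigr),
\]
so that
\[
\bigl|\hat V_{i,m,h}^2-V_{i,m,h}^2\bigr|\le\Bigl(\max_{j=1,\dots,n}\bigl|\hat\Sigma_{j,mm}-\Sigma_{j,mm}\bigr|\Bigr)\sum_{j=1}^n w_h^2(X_i,X_j).
\]
Since $|\hat\Sigma_{j,mm}-\Sigma_{j,mm}|\le\|\hat\Sigma_j-\Sigma_j\|_o$ (the $(m,m)$ entry is $e_m^T(\hat\Sigma_j-\Sigma_j)e_m$ and $\|e_m\|=1$), Assumption \ref{ass:variance estimator} gives $\max_j|\hat\Sigma_{j,mm}-\Sigma_{j,mm}|=o_p(n^{-\kappa})$ uniformly in $m$. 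Meanwhile $\sum_j w_h^2(X_i,X_j)=V_{i,m,h}^2/\Sigma_{\cdot}$-type expressions are of order $1/(nh^d)\le V_{i,m,h}^2/C$ by Lemma \ref{HS} and Assumption \ref{ass:Disturbances}(iv); more directly, $\sum_j w_h^2\le V_{i,m,h}^2/\min_j\Sigma_{j,mm}\le V_{i,m,h}^2/C$. Hence
\[
\max_{s\in S_n}\frac{\bigl|\hat V_s^2-V_s^2\bigr|}{V_s^2}\le \frac{1}{C}\,\max_{j}\bigl|\hat\Sigma_{j,mm}-\Sigma_{j,mm}\bigr|=o_p(n^{-\kappa}).
\]

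Next I would pass from the ratio of squares to the ratio itself. Writing $r_s=\hat V_s^2/V_s^2$, we have $\max_{s\in S_n}|r_s-1|=o_p(n^{-\kappa})$, and in particular $r_s\to1$ uniformly, so $r_s$ is bounded away from $0$ and $\infty$ with probability approaching one. Then $\hat V_s/V_s-1=\sqrt{r_s}-1=(r_s-1)/(\sqrt{r_s}+1)$, and on the event $\{r_s\ge 1/2\ \forall s\}$ (which has probability $\to1$) we get $|\hat V_s/V_s-1|\le|r_s-1|/(\sqrt{1/2}+1)\le|r_s-1|$, giving $\max_{s\in S_n}|\hat V_s/V_s-1|=o_p(n^{-\kappa})$. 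The second claim follows symmetrically: $|V_s/\hat V_s-1|=|\hat V_s-V_s|/\hat V_s=|\hat V_s/V_s-1|\cdot(V_s/\hat V_s)$, and since $V_s/\hat V_s=1/\sqrt{r_s}$ is bounded above by $\sqrt2$ on the same event, $\max_{s\in S_n}|V_s/\hat V_s-1|\le\sqrt2\,\max_{s\in S_n}|\hat V_s/V_s-1|=o_p(n^{-\kappa})$.

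I do not anticipate a serious obstacle here; the lemma is essentially bookkeeping once Assumption \ref{ass:variance estimator} and Lemma \ref{HS} are in hand. The one point requiring a little care is that the bound $\sum_j w_h^2(X_i,X_j)\le V_{i,m,h}^2/C$ must hold with a constant $C$ uniform over all $s\in S_n$ and all $n$; this is exactly what Assumption \ref{ass:Disturbances}(iv) ($\Sigma_{i,mm}\ge C$) delivers, and it is the reason that assumption was imposed in the first place. A secondary subtlety is that $|S_n|$ grows with $n$, but since the key bound on $\max_j|\hat\Sigma_{j,mm}-\Sigma_{j,mm}|$ is already uniform over $j$ and $m$, taking the maximum over $s\in S_n$ introduces no further cost and the rate $n^{-\kappa}$ is preserved.
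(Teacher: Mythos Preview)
Your proposal is correct and follows essentially the same route as the paper: bound $|\hat V_s^2-V_s^2|$ by $\max_j\|\hat\Sigma_j-\Sigma_j\|_o\cdot\sum_j w_h^2(X_i,X_j)$, use Assumption~\ref{ass:Disturbances}(iv) to absorb $\sum_j w_h^2$ into $V_s^2$, apply Assumption~\ref{ass:variance estimator}, and then pass from the ratio of squares to the ratio via an elementary inequality. The only cosmetic difference is that the paper uses the one-line inequalities $|x-1|\le|x^2-1|$ for $x>0$ and $|1/x-1|<2|x-1|$ for $|x-1|<1/2$, whereas you work on the high-probability event $\{r_s\ge 1/2\}$; both are equivalent here.
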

\begin{proof}
By assumption \ref{ass:Disturbances}, for any $(i,m,h)\in S_{n}$,
\[
V_{i,m,h}^{2}=\sum_{j=1}^{n}w_{h}^{2}(X_{i},X_{j})\Sigma_{j,mm}\geq C\sum_{j=1}^{n}w_{h}^{2}(X_{i},X_{j})
\]
In addition, 
\[
|\hat{V}_{i,m,h}^{2}-V_{i,m,h}^{2}|\leq\sum_{j=1}^{n}w_{h}^{2}(X_{i},X_{j})|\hat{\Sigma}_{j,mm}-\Sigma_{j,mm}|
\]
So,
\begin{eqnarray*}
\max_{s\in S_{n}}|\hat{V}_{s}^{2}/V_{s}^{2}-1| & \leq & C\max_{m=1,...,p}\max_{j=1,...,n}|\hat{\Sigma}_{j,mm}-\Sigma_{j,mm}|\\
 & \leq & C\max_{j=1,...,n}\Vert\hat{\Sigma}_{j}-\Sigma_{j}\Vert_{o}
\end{eqnarray*}
Assumption \ref{ass:variance estimator} gives $\max_{j=1,...,n}\Vert\hat{\Sigma}_{j}-\Sigma_{j}\Vert_{o}=o_{p}(n^{-\kappa})$.
So, $\max_{s\in S_{n}}|\hat{V}_{s}^{2}/V_{s}^{2}-1|=o_{p}(n^{-\kappa})$.
Combining this result with inequality $|x-1|\leq|x^{2}-1|$, which
holds for any $x>0$, yields the first result of the lemma. The second
result follows from the first one and the inequality $|1/x-1|<2|x-1|$,
which holds for any $|x-1|<1/2$.\end{proof}
\begin{lem}
\label{lemma: quantile approximation}$P\{c_{1-\nu_{n}-\psi_{n}}^{PIA,0}>c_{1-\nu_{n}}^{PIA}\}=o(1)$
and $P\{c_{1-\nu_{n}+\psi_{n}}^{PIA,0}<c_{1-\nu_{n}}^{PIA}\}=o(1)$
for any sequences $\{\nu_{n}\}_{n=1}^{\infty}$ and $\{\psi_{n}\}_{n=1}^{\infty}$
of positive numbers satisfying $\nu_{n}+\psi_{n}\leq1/2$ and $\psi_{n}\geq C_{\psi}p\log n/n^{\kappa/4}$
with large enough $C_{\psi}>0$.\end{lem}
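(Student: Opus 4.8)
The plan is to couple the two bootstrap statistics $T^{PIA}=\max_{s\in S_n}(\hat e_s/\hat V_s)$ and $T^{PIA,0}=\max_{s\in S_n}(e_s/V_s)$ through a common sequence of multipliers $\{\epsilon_i\}$, bound their difference uniformly, and then trade that bound against an anticoncentration estimate for the law of $T^{PIA,0}$. Note first that, conditionally on $\{X_i\}_{i=1}^n$, the quantiles $c^{PIA,0}_{1-\lambda}$ are \emph{nonrandom} (they depend only on $\{X_i\}$ and on the independent multipliers), whereas $c^{PIA}_{1-\lambda}$ is random only through $\hat\Sigma$. Hence it suffices to exhibit a data-event $\mathcal A_n$ with $P(\mathcal A_n)\to1$ on which $c^{PIA,0}_{1-\nu_n-\psi_n}\le c^{PIA}_{1-\nu_n}\le c^{PIA,0}_{1-\nu_n+\psi_n}$ holds deterministically. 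I would take $\mathcal A_n=\{D_n\le\varepsilon_n\}$, where $D_n:=\max_{j\le n}\|\hat\Sigma_j^{1/2}-\Sigma_j^{1/2}\|_o+\max_{s\in S_n}|V_s/\hat V_s-1|$ and $\varepsilon_n$ is a deterministic sequence with $\varepsilon_n=o(n^{-\kappa/2})$; that $P(\mathcal A_n)\to1$ for a suitable such $\varepsilon_n$ follows from Lemma \ref{variance} together with Assumption \ref{ass:variance estimator} and Lemma \ref{lemma:Square root operator}, the last of which turns $\max_j\|\hat\Sigma_j-\Sigma_j\|_o=o_p(n^{-\kappa})$ into $\max_j\|\hat\Sigma_j^{1/2}-\Sigma_j^{1/2}\|_o=o_p(n^{-\kappa/2})$.

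Next I would bound $|T^{PIA}-T^{PIA,0}|\le\max_{s\in S_n}|\hat e_s/\hat V_s-e_s/V_s|$ conditionally on the data. Writing $\hat e_s/\hat V_s-e_s/V_s=(\hat e_s-e_s)/\hat V_s+(e_s/V_s)(V_s/\hat V_s-1)$, observe that given the data $(\hat e_s-e_s)/\hat V_s=\sum_j w_h(X_i,X_j)\big((\hat\Sigma_j^{1/2}-\Sigma_j^{1/2})\epsilon_j\big)_m/\hat V_s$ is centered Gaussian with standard deviation at most $CD_n$ (using Lemma \ref{HS}, Assumption \ref{ass:Disturbances}(iv), and $|V_s/\hat V_s|\le 2$ on $\mathcal A_n$), while $e_s/V_s$ is standard Gaussian; a maximal inequality over the $|S_n|\le Cn^{\phi}$ indices, as in the proof of Lemma \ref{normal conv}, then yields $E[\,|T^{PIA}-T^{PIA,0}|\mid\text{data}\,]\le C\sqrt{\log n}\,D_n\le C\sqrt{\log n}\,\varepsilon_n$ on $\mathcal A_n$, whence by Markov, for any $\eta_n>0$, $P(|T^{PIA}-T^{PIA,0}|>\eta_n\mid\text{data})\le\lambda_n:=C\sqrt{\log n}\,\varepsilon_n/\eta_n$. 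In parallel I would apply anticoncentration: $T^{PIA,0}$ is the maximum of $|S_n|$ standard Gaussians among which, by Assumption \ref{ass:Choice function} and the covering/packing bounds for $I_{h_{\min}}$, at least $M_n\ge|I_{h_{\min}}|\asymp h_{\min}^{-d}$ are jointly independent, and those same bounds give $|S_n|\le Cp(\log n)h_{\min}^{-d}$, so $|S_n|/M_n\le Cp\log n$. By Lemma \ref{anticoncentration} the density of $T^{PIA,0}$ above its median is at most $C\sqrt{\log(M_n+1)}\,|S_n|/M_n\le Cp(\log n)^{3/2}$; since $\nu_n+\psi_n\le1/2$ forces all the relevant quantiles above the median, this gives $c^{PIA,0}_{1-\nu_n-\psi_n/2}-c^{PIA,0}_{1-\nu_n-\psi_n}\ge\eta_n^{\ast}:=(\psi_n/2)/(Cp(\log n)^{3/2})$, and likewise for the $(1-\nu_n+\psi_n/2,\,1-\nu_n+\psi_n)$ pair.

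To conclude I would choose $\eta_n$ with $2C\sqrt{\log n}\,\varepsilon_n/\psi_n\le\eta_n\le\eta_n^{\ast}$, which is possible for large $n$: since $\varepsilon_n=o(n^{-\kappa/2})$ and $\psi_n^2\ge C_\psi^2p^2(\log n)^2/n^{\kappa/2}$, one has $2C\sqrt{\log n}\,\varepsilon_n/\psi_n=o\big((\psi_n/2)/(Cp(\log n)^{3/2})\big)=o(\eta_n^{\ast})$ for $C_\psi$ as in the statement. This $\eta_n$ gives $\lambda_n\le\psi_n/2$, and then, on $\mathcal A_n$, using the coupling (off a conditional event of probability $\le\lambda_n$, $T^{PIA}\ge T^{PIA,0}-\eta_n$) and the density gap $c^{PIA,0}_{1-\nu_n-\psi_n}+\eta_n\le c^{PIA,0}_{1-\nu_n-\psi_n/2}$,
\begin{multline*}
P(T^{PIA}>c^{PIA,0}_{1-\nu_n-\psi_n}\mid\text{data})\ge P(T^{PIA,0}>c^{PIA,0}_{1-\nu_n-\psi_n}+\eta_n\mid\text{data})-\lambda_n\\
\ge P(T^{PIA,0}>c^{PIA,0}_{1-\nu_n-\psi_n/2}\mid\text{data})-\lambda_n=\nu_n+\psi_n/2-\lambda_n\ge\nu_n,
\end{multline*}
so that $c^{PIA}_{1-\nu_n}\ge c^{PIA,0}_{1-\nu_n-\psi_n}$; the symmetric computation with $T^{PIA}\le T^{PIA,0}+\eta_n$ gives $c^{PIA}_{1-\nu_n}\le c^{PIA,0}_{1-\nu_n+\psi_n}$. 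When $\beta_n>0$ the tail-probability manipulations are replaced by the analogous monotone comparisons for $c\mapsto E[g_0((\,\cdot\,-c)/\beta_n)\mid\text{data}]$.

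The main obstacle is the anticoncentration step: one must pin down that the number of jointly independent studentized kernel estimates is of order $h_{\min}^{-d}$ while $|S_n|$ exceeds it by at most a logarithmic factor, so that Lemma \ref{anticoncentration} delivers only a polylogarithmic density bound; otherwise a $\psi_n$-gap in probability need not translate into a quantile gap large enough to absorb the $\sqrt{\log n}\,\varepsilon_n$ coupling error. Aligning the rates $\varepsilon_n=o(n^{-\kappa/2})$ and $\psi_n\gtrsim(\log n)/n^{\kappa/4}$ with the admissible range for $\eta_n$ (and hence fixing how large $C_\psi$ must be taken) is the remaining bookkeeping, but it is routine once the density bound is established.
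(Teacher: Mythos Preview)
Your proposal is correct and follows essentially the same route as the paper's proof: the same coupling of $T^{PIA}$ and $T^{PIA,0}$ through common multipliers, the same two-term decomposition (your $(\hat e_s-e_s)/\hat V_s$ and $(e_s/V_s)(V_s/\hat V_s-1)$ are precisely the paper's $p_2$ and $p_1$), the same use of Lemma~\ref{lemma:Square root operator} and Lemma~\ref{variance} to control them, and the same application of the anticoncentration Lemma~\ref{anticoncentration} (with the same count $|S_n|/M_n\le Cp\log n$) to convert the $\psi_n$-gap in probability into a quantile gap. The only cosmetic differences are that the paper works directly with the smoothed functionals $E[g_0((\,\cdot\,-c)/\beta_n)]$ throughout rather than first doing a tail-probability computation and then remarking on the $\beta_n>0$ case, and that it fixes an explicit threshold $B=n^{\kappa/4}/(p\log n)$ in place of your implicit range for $\eta_n$.
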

\begin{proof}
Denote
\[
p_{1}=\max_{s\in S_{n}}\left|\frac{e_{s}}{V_{s}}\right|\max_{s\in S_{n}}\left|\frac{V_{s}}{\hat{V}_{s}}-1\right|
\]
and
\[
p_{2}=\max_{(i,h,m)\in S_{n}}\left|\frac{\sum_{j=1}^{n}w_{h}(X_{i},X_{j})((\hat{\Sigma}_{j}^{1/2}-\Sigma_{j}^{1/2})\epsilon_{j})_{m}}{\hat{V}_{i,m,h}}\right|
\]
Then 
\[
|T^{PIA}-T^{PIA,0}|\leq p_{1}+p_{2}
\]
Let $A$ denote the event $\{\max_{j=1,...,n}\Vert\hat{\Sigma}_{j}-\Sigma_{j}\Vert_{o}<n^{-\kappa}\}$.
By assumption \ref{ass:variance estimator}, $P(A)\rightarrow1$ as
$n\rightarrow\infty.$ Thus, it is enough to show that $c_{1-\nu_{n}-\psi_{n}}^{PIA,0}\leq c_{1-\nu_{n}}^{PIA}$
and $c_{1-\nu_{n}+\psi_{n}}^{PIA,0}\geq c_{1-\nu_{n}}^{PIA}$ on $A$.

As in the proof of lemma \ref{variance}, $\max_{s\in S_{n}}|V_{s}/\hat{V}_{s}-1|\leq Cn^{-\kappa}$
on $A$. By lemma \ref{normal conv}, $E[\max_{s\in S_{n}}e_{s}/V_{s}]\leq C\sqrt{\log n}$.
So, Markov inequality gives for any $B>0$, on $A$,
\[
P(p_{1}>C\sqrt{\log n}n^{-\kappa}B|Y_{1}^{n})\leq1/B
\]
where $Y_{1}^{n}$ is a shorthand for $\{Y_{i}\}_{i=1}^{n}$. Consider
$p_{2}$. For any $j=1,...,n$ and $m=1,...,p$, 
\begin{eqnarray*}
E[((\hat{\Sigma}_{j}^{1/2}-\Sigma_{j}^{1/2})\epsilon_{j})_{m}^{2}|Y_{1}^{n}] & \leq & E[\Vert(\hat{\Sigma}_{j}^{1/2}-\Sigma_{j}^{1/2})\epsilon_{j}\Vert^{2}|Y_{1}^{n}]\\
 & \leq & E[\Vert\hat{\Sigma}_{j}^{1/2}-\Sigma_{j}^{1/2}\Vert_{o}^{2}\Vert\epsilon_{j}\Vert^{2}|Y_{1}^{n}]\\
 & \leq & p\Vert(\hat{\Sigma}_{j}^{1/2}-\Sigma_{j}^{1/2})\Vert_{o}^{2}\\
 & \leq & p^{2}\Vert\hat{\Sigma}_{j}-\Sigma_{j}\Vert_{o}
\end{eqnarray*}
where the last line follows from lemma \ref{lemma:Square root operator}.
So, conditionally on $Y_{1}^{n}$, on $A$, $\sum_{j=1}^{n}w_{h}(X_{i},X_{j})((\hat{\Sigma}_{j}^{1/2}-\Sigma_{j}^{1/2})\epsilon_{j})_{m}/V_{i,m,h}$
is mean-zero Gaussian random variable with variance bounded by $p^{2}n^{-\kappa}$
for any $(i,m,h)\in S_{n}$. In addition, on $A$, $\max_{s\in S_{n}}V_{s}/\hat{V}_{s}\leq2$
for large $n$. Thus, Markov inequality and the argument like that
used in lemma \ref{normal conv} yield
\[
P(p_{2}>C\sqrt{\log n}pn^{-\kappa/2}B|Y_{1}^{n})\leq1/B
\]
on $A$. Take $B=n^{\kappa/4}/(p\log n)$. Recall that $\psi_{n}\geq C_{\psi}p\log n/n^{\kappa/4}$.
So, $\psi_{n}>\max(4/B,\, C_{1}p^{2}(\log n)^{2}n^{-\kappa/2}B)$
for some large $C_{1}>0$ whenever $C_{1}<C_{\psi}$.

Note that $T^{PIA,0}$ is the maximum over $|S_{n}|$ standard Gaussian
random variables. In addition, for fixed $m=1,...,p$ and $h\in H_{n}$,
random variables $\{e_{i,m,h}/V_{i,m,h}:\,(i,m,h)\in S_{n}\}$ are
mutually independent, $|H_{n}|\leq C\log n$. So, lemma \ref{anticoncentration}
gives $c_{1-\nu_{n}-\psi_{n}/2}^{PIA,0}-c_{1-\nu_{n}-\psi_{n}}^{PIA,0}\geq C\psi_{n}/(p(\log n)^{3/2})$.
I will assume that $C$ in the last inequality is smaller than $C_{1}$.

Now the first part of the lemma follows from
\begin{eqnarray*}
E[g_{1-\nu_{n}-\psi_{n}}^{PIA,0}(T^{PIA})|Y_{1}^{n}] & \leq & E[g_{1-\nu_{n}-\psi_{n}}^{PIA,0}(T^{PIA,0}-p_{1}-p_{2})|Y_{1}^{n}]\\
 & \leq & E[g_{1-\nu_{n}-\psi_{n}}^{PIA,0}(T^{PIA,0}-C\sqrt{\log n}n^{-\kappa/2}B)|Y_{1}^{n}]+2/B\\
 & \leq & E[g_{1-\nu_{n}-\psi_{n}/2}^{PIA,0}(T^{PIA,0})|Y_{1}^{n}]+2/B\\
 & = & 1-\nu_{n}-\psi_{n}/2+2/B\\
 & \leq & 1-\nu_{n}
\end{eqnarray*}
 on $A$. The second part of the lemma follows from a similar argument.\end{proof}
\begin{lem}
\label{lemma: aplication of Chatterjee}$E[g_{1-\nu_{n}}^{PIA,0}(\max_{s\in S_{n}}(\varepsilon_{s}/V_{s}))]=1-\nu_{n}+o(1)$
and $E[g_{1-\nu_{n}}^{PIA,0}(-\max_{s\in S_{n}}(\varepsilon_{s}/V_{s}))]=1-\nu_{n}+o(1)$
for any sequence $\{\nu_{n}\}_{n=1}^{\infty}$ such that $\nu_{n}\in(0,1)$.\end{lem}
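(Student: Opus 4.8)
The plan is to recognize that $c_{1-\nu_{n}}^{PIA,0}$ is, by construction, the $(1-\nu_{n})$-level smoothed quantile of the \emph{Gaussian} statistic $T^{PIA,0}=\max_{s\in S_{n}}(e_{s}/V_{s})$, and then invoke the invariance principle (Theorem \ref{thm:Chatterjee}) to replace the Gaussian noise $\{e_{j}\}$ by the true disturbances $\{\varepsilon_{j}\}$ inside the smooth functional $g_{1-\nu_{n}}^{PIA,0}$, paying a price that Assumption \ref{ass: test function}(ii) forces to be $o(1)$. For the first step, note that $c_{1-\nu_{n}}^{PIA,0}$ is a deterministic number (the expectation defining it is over $\{e_{j}\}$, conditional on $\{X_{i}\}$), so $g_{1-\nu_{n}}^{PIA,0}(x)=g_{0}((x-c_{1-\nu_{n}}^{PIA,0})/\beta_{n})$ is a fixed thrice differentiable function, with $\Vert(g_{1-\nu_{n}}^{PIA,0})^{\prime}\Vert_{\infty}\leq\Vert g_{0}^{\prime}\Vert_{\infty}/\beta_{n}$ and $C(g_{1-\nu_{n}}^{PIA,0})\leq C/\beta_{n}^{3}$, and the defining equation gives $E[g_{1-\nu_{n}}^{PIA,0}(T^{PIA,0})]=1-\nu_{n}$ exactly.

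For the second step, I apply Theorem \ref{thm:Chatterjee} with $Z_{j}=\varepsilon_{j}$, $U_{j}=e_{j}=\Sigma_{j}^{1/2}\epsilon_{j}$ (same mean zero and same covariance $\Sigma_{j}$ by Assumption \ref{ass:Disturbances}), indexing the linear functionals by $s=(i,m,h)\in S_{n}$ via $f_{s}(\cdot)=\sum_{j=1}^{n}\bigl(w_{h}(X_{i},X_{j})/V_{i,m,h}\bigr)(\cdot)_{j,m}$. The coefficients $w_{h}(X_{i},X_{j})/V_{i,m,h}$ are nonnegative because the kernel is positive (Assumption \ref{ass:The-kernel}), and each $f_{s}$ depends only on the $m$-th coordinate, so the structure matches that required by the theorem; moreover $\max_{s}f_{s}(\varepsilon)=\max_{s\in S_{n}}(\varepsilon_{s}/V_{s})$ and $\max_{s}f_{s}(e)=T^{PIA,0}$. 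Applying the theorem to $g_{1-\nu_{n}}^{PIA,0}$ and to $-g_{1-\nu_{n}}^{PIA,0}$ (both thrice differentiable with identical derivative sup-norms) yields
\[
\bigl|E[g_{1-\nu_{n}}^{PIA,0}(\max_{s\in S_{n}}(\varepsilon_{s}/V_{s}))]-E[g_{1-\nu_{n}}^{PIA,0}(T^{PIA,0})]\bigr|\leq\Delta_{n},
\]
where $\Delta_{n}$ is the Chatterjee bound with $a=b=\max_{(i,m,h)\in S_{n},\,j}w_{h}(X_{i},X_{j})/V_{i,m,h}$.

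The third step is to bound $\Delta_{n}$. By Lemma \ref{HS}, $b\leq C/\sqrt{nh_{\min}^{d}}$; by the argument in Lemma \ref{normal conv}, $|S_{n}|\leq Cn^{\phi}$, so $\log|S_{n}|\leq C\log n$; the third absolute moments satisfy $E[|\varepsilon_{j,m}|^{3}]\leq(E[|\varepsilon_{j,m}|^{4+\delta}])^{3/(4+\delta)}\leq C$ by Assumption \ref{ass:Disturbances}(ii), and $E[|e_{j,m}|^{3}]=\sqrt{8/\pi}\,\Sigma_{j,mm}^{3/2}\leq C$ since $\Sigma_{j,mm}\leq E[\varepsilon_{j,m}^{2}]\leq C$, so the moment factor is $O(1)$. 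Substituting these together with $C(g_{1-\nu_{n}}^{PIA,0})\lesssim\beta_{n}^{-3}$ and $\Vert(g_{1-\nu_{n}}^{PIA,0})^{\prime}\Vert_{\infty}\lesssim\beta_{n}^{-1}$ gives $\Delta_{n}\leq C\,p\,n^{-1/6}h_{\min}^{-d/2}\beta_{n}^{-5/3}(\log n)^{2/3}$, whence $\Delta_{n}^{6}\leq Cp^{6}(\log n)^{4}/(\beta_{n}^{10}h_{\min}^{3d}n)\to0$ by Assumption \ref{ass: test function}(ii); thus $\Delta_{n}=o(1)$ and $E[g_{1-\nu_{n}}^{PIA,0}(\max_{s\in S_{n}}(\varepsilon_{s}/V_{s}))]=1-\nu_{n}+o(1)$. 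The second, negated, statement follows by running the identical argument with $Z_{j}=-\varepsilon_{j}$ in place of $\varepsilon_{j}$ (still mean zero with covariance $\Sigma_{j}$, hence compared to the same Gaussian family), using that by symmetry of the multivariate normal $\max_{s}(-e_{s}/V_{s})\stackrel{d}{=}\max_{s}(e_{s}/V_{s})=T^{PIA,0}$, so the right-hand side is again $1-\nu_{n}+o(1)$ with the same error $\Delta_{n}$.

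The only real content is the third step: matching the \emph{explicit} Chatterjee error bound to the bandwidth/smoothing trade-off encoded in Assumption \ref{ass: test function}(ii), i.e.\ getting the powers of $n$, $h_{\min}$ and $\beta_{n}$ exactly right and checking that the moment factor is genuinely $O(1)$ and the cardinality factor genuinely polynomial in $n$. Everything else — identifying $c_{1-\nu_{n}}^{PIA,0}$ as an exact quantile of the Gaussian statistic, verifying nonnegativity of the kernel weights, and casting the studentized estimators into the coordinatewise linear form required by Theorem \ref{thm:Chatterjee} — is routine.
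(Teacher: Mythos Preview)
Your proof is correct and follows essentially the same route as the paper's: bound the coefficients $w_{h}(X_{i},X_{j})/V_{i,m,h}$ by $C/\sqrt{nh_{\min}^{d}}$ via Lemma \ref{HS}, bound the derivative norms of $g_{1-\nu_{n}}^{PIA,0}$ by powers of $1/\beta_{n}$, and apply Theorem \ref{thm:Chatterjee}; your explicit computation $\Delta_{n}^{6}\leq Cp^{6}(\log n)^{4}/(\beta_{n}^{10}h_{\min}^{3d}n)$ is exactly what Assumption \ref{ass: test function}(ii) is designed to kill. Your treatment of the second claim (replacing $\varepsilon_{j}$ by $-\varepsilon_{j}$ and using Gaussian symmetry) is the intended reading and matches how the lemma is actually invoked later in the paper, e.g.\ in Lemma \ref{Lemma: Inclusion with null function}.
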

\begin{proof}
By lemma \ref{HS}, for any $(i,m,h)\in S_{n}$ and any $j=1,...,n$,
\[
w_{h}(X_{i},X_{j})/V_{i,m,h}\leq C/\sqrt{nh^{d}}\leq C/\sqrt{nh_{\min}^{d}}
\]
Recall the definition of $C(\cdot)$ given before theorem \ref{thm:Chatterjee}.
By assumption \ref{ass: test function}, $\beta=\beta_{n}\leq C$
for some constant $C>0$. So, $C(g_{1-\alpha}^{PIA,0})\leq C/\beta^{3}$.
In addition, $\Vert(g_{1-\alpha}^{PIA,0})^{\prime}\Vert_{\infty}\leq C/\beta$.
Given assumption \ref{ass: test function}, the result follows by
applying theorem \ref{thm:Chatterjee} with $g=g_{1-\alpha}^{PIA,0}$,
$Z_{j}=\varepsilon_{j}$, $Y_{j}=\Sigma_{j}^{1/2}\epsilon_{j}$, $a=C/\sqrt{nh_{\min}^{d}}$
and $K\leq Cn^{\phi}$ for some $\phi>0$.\end{proof}
\begin{lem}
\label{Lemma: Initial conv}$\max_{s\in S_{n}}|\varepsilon_{s}/V_{s}|=O_{p}(\sqrt{\log n})$
and $\max_{s\in S_{n}}|\varepsilon_{s}/\hat{V}_{s}|=O_{p}(\sqrt{\log n})$.\end{lem}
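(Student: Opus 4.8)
The plan is to prove $\max_{s\in S_{n}}|\varepsilon_{s}/V_{s}|=O_{p}(\sqrt{\log n})$ and then deduce the second claim from it: since $|\varepsilon_{s}/\hat{V}_{s}|=|\varepsilon_{s}/V_{s}|\cdot|V_{s}/\hat{V}_{s}|$ and $\max_{s\in S_{n}}|V_{s}/\hat{V}_{s}-1|=o_{p}(n^{-\kappa})=o_{p}(1)$ by Lemma \ref{variance}, we have $\max_{s\in S_{n}}|V_{s}/\hat{V}_{s}|\le 1+o_{p}(1)=O_{p}(1)$, so $\max_{s\in S_{n}}|\varepsilon_{s}/\hat{V}_{s}|\le\max_{s\in S_{n}}|\varepsilon_{s}/V_{s}|\cdot\max_{s\in S_{n}}|V_{s}/\hat{V}_{s}|=O_{p}(\sqrt{\log n})$. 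Note that a direct union bound over $S_{n}$ will not work here, because the disturbances are only assumed to have $4+\delta$ finite moments (Assumption \ref{ass:Disturbances}) while $|S_{n}|$ grows polynomially in $n$, so a per-index Markov bound inflated by $|S_{n}|$ diverges. Instead I route everything through the Gaussian invariance principle of Section \ref{sub:Invariance principle}, which transfers the $\sqrt{\log n}$ control available for the Gaussian analogue $\max_{s}e_{s}/V_{s}$ to the non-Gaussian statistic.

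Fix $\eta>0$ and a constant $\nu\in(0,1/2)$ with $4\nu<\eta$ (take $\nu_{n}\equiv\nu$ in what follows). I first bound the auxiliary threshold $c_{1-\nu}^{PIA,0}$. By its definition, $E[g_{1-\nu}^{PIA,0}(T^{PIA,0})]=1-\nu$, where $T^{PIA,0}=\max_{s\in S_{n}}e_{s}/V_{s}$; since $g_{0}((\cdot-c)/\beta_{n})\ge\mathbf{1}\{\cdot\le c\}$ this gives $P\{T^{PIA,0}>c_{1-\nu}^{PIA,0}\}\ge\nu$. If $c_{1-\nu}^{PIA,0}\le 0$ then of course $c_{1-\nu}^{PIA,0}\le C\sqrt{\log n}$; otherwise Markov's inequality together with Lemma \ref{normal conv} (which gives $E[\max_{s}|e_{s}/V_{s}|]\le C\sqrt{\log n}$) yields $c_{1-\nu}^{PIA,0}\le E[\max_{s}|e_{s}/V_{s}|]/\nu\le C\sqrt{\log n}/\nu$. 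Using $\beta_{n}\le C$ from Assumption \ref{ass: test function}, we obtain $c_{1-\nu}^{PIA,0}+\beta_{n}\le M\sqrt{\log n}$ for all large $n$, where $M=M(\nu)$ is a constant.

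Now I control the two one-sided maxima. For the upper tail, Lemma \ref{lemma: aplication of Chatterjee} gives $E[g_{1-\nu}^{PIA,0}(\max_{s}\varepsilon_{s}/V_{s})]=1-\nu+o(1)\ge 1-2\nu$ for large $n$; since $g_{0}((\cdot-c)/\beta_{n})\le\mathbf{1}\{\cdot<c+\beta_{n}\}$, this forces $P\{\max_{s}\varepsilon_{s}/V_{s}<c_{1-\nu}^{PIA,0}+\beta_{n}\}\ge 1-2\nu$, hence $P\{\max_{s}\varepsilon_{s}/V_{s}\ge M\sqrt{\log n}\}\le 2\nu$ for large $n$. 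For the other side I need the same control on $\max_{s}(-\varepsilon_{s}/V_{s})$, which is the one piece not literally contained in Lemma \ref{lemma: aplication of Chatterjee} (that lemma only handles $\max_{s}\varepsilon_{s}/V_{s}$ and $-\max_{s}\varepsilon_{s}/V_{s}$). I would obtain it by rerunning the proof of Lemma \ref{lemma: aplication of Chatterjee}, applying Theorem \ref{thm:Chatterjee} with the random vectors $-\varepsilon_{j}$ in place of $\varepsilon_{j}$: this is legitimate because $-\varepsilon_{j}$ has the same covariance matrix $\Sigma_{j}$ and $E[|{-\varepsilon_{j,m}}|^{3}]=E[|\varepsilon_{j,m}|^{3}]$, so the Chatterjee error bound is the same expression and still tends to zero under Assumption \ref{ass: test function}, while the Gaussian comparison statistic is again $\max_{s}e_{s}/V_{s}=T^{PIA,0}$. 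This gives $E[g_{1-\nu}^{PIA,0}(\max_{s}(-\varepsilon_{s}/V_{s}))]=1-\nu+o(1)\ge 1-2\nu$ and hence $P\{\max_{s}(-\varepsilon_{s}/V_{s})\ge M\sqrt{\log n}\}\le 2\nu$ for large $n$. Combining the two tails, $P\{\max_{s}|\varepsilon_{s}/V_{s}|\ge M\sqrt{\log n}\}\le 4\nu<\eta$ for all large $n$, which is precisely the statement $\max_{s}|\varepsilon_{s}/V_{s}|=O_{p}(\sqrt{\log n})$.

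The only delicate point is this last symmetrization step: one must check that the Gaussian comparison object and the Chatterjee remainder are identical for $\varepsilon_{j}$ and for $-\varepsilon_{j}$, so that the conclusion of Lemma \ref{lemma: aplication of Chatterjee} carries over verbatim to $\max_{s}(-\varepsilon_{s}/V_{s})$. Everything else is the quantile bound of the second paragraph (Markov plus Lemma \ref{normal conv}) and the elementary multiplicative estimate via Lemma \ref{variance} already described.
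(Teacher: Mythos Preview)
Your proof is correct and follows essentially the same approach as the paper: transfer the tail control from the Gaussian statistic $\max_{s}e_{s}/V_{s}$ to the non-Gaussian $\max_{s}\varepsilon_{s}/V_{s}$ via the Chatterjee invariance principle, then use Lemma~\ref{normal conv} together with Markov's inequality for the Gaussian side, and finally deduce the $\hat{V}_{s}$ version from Lemma~\ref{variance}. The only cosmetic difference is that the paper fixes the threshold $C\sqrt{\log n}$ first and builds an ad hoc test function $g_{0}((x+\beta_{n}-C\sqrt{\log n})/\beta_{n})$ around it, whereas you fix $\nu$, invoke Lemma~\ref{lemma: aplication of Chatterjee} with the predefined $g_{1-\nu}^{PIA,0}$, and then bound the resulting threshold $c_{1-\nu}^{PIA,0}+\beta_{n}$ by $M\sqrt{\log n}$; both routes amount to the same estimate. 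Your treatment of the lower tail is in fact more careful than the paper's: the paper simply writes ``the same reasoning gives the lower as well,'' while you correctly spell out that this means reapplying Theorem~\ref{thm:Chatterjee} with $-\varepsilon_{j}$ in place of $\varepsilon_{j}$, which is legitimate because the weights $w_{h}(X_{i},X_{j})/V_{i,m,h}$ remain nonnegative and the covariance and third absolute moments are unchanged, so the Gaussian comparison object and the error bound are identical.
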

\begin{proof}
Combining the definition of $g_{0}$, lemma \ref{lemma: aplication of Chatterjee},
and $\beta_{n}\leq C$ for some constant $C>0$ gives
\begin{eqnarray*}
P\{\max_{s\in S_{n}}(\varepsilon_{s}/V_{s})>C\sqrt{\log n}\} & \leq & 1-E[g_{0}((\max_{s\in S_{n}}(\varepsilon_{s}/V_{s})+\beta_{n}-C\sqrt{\log n})/\beta_{n})]\\
 & = & 1-E[g_{0}((\max_{s\in S_{n}}(e_{s}/V_{s})+\beta_{n}-C\sqrt{\log n})/\beta_{n})]+o(1)\\
 & \leq & P\{\max_{s\in S_{n}}(e_{s}/V_{s})>C\sqrt{\log n}-\beta_{n}\}+o(1)\\
 & \leq & P\{\max_{s\in S_{n}}(e_{s}/V_{s})>(C/2)\sqrt{\log n}\}+o(1)
\end{eqnarray*}
By lemma \ref{normal conv}, $\max_{s\in S_{n}}(e_{s}/V_{s})=O_{p}(\sqrt{\log n})$.
So, by choosing $n$ large enough and then $C$ large enough, we can
make $P\{\max_{s\in S_{n}}(\varepsilon_{s}/V_{s})>C\sqrt{\log n}\}$
arbitrarily small uniformly in $n$. The same reasoning gives the
lower as well. We conclude that $\max_{s\in S_{n}}|\varepsilon_{s}/V_{s}|=O_{p}(\sqrt{\log n})$.
The second result follows from
\[
\max_{s\in S_{n}}|\varepsilon_{s}/\hat{V}_{s}|\leq\max_{s\in S_{n}}|\varepsilon_{s}/V_{s}|\max_{s\in S_{n}}(V_{s}/\hat{V}_{s})=O_{p}(\sqrt{\log n})
\]
since $\max_{s\in S_{n}}(V_{s}/\hat{V}_{s})=O_{p}(1)$ by lemma \ref{variance}.\end{proof}
\begin{lem}
\label{Lemma: Restriction}$P\{\max_{s\in S_{n}\backslash S_{n}^{D}}\hat{f}_{s}/\hat{V}_{s}>0\}\leq\gamma_{n}+o(1)$.\end{lem}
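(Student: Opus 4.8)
The plan is to reduce the stated event to a one-sided deviation event for $\max_{s\in S_n}\varepsilon_s/V_s$ and then to control that event via the invariance principle of Lemma \ref{lemma: aplication of Chatterjee}. First I would record the identity $\hat f_s=f_s+\varepsilon_s$, valid for every $s=(i,m,h)\in S_n$ because $\hat f_{i,m,h}=\sum_{j=1}^n w_h(X_i,X_j)\big(f_m(X_j)+\varepsilon_{j,m}\big)=f_{i,m,h}+\varepsilon_{i,m,h}$. By Lemma \ref{variance} (together with Assumption \ref{ass:Disturbances}(iv)) one has $\hat V_s>0$ for all $s\in S_n$ with probability $1-o(1)$, so it suffices to work on that event, where, dividing by $\hat V_s>0$,
\[
\{\hat f_s/\hat V_s>0\}=\{\varepsilon_s>-f_s\}=\{\varepsilon_s/V_s>-f_s/V_s\}.
\]

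Next I would invoke the definition of $S_n^D$: for every $s\in S_n\setminus S_n^D$ we have $f_s/V_s\le-(c_{1-\gamma_n-\psi_n}^{PIA,0}+\beta_n)$, i.e. $-f_s/V_s\ge c_{1-\gamma_n-\psi_n}^{PIA,0}+\beta_n$. Combined with the previous display (and reading a maximum over the empty set as producing a null event when $S_n\setminus S_n^D=\emptyset$), this yields the inclusion
\[
\Big\{\max_{s\in S_n\setminus S_n^D}\hat f_s/\hat V_s>0\Big\}\subseteq\Big\{\max_{s\in S_n}\varepsilon_s/V_s>c_{1-\gamma_n-\psi_n}^{PIA,0}+\beta_n\Big\},
\]
so it remains to bound the probability of the right-hand event by $\gamma_n+o(1)$.

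For that I would exploit the shape of the test function: $g_0\ge0$, and by definition $g_{1-\gamma_n-\psi_n}^{PIA,0}(x)=g_0\big((x-c_{1-\gamma_n-\psi_n}^{PIA,0})/\beta_n\big)$ vanishes for every $x\ge c_{1-\gamma_n-\psi_n}^{PIA,0}+\beta_n$, so $\mathbf 1\{x>c_{1-\gamma_n-\psi_n}^{PIA,0}+\beta_n\}\le1-g_{1-\gamma_n-\psi_n}^{PIA,0}(x)$ for all $x\in\mathbb R$. Evaluating this at $x=\max_{s\in S_n}\varepsilon_s/V_s$, taking expectations, and applying Lemma \ref{lemma: aplication of Chatterjee} with $\nu_n=\gamma_n+\psi_n$ (which lies in $(0,1)$ for large $n$ since $\gamma_n<\alpha/2$ and $\psi_n\to0$) gives
\[
P\Big\{\max_{s\in S_n}\varepsilon_s/V_s>c_{1-\gamma_n-\psi_n}^{PIA,0}+\beta_n\Big\}\le1-E\big[g_{1-\gamma_n-\psi_n}^{PIA,0}\big(\max_{s\in S_n}\varepsilon_s/V_s\big)\big]=\gamma_n+\psi_n+o(1).
\]
Since $\psi_n\to0$, the right-hand side is $\gamma_n+o(1)$, which completes the argument.

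This is a short assembly of facts already in place, so I do not expect a genuine obstacle. The one point that must be watched is that the constant $\beta_n$ added inside the definition of $S_n^D$ matches exactly the width of the smoothing window of $g_0$ entering $g_{1-\gamma_n-\psi_n}^{PIA,0}$: this is precisely what makes the indicator of the deviation event dominated by $1-g_{1-\gamma_n-\psi_n}^{PIA,0}$, so that the invariance principle can be invoked for the smooth functional rather than for the discontinuous indicator, for which no comparable bound is available.
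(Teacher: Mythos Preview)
Your proof is correct and follows essentially the same route as the paper's own argument: decompose $\hat f_s=f_s+\varepsilon_s$, use the definition of $S_n^D$ to replace $-f_s/V_s$ by the threshold $c_{1-\gamma_n-\psi_n}^{PIA,0}+\beta_n$, and then bound the resulting tail probability for $\max_{s\in S_n}\varepsilon_s/V_s$ via the smooth majorant $1-g_{1-\gamma_n-\psi_n}^{PIA,0}$ and Lemma~\ref{lemma: aplication of Chatterjee}. The paper writes the first step slightly differently, passing from $\hat V_s$ to $V_s$ in one line using that both are positive, but this is the same observation you make.
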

\begin{proof}
By lemma \ref{lemma: aplication of Chatterjee},
\[
P\{\max_{s\in S_{n}}(\varepsilon_{s}/V_{s})\leq c_{1-\gamma_{n}-\psi_{n}}^{PIA,0}+\beta_{n}\}\geq E[g_{1-\gamma_{n}-\psi_{n}}^{PIA,0}(\max_{s\in S_{n}}(\varepsilon_{s}/V_{s}))]=1-\gamma_{n}-\psi_{n}+o(1)
\]
Since for any $s\in S_{n}\backslash S_{n}^{D}$, $f_{s}/V_{s}\leq-(c_{1-\gamma_{n}-\psi_{n}}^{PIA,0}+\beta_{n})$,
\begin{eqnarray*}
P\{\max_{s\in S_{n}\backslash S_{n}^{D}}(\hat{f}_{s}/\hat{V}_{s})>0\} & = & P\{\max_{s\in S_{n}\backslash S_{n}^{D}}(\hat{f}_{s}/V_{s})>0\}\\
 & = & P\{\max_{s\in S_{n}\backslash S_{n}^{D}}(f_{s}/V_{s}+\varepsilon_{s}/V_{s})>0\}\\
 & \leq & P\{\max_{s\in S_{n}\backslash S_{n}^{D}}(-(c_{1-\gamma_{n}-\psi_{n}}^{PIA,0}+\beta_{n})+\varepsilon_{s}/V_{s})>0\}\\
 & \leq & P\{\max_{s\in S_{n}}(\varepsilon_{s}/V_{s})>c_{1-\gamma_{n}-\psi_{n}}^{PIA,0}+\beta_{n}\}\\
 & \leq & 1-(1-\gamma_{n}-\psi_{n})+o(1)\\
 & = & \gamma_{n}+\psi_{n}+o(1)
\end{eqnarray*}
Noting that $\psi_{n}=o(1)$ yields the result.\end{proof}
\begin{lem}
\label{Lemma: Inclusion}$P\{S_{n}^{D}\subset S_{n}^{RMS}\}\geq1-\gamma_{n}+o(1)$.\end{lem}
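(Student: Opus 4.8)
The plan is to reduce the failure event $\{S_n^D\not\subset S_n^{RMS}\}$, up to sets of probability $o(1)$, to an event of the form $\{\max_{s'\in S_n}(-\varepsilon_{s'}/V_{s'})\ge c_{1-\gamma_n-2\psi_n}^{PIA,0}+\beta_n\}$, whose probability is $\gamma_n+o(1)$ by the invariance principle. The starting point is the decomposition $\hat f_s=f_s+\varepsilon_s$, so $\hat f_s/\hat V_s=f_s/\hat V_s+\varepsilon_s/\hat V_s$. On the event $\mathcal E_1=\{\max_{s\in S_n}|V_s/\hat V_s-1|\le\delta_n\}$ with $\delta_n=o_p(n^{-\kappa})$ provided by Lemma \ref{variance}, I would bound the signal term for $s\in S_n^D$ by a case split: if $f_s\ge0$ then $f_s/\hat V_s\ge0$, while if $f_s<0$ then $-(c_{1-\gamma_n-\psi_n}^{PIA,0}+\beta_n)<f_s/V_s<0$, hence $f_s/\hat V_s=(f_s/V_s)(V_s/\hat V_s)\ge(1+\delta_n)(f_s/V_s)>-(1+\delta_n)(c_{1-\gamma_n-\psi_n}^{PIA,0}+\beta_n)$. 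Since $c_{1-\gamma_n-\psi_n}^{PIA,0}+\beta_n\ge0$ for large $n$, in either case $f_s/\hat V_s\ge-(1+\delta_n)(c_{1-\gamma_n-\psi_n}^{PIA,0}+\beta_n)$, so on $\mathcal E_1$, for every $s\in S_n^D$,
\[
\hat f_s/\hat V_s\ \ge\ -(1+\delta_n)\bigl(c_{1-\gamma_n-\psi_n}^{PIA,0}+\beta_n\bigr)-\max_{s'\in S_n}\bigl(-\varepsilon_{s'}/\hat V_{s'}\bigr).
\]

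Next I would pass from $c^{PIA,0}$ to $c^{PIA}$ and from $\hat V$ to $V$. On $\mathcal E_2=\{c_{1-\gamma_n-\psi_n}^{PIA,0}\le c_{1-\gamma_n}^{PIA}\}$, which has probability $1-o(1)$ by Lemma \ref{lemma: quantile approximation}, the displayed bound is at least $-(1+\delta_n)(c_{1-\gamma_n}^{PIA}+\beta_n)-\max_{s'}(-\varepsilon_{s'}/\hat V_{s'})$, so $S_n^D\not\subset S_n^{RMS}$ forces $\max_{s'}(-\varepsilon_{s'}/\hat V_{s'})\ge(2-(1+\delta_n))(c_{1-\gamma_n}^{PIA}+\beta_n)=(1-\delta_n)(c_{1-\gamma_n}^{PIA}+\beta_n)$. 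Since $c_{1-\gamma_n}^{PIA}=O_p((\log n)^{3/2})$ (Markov's inequality on $T^{PIA}$, whose conditional mean is $O(\sqrt{\log n})$ by the argument of Lemma \ref{normal conv} together with Lemma \ref{variance}), one has $(1-\delta_n)(c_{1-\gamma_n}^{PIA}+\beta_n)=c_{1-\gamma_n}^{PIA}+\beta_n-o_p(1)$; and by Lemma \ref{Lemma: Initial conv} and Lemma \ref{variance}, $|\max_{s'}(-\varepsilon_{s'}/\hat V_{s'})-\max_{s'}(-\varepsilon_{s'}/V_{s'})|\le\max_{s'}|\varepsilon_{s'}/V_{s'}|\cdot\max_{s'}|V_{s'}/\hat V_{s'}-1|=o_p(1)$. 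Collecting these $o_p(1)$ terms and bounding them by a deterministic $\bar\eta_n\to0$ on an event $\mathcal E_3$ of probability $1-o(1)$, I get that on $\mathcal E_1\cap\mathcal E_2\cap\mathcal E_3$,
\[
\{S_n^D\not\subset S_n^{RMS}\}\ \subseteq\ \bigl\{\max_{s'\in S_n}(-\varepsilon_{s'}/V_{s'})\ge c_{1-\gamma_n}^{PIA}+\beta_n-\bar\eta_n\bigr\}.
\]

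Finally I would absorb $\bar\eta_n$ and invoke the invariance principle. On $\mathcal E_2$, $c_{1-\gamma_n}^{PIA}\ge c_{1-\gamma_n-\psi_n}^{PIA,0}$, and the quantile-gap estimate from the proof of Lemma \ref{lemma: quantile approximation} (which uses Lemma \ref{anticoncentration} and the independence of $\{e_{i,m,h}/V_{i,m,h}\}_{i\in I_h}$ for fixed $m,h$) gives $c_{1-\gamma_n-\psi_n}^{PIA,0}-c_{1-\gamma_n-2\psi_n}^{PIA,0}\ge C\psi_n/(p(\log n)^{3/2})\gg\bar\eta_n$, so $c_{1-\gamma_n}^{PIA}+\beta_n-\bar\eta_n\ge c_{1-\gamma_n-2\psi_n}^{PIA,0}+\beta_n$ for large $n$. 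Hence $P\{S_n^D\not\subset S_n^{RMS}\}$ is, up to $o(1)$, at most $P\{\max_{s'}(-\varepsilon_{s'}/V_{s'})\ge c_{1-\gamma_n-2\psi_n}^{PIA,0}+\beta_n\}\le 1-E[g_{1-\gamma_n-2\psi_n}^{PIA,0}(\max_{s'}(-\varepsilon_{s'}/V_{s'}))]$ by the definition of $g^{PIA,0}$. Because replacing $\{\varepsilon_i\}$ by $\{-\varepsilon_i\}$ preserves Assumption \ref{ass:Disturbances} and makes $\max_{s'}(-\varepsilon_{s'}/V_{s'})$ a maximum of linear forms of $\{-\varepsilon_i\}$ with nonnegative weights, Lemma \ref{lemma: aplication of Chatterjee}, applied (via this symmetry, legitimately since all Appendix statements hold uniformly over $\mathcal G$) to $\max_{s'}(-\varepsilon_{s'}/V_{s'})$, yields $E[g_{1-\gamma_n-2\psi_n}^{PIA,0}(\max_{s'}(-\varepsilon_{s'}/V_{s'}))]=1-\gamma_n-2\psi_n+o(1)$; since $\psi_n\to0$ this gives $P\{S_n^D\subset S_n^{RMS}\}\ge1-\gamma_n+o(1)$. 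I expect the main obstacle to be exactly the bookkeeping in the second and third steps: the variance-estimation error (through $\hat V_s$), the non-normality of $\varepsilon$ (through the Chatterjee bound of Theorem \ref{thm:Chatterjee}), and the gap between $c^{PIA}$ and $c^{PIA,0}$ must jointly cost only $o(1)$, which works only because the factor $2$ in the definition of $S_n^{RMS}$ produces the exact cancellation $2c_{1-\gamma_n}^{PIA}-c_{1-\gamma_n-\psi_n}^{PIA,0}\approx c_{1-\gamma_n}^{PIA}$, so the residual event matches the $(1-\gamma_n)$-quantile of the noise with no slack to spare.
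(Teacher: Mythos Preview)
Your strategy is the paper's: reduce the failure event to $\max_{s\in S_n}(-\varepsilon_s/V_s)$ exceeding a near-$(1-\gamma_n)$ quantile of $T^{PIA,0}$, then invoke the second statement of Lemma~\ref{lemma: aplication of Chatterjee}. The decomposition, the use of Lemma~\ref{lemma: quantile approximation} to pass between $c^{PIA}$ and $c^{PIA,0}$, and the final anticoncentration-plus-invariance step all match.

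There is one gap in the bookkeeping. You assert $c_{1-\gamma_n}^{PIA}=O_p((\log n)^{3/2})$ ``by Markov on $T^{PIA}$,'' but Markov only yields $c_{1-\gamma_n}^{PIA}\le C\sqrt{\log n}/\gamma_n$ (conditionally, on the event $\hat V_s\approx V_s$), and the lemma imposes no lower bound on $\gamma_n$. For $\gamma_n$ decaying faster than $1/\log n$ your bound fails, and then the term $\delta_n(c_{1-\gamma_n}^{PIA}+\beta_n)$ in $\bar\eta_n$ is not controlled, so you cannot guarantee $\bar\eta_n\le c_{1-\gamma_n-\psi_n}^{PIA,0}-c_{1-\gamma_n-2\psi_n}^{PIA,0}$.

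The paper avoids this by \emph{not} switching from $c_{1-\gamma_n-\psi_n}^{PIA,0}$ to $c_{1-\gamma_n}^{PIA}$ at the point where the quantile gets multiplied by the variance-ratio error. It keeps $c_{1-\gamma_n-\psi_n}^{PIA,0}$, uses the Markov bound $c_{1-\gamma_n-\psi_n}^{PIA,0}\le C\sqrt{\log n}/(\gamma_n+\psi_n)$, and then exploits the built-in lower bound $\psi_n\ge C_\psi p\log n/n^{\kappa/4}$ to conclude that the error $C\sqrt{\log n}\,n^{-\kappa}/(\gamma_n+\psi_n)$ is $o(1)$ \emph{regardless of $\gamma_n$}; this is absorbed via anticoncentration into a quantile shift $\chi_n=Cp(\log n)^2 n^{-\kappa}/(\gamma_n+\psi_n)=o(1)$. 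In your argument the fix is simply to delay the use of $\mathcal E_2$: first use it to replace the $S_n^{RMS}$ threshold $-2(c_{1-\gamma_n}^{PIA}+\beta_n)$ by $-2(c_{1-\gamma_n-\psi_n}^{PIA,0}+\beta_n)$, so that after the cancellation you get $\max_{s'}(-\varepsilon_{s'}/\hat V_{s'})\ge(1-\delta_n)(c_{1-\gamma_n-\psi_n}^{PIA,0}+\beta_n)$ and the product with $\delta_n$ involves $c_{1-\gamma_n-\psi_n}^{PIA,0}$ rather than $c_{1-\gamma_n}^{PIA}$. With that change your proof goes through.
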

\begin{proof}
By lemma \ref{lemma: quantile approximation}, $P\{c_{1-\gamma_{n}-\psi_{n}}^{PIA,0}>c_{1-\gamma_{n}}^{PIA}\}=o(1)$.
In addition, for any $x\in(-1,1)$,
\[
2/(1+x)-1\geq2(1-x)-1\geq1-2x\geq1-2|x|
\]
So,
\begin{eqnarray*}
P\{S_{n}^{D}\subset S_{n}^{RMS}\} & = & P\{\min_{s\in S_{n}^{D}}(\hat{f}_{s}/\hat{V}_{s})>-2(c_{1-\gamma_{n}}^{PIA}+\beta_{n})\}\\
 & \geq & P\{\min_{s\in S_{n}^{D}}(\hat{f}_{s}/V_{s})\max_{s\in S_{n}^{D}}(V_{s}/\hat{V}_{s})>-2(c_{1-\gamma_{n}}^{PIA}+\beta_{n})\}\\
 & \geq & P\{\min_{s\in S_{n}^{D}}(-(c_{1-\gamma_{n}-\psi_{n}}^{PIA,0}+\beta_{n})+\varepsilon_{s}/V_{s})\max_{s\in S_{n}^{D}}(V_{s}/\hat{V}_{s})>-2(c_{1-\gamma_{n}}^{PIA}+\beta_{n})\}\\
 & = & P\{\min_{s\in S_{n}^{D}}(\varepsilon_{s}/V_{s})>c_{1-\gamma_{n}-\psi_{n}}^{PIA,0}+\beta_{n}-2(c_{1-\gamma_{n}}^{PIA}+\beta_{n})/\max_{s\in S_{n}^{D}}(V_{s}/\hat{V}_{s})\}\\
 & \geq & P\{\max_{s\in S_{n}}(-\varepsilon_{s}/V_{s})<-c_{1-\gamma_{n}-\psi_{n}}^{PIA,0}-\beta_{n}+2(c_{1-\gamma_{n}}^{PIA,0}+\beta_{n})/\max_{s\in S_{n}^{D}}(V_{s}/\hat{V}_{s})\}+o(1)\\
 & \geq & P\{\max_{s\in S_{n}}(-\varepsilon_{s}/V_{s})<(c_{1-\gamma_{n}-\psi_{n}}^{PIA,0}+\beta_{n})(1-2|\max_{s\in S_{n}^{D}}(V_{s}/\hat{V}_{s})-1|\}+o(1)
\end{eqnarray*}
Combining lemma \ref{normal conv} and Markov inequality yields
\begin{eqnarray*}
\gamma_{n}+\psi_{n} & = & 1-E[g_{1-\gamma_{n}-\psi_{n}}^{PIA,0}(\max_{s\in S_{n}}(e_{s}/V_{s}))]\\
 & \leq & P\{\max_{s\in S_{n}}(e_{s}/V_{s})>c_{1-\gamma_{n}-\psi_{n}}^{PIA,0}\}\\
 & \leq & C(\log n)^{1/2}/c_{1-\gamma_{n}-\psi_{n}}^{PIA,0}
\end{eqnarray*}
So, $c_{1-\gamma_{n}-\psi_{n}}^{PIA,0}\leq C(\log n)^{1/2}/(\gamma_{n}+\psi_{n})$.
By lemma \ref{variance}, $|\max_{s\in S_{n}^{D}}(V_{s}/\hat{V}_{s})-1|<Cn^{-\kappa}$
wpa1. So, wpa1, 
\[
(c_{1-\gamma_{n}-\psi_{n}}^{PIA,0}+\beta_{n})(1-2|\max_{s\in S_{n}^{D}}(V_{s}/\hat{V}_{s})-1|)\geq c_{1-\gamma_{n}-\psi_{n}}^{PIA,0}+\beta_{n}-C(\log n)^{1/2}n^{-\kappa}/(\gamma_{n}+\psi_{n})
\]
Take $\chi_{n}=Cp(\log n)^{2}n^{-\kappa}/(\gamma_{n}+\psi_{n})$.
Then $\chi_{n}=o(1)$ by the choice of $\psi_{n}$. By lemma \ref{anticoncentration},
\[
c_{1-\gamma_{n}-\psi_{n}}^{PIA,0}+\beta_{n}-C(\log n)^{1/2}n^{-\kappa}/(\gamma_{n}+\psi_{n})\geq c_{1-\gamma_{n}-\psi_{n}-\chi_{n}}^{PIA,0}+\beta_{n}
\]
Therefore,
\begin{eqnarray*}
P\{S_{n}^{D}\subset S_{n}^{RMS}\} & \geq & P\{\max_{s\in S_{n}}(-\varepsilon_{s}/V_{s})<c_{1-\gamma_{n}-\psi_{n}-\chi_{n}}^{PIA,0}+\beta_{n}\}+o(1)\\
 & \geq & 1-\gamma_{n}-\psi_{n}-\chi_{n}+o(1)\\
 & = & 1-\gamma_{n}+o(1)
\end{eqnarray*}
since $\psi_{n}+\chi_{n}=o(1)$.\end{proof}
\begin{lem}
\label{Lemma: Inclusion with null function}If $f=0_{p}$, then $P\{S_{n}^{RMS}=S_{n}\}\geq1-\gamma_{n}+o(1)$.\end{lem}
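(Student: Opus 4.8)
The plan is to reduce the claim to lemma \ref{Lemma: Inclusion}. The key observation is that when $f\equiv 0_p$ the regression component of every smoothed value vanishes: $f_{i,m,h}=\sum_{j=1}^{n}w_h(X_i,X_j)f_m(X_j)=0$ for all $(i,m,h)\in S_n$, so $\hat f_s=\varepsilon_s$ for all $s\in S_n$ and in particular $f_s/V_s=0$ for every $s\in S_n$. I would first check that $c_{1-\gamma_n-\psi_n}^{PIA,0}+\beta_n>0$ for all large $n$. Indeed, from the definition of $c_{1-\gamma_n-\psi_n}^{PIA,0}$ through $g_0$ (and $g_0(x)\le \mathbf{1}\{x<1\}$), the number $c_{1-\gamma_n-\psi_n}^{PIA,0}+\beta_n$ is at least the $(1-\gamma_n-\psi_n)$-quantile of $\max_{s\in S_n}(e_s/V_s)$; since the cdf of this maximum lies pointwise below that of any single $e_{s_0}/V_{s_0}\sim N(0,1)$, this quantile is at least the $(1-\gamma_n-\psi_n)$-quantile of the standard normal, which is strictly positive once $1-\gamma_n-\psi_n>1/2$, i.e. for all large $n$. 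Hence, for all large $n$, $f_s/V_s=0>-(c_{1-\gamma_n-\psi_n}^{PIA,0}+\beta_n)$ for every $s\in S_n$, so $S_n^{D}=S_n$.

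Given this identification, the result is immediate: by lemma \ref{Lemma: Inclusion}, $P\{S_n^{D}\subset S_n^{RMS}\}\geq 1-\gamma_n+o(1)$, and since $S_n^{D}=S_n$ for large $n$ this reads $P\{S_n\subset S_n^{RMS}\}\geq 1-\gamma_n+o(1)$. Because $S_n^{RMS}\subset S_n$ for every realization by the definition of $S_n^{RMS}$, the events $\{S_n\subset S_n^{RMS}\}$ and $\{S_n^{RMS}=S_n\}$ coincide, and therefore $P\{S_n^{RMS}=S_n\}\geq 1-\gamma_n+o(1)$, as claimed.

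I expect no real obstacle beyond the elementary positivity check above, since lemma \ref{Lemma: Inclusion} already does the substantive work. If one prefers a self-contained argument that does not invoke lemma \ref{Lemma: Inclusion}, one can proceed directly: using $\hat f_s=\varepsilon_s$, the event $\{S_n^{RMS}=S_n\}$ equals $\{\max_{s\in S_n}(-\varepsilon_s/\hat V_s)<2(c_{1-\gamma_n}^{PIA}+\beta_n)\}$, and one lower-bounds its probability by combining lemma \ref{lemma: quantile approximation} (so that $c_{1-\gamma_n}^{PIA}\geq c_{1-\gamma_n-\psi_n}^{PIA,0}$ with probability $1-o(1)$), lemma \ref{variance} ($\max_{s\in S_n}|V_s/\hat V_s-1|=o_p(n^{-\kappa})$) together with lemma \ref{Lemma: Initial conv} ($\max_{s\in S_n}|\varepsilon_s/V_s|=O_p(\sqrt{\log n})$) to pass from $\hat V_s$ to $V_s$ at negligible cost, and finally lemma \ref{lemma: aplication of Chatterjee} with the anticoncentration bound of lemma \ref{anticoncentration} to absorb the resulting $o(1)$ quantile perturbations — exactly the chain of estimates used in the proof of lemma \ref{Lemma: Inclusion}.
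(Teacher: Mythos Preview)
Your main argument is correct and takes a genuinely shorter route than the paper. The paper proves the lemma from scratch: it writes $P\{S_n^{RMS}=S_n\}=P\{\min_{s\in S_n}(\varepsilon_s/\hat V_s)>-2(c_{1-\gamma_n}^{PIA}+\beta_n)\}$, replaces $c_{1-\gamma_n}^{PIA}$ by $c_{1-\gamma_n-\psi_n}^{PIA,0}$ via lemma~\ref{lemma: quantile approximation}, passes from $\hat V_s$ to $V_s$ via lemma~\ref{variance}, uses the slack in the factor $2$ to drop to $-(c_{1-\gamma_n-\psi_n}^{PIA,0}+\beta_n)$, and then applies lemma~\ref{lemma: aplication of Chatterjee} to $\max_{s\in S_n}(-\varepsilon_s/V_s)$. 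Your reduction instead exploits that $f=0_p$ forces $f_s/V_s=0$ for every $s$, and since $c_{1-\gamma_n-\psi_n}^{PIA,0}+\beta_n>0$ (your quantile comparison with a single $N(0,1)$ coordinate is valid, and $\gamma_n+\psi_n<1/2$ holds for large $n$ by the standing convention $\gamma_n<\alpha/2$ and $\psi_n\to 0$), one gets $S_n^D=S_n$ deterministically, after which lemma~\ref{Lemma: Inclusion} and the trivial inclusion $S_n^{RMS}\subset S_n$ finish the job. This buys you a two-line proof at the cost of a dependency on lemma~\ref{Lemma: Inclusion}; the paper's version is self-contained but essentially reruns the same chain of estimates you cite in your final paragraph. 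That alternative sketch you give is in fact exactly the paper's proof.
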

\begin{proof}
By lemma \ref{lemma: quantile approximation}, $P\{c_{1-\gamma_{n}-\psi_{n}}^{PIA,0}>c_{1-\gamma_{n}}^{PIA}\}=o(1)$.
By lemma \ref{variance}, $\max_{s\in S_{n}}(V_{s}/\hat{V}_{s})\leq1+n^{-\kappa}$
wpa1 as $n\rightarrow\infty$. If $f=0_{p}$, then for any $s\in S_{n}$,
$\hat{f}_{s}=\varepsilon_{s}$. So, 
\begin{eqnarray*}
P\{S_{n}^{RMS}=S_{n}\} & = & P\{\min_{s\in S_{n}}(\varepsilon_{s}/\hat{V}_{s})>-2(c_{1-\gamma_{n}}^{PIA}+\beta_{n})\}\\
 & \geq & P\{\min_{s\in S_{n}}(\varepsilon_{s}/\hat{V}_{s})>-2(c_{1-\gamma_{n}-\psi_{n}}^{PIA,0}+\beta_{n})\}+o(1)\\
 & \geq & P\{\min_{s\in S_{n}}(\varepsilon_{s}/V_{s})\max_{s\in S_{n}}(V_{s}/\hat{V}_{s})>-2(c_{1-\gamma_{n}-\psi_{n}}^{PIA,0}+\beta_{n})\}+o(1)\\
 & \geq & P\{\min_{s\in S_{n}}(\varepsilon_{s}/V_{s})(1+n^{-\kappa})>-2(c_{1-\gamma_{n}-\psi_{n}}^{PIA,0}+\beta_{n})\}+o(1)\\
 & \geq & P\{\min_{s\in S_{n}}(\varepsilon_{s}/V_{s})>-2(c_{1-\gamma_{n}-\psi_{n}}^{PIA,0}+\beta_{n})(1-n^{-\kappa})\}+o(1)\\
 & \geq & P\{\min_{s\in S_{n}}(\varepsilon_{s}/V_{s})>-(c_{1-\gamma_{n}-\psi_{n}}^{PIA,0}+\beta_{n})\}+o(1)\\
 & = & P\{\max_{s\in S_{n}}(-\varepsilon_{s}/V_{s})<(c_{1-\gamma_{n}-\psi_{n}}^{PIA,0}+\beta_{n})\}+o(1)\\
 & \geq & E[g_{1-\gamma_{n}-\psi_{n}}^{PIA,0}(\max_{s\in S_{n}}(-\varepsilon_{s}/V_{s}))]+o(1)
\end{eqnarray*}
Combining these results with lemma \ref{lemma: aplication of Chatterjee}
yields
\[
P\{S_{n}^{RMS}=S_{n}\}\geq1-\gamma_{n}-\psi_{n}+o(1)
\]
The result follows by noting that $\psi_{n}=o(1)$.\end{proof}
\begin{lem}
\label{lemma: Growth of critical value}$c_{1-\alpha}^{RMS}+\beta_{n}\leq c_{1-\alpha}^{PIA}+\beta_{n}=O_{p}(\sqrt{\log n})$.\end{lem}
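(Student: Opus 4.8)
The plan is to prove the two halves of the displayed chain separately: the inequality $c_{1-\alpha}^{RMS}\le c_{1-\alpha}^{PIA}$, which is a pure monotonicity fact, and the order bound $c_{1-\alpha}^{PIA}+\beta_{n}=O_{p}((\log n)^{1/2})$; since $\beta_{n}\le C$ by Assumption \ref{ass: test function}(i), adding $\beta_{n}$ to the first inequality and combining with the second yields the statement.

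For the inequality I would work conditionally on the data and use that, for $P=PIA,RMS$, the critical value $c_{1-\alpha}^{P}$ is the $c$ solving $\phi^{P}(c):=E^{*}[g_{0}((T^{P}-c)/\beta_{n})]=1-\alpha$, where $E^{*}$ is expectation over the bootstrap Gaussians and $g_{0}$ is weakly decreasing with $g_{0}(-\infty)=1$, $g_{0}(+\infty)=0$; each $\phi^{P}$ is then weakly increasing in $c$, ranges over $(0,1)$, and (since $T^{P}$ is a maximum of Gaussians, hence has a continuous law) the solution is unique. Because $S_{n}^{RMS}\subseteq S_{n}$ by construction, $T^{RMS}=\max_{s\in S_{n}^{RMS}}(\hat e_{s}/\hat V_{s})\le\max_{s\in S_{n}}(\hat e_{s}/\hat V_{s})=T^{PIA}$ for every realisation of the bootstrap noise, so $g_{0}((T^{RMS}-c)/\beta_{n})\ge g_{0}((T^{PIA}-c)/\beta_{n})$ for all $c$ and hence $\phi^{RMS}(c)\ge\phi^{PIA}(c)$ for all $c$. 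Consequently $\{c:\phi^{RMS}(c)\ge1-\alpha\}\supseteq\{c:\phi^{PIA}(c)\ge1-\alpha\}$, and since these level sets are half-lines $[c^{*},\infty)$, $c_{1-\alpha}^{RMS}\le c_{1-\alpha}^{PIA}$. (If the truncation leaves $S_{n}^{RMS}$ empty one takes $T^{RMS}\equiv-\infty$ and the inequality is trivial.) The identical argument runs at any level $1-\nu$, which is what one needs if $c_{1-\alpha}^{RMS}$ is read as the shifted value $c_{1-\alpha+2\gamma_{n}}^{RMS}$ appearing in the RMS test function.

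For the order bound, the key observation is that, conditionally on the data, each $\hat e_{s}/\hat V_{s}$ is exactly a standard Gaussian, since $\mathrm{Var}^{*}(\hat e_{i,m,h})=\sum_{j}w_{h}^{2}(X_{i},X_{j})\hat\Sigma_{j,mm}=\hat V_{i,m,h}^{2}$ (working, if one wishes to be careful about degeneracy, on the probability-one-in-the-limit event of Lemma \ref{variance} where $\hat V_{s}\asymp V_{s}$ uniformly). Hence the proof of Lemma \ref{normal conv} applies verbatim with $(\hat e,\hat V)$ in place of $(e,V)$ and gives $E^{*}[\max_{s\in S_{n}}|\hat e_{s}/\hat V_{s}|]\le C(\log n)^{1/2}$, the constant being deterministic because it depends only on $|S_{n}|\le Cn^{\phi}$ and on the standard‑Gaussian marginals. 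Markov's inequality then yields $P^{*}(T^{PIA}>t)\le C(\log n)^{1/2}/t$, so $\phi^{PIA}(t)\ge P^{*}(T^{PIA}\le t)\ge1-C(\log n)^{1/2}/t\ge1-\alpha$ as soon as $t\ge(C/\alpha)(\log n)^{1/2}$; by monotonicity of $\phi^{PIA}$ this forces $c_{1-\alpha}^{PIA}\le(C/\alpha)(\log n)^{1/2}$ for large $n$. For the matching lower bound, $T^{PIA}$ stochastically dominates a single standard Gaussian, so its median is nonnegative; combined with $\phi^{PIA}(c)\le P^{*}(T^{PIA}\le c+\beta_{n})$ and $1-\alpha>1/2$ this gives $c_{1-\alpha}^{PIA}+\beta_{n}\ge0$, i.e.\ $c_{1-\alpha}^{PIA}\ge-C$. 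Thus $|c_{1-\alpha}^{PIA}|\le C(\log n)^{1/2}$ for large $n$, and adding $\beta_{n}\le C$ gives $c_{1-\alpha}^{PIA}+\beta_{n}=O_{p}((\log n)^{1/2})$.

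I do not expect a real obstacle: everything reduces to the monotone dependence of the $g_{0}$‑smoothed critical value on the statistic and on the index set, together with the elementary fact that the bootstrap statistic is a maximum of (correlated) standard Gaussians over a polynomially large index set, which is already packaged in Lemma \ref{normal conv}. The only point requiring care is keeping the sign conventions for $g_{0}$ straight and pinning down whether $c_{1-\alpha}^{RMS}$ denotes the level‑$(1-\alpha)$ quantile of $T^{RMS}$ or the shifted value $c_{1-\alpha+2\gamma_{n}}^{RMS}$ used by the test function; in either case the two arguments above, run at the appropriate confidence level, deliver the result.
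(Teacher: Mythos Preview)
Your proposal is correct. For the inequality $c_{1-\alpha}^{RMS}\le c_{1-\alpha}^{PIA}$ you and the paper use the same idea, namely $S_{n}^{RMS}\subseteq S_{n}$; your write-up is more explicit about the monotonicity of $c\mapsto E^{*}[g_{0}((T^{P}-c)/\beta_{n})]$, but the content is identical.

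For the $O_{p}((\log n)^{1/2})$ bound on $c_{1-\alpha}^{PIA}$ the paper takes a slightly different route: it invokes Lemma~\ref{lemma: quantile approximation} to pass from the data-dependent $c_{1-\alpha}^{PIA}$ (built with $\hat\Sigma$) to the deterministic $c_{1-\alpha/2}^{PIA,0}$ (built with the true $\Sigma$), and then bounds the latter via Markov plus Lemma~\ref{normal conv}. You instead observe directly that, conditionally on the data, each $\hat e_{s}/\hat V_{s}$ is a standard Gaussian because $\mathrm{Var}^{*}(\hat e_{i,m,h})=\hat V_{i,m,h}^{2}$, and then run the Orlicz/Markov argument of Lemma~\ref{normal conv} on $T^{PIA}$ itself. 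This is a genuine shortcut: it bypasses the approximation lemma entirely and yields a bound that is deterministic in the bootstrap and depends on the data only through $|S_{n}|$, hence is immediately $O_{p}((\log n)^{1/2})$. The paper's detour is natural given that Lemma~\ref{lemma: quantile approximation} is already available and is needed elsewhere, but your argument is self-contained and conceptually cleaner here. Your caveat about working on the event $\{\hat V_{s}\asymp V_{s}\}$ from Lemma~\ref{variance} to ensure $\hat V_{s}>0$ is the right way to handle the only possible degeneracy.
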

\begin{proof}
Since $S_{n}^{RMS}\subseteq S_{n}$, $c_{1-\alpha}^{RMS}+\beta_{n}\leq c_{1-\alpha}^{PIA}+\beta_{n}$.
By lemma \ref{lemma: quantile approximation}, $P\{c_{1-\alpha/2}^{PIA,0}<c_{1-\alpha}^{PIA}\}=o(1)$.
By assumption \ref{ass: test function}, $\beta_{n}\leq C$ for some
$C>0$. Markov inequality and lemma \ref{normal conv}give $c_{1-\alpha/2}^{PIA,0}\leq C\sqrt{\log n}$
for $C$ large enough. Combining these results yields the statement
of the lemma. \end{proof}
\begin{lem}
\label{lemma: restricted holder}Let $\tau>1$, $L>0$, $x=(x_{1},...,x_{d})\in\mathbb{R}^{d}$,
$h=(h_{1},...,h_{d})\in\mathbb{R}^{d}$, and $f\in\mathcal{F}_{\varsigma}(\tau,L)$
for some $\varsigma=1,...,[\tau]$. If $\varsigma<[\tau]$, assume
that for any $x\in\mathbb{R}^{d}$ and all $d$-tuples of nonnegative
integers $\alpha=(\alpha_{1},...,\alpha_{d})$ satisfying $|\alpha|=\varsigma+1$,
$|D^{\alpha}f(x)|\leq C$ for some constant $C>0$. Then $\partial f(x_{1},...,x_{d})/\partial x_{m}\geq0$
for all $m=1,...,d$ implies that for any $y=(y_{1},...,y_{d})\in\mathbb{R}^{d}$
satisfying $0\leq y\leq h$, 
\[
f(x+y)-f(x)\geq-\frac{\max(L^{\tau-[\tau]},C)}{(\tau-\varsigma+1)...(\tau-\varsigma+\varsigma)}\Vert h\Vert^{\zeta}
\]
for $\zeta=\min(\varsigma+1,\tau)$.\end{lem}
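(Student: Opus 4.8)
The plan is to reduce the statement to a one‑dimensional problem along the segment from $x$ to $x+y$ and then exploit the defining ``flatness at critical points'' property of $\mathcal{F}_{\varsigma}(\tau,L)$ together with a Taylor expansion. If $y=0$ the claim is trivial, so assume $y\neq0$, write $l=y/\Vert y\Vert$ and $g(t)=f(x+ty)$ for $t\in[0,1]$; since $\tau>1$ we have $[\tau]\ge1$, so $g$ has continuous derivatives up to order $[\tau]$ with $g^{(k)}(t)=\Vert y\Vert^{k}f^{(k,l)}(x+ty)$ for $k\le[\tau]$. The hypothesis $\partial f(x)/\partial x_{m}\ge0$ for all $m$ together with $0\le y$ gives $g'(0)=\sum_{m}y_{m}\,\partial f(x)/\partial x_{m}\ge0$. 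Note also that the right‑hand side of the asserted inequality is $\le0$, so it suffices to produce a lower bound of the stated magnitude on $g(1)-g(0)=f(x+y)-f(x)$ only in the case where this quantity is negative.

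First I would locate a good base point for the expansion. Let $s\in[0,1]$ be a maximizer of $g$ on $[0,1]$. If $s=1$ then $g(1)=\max_{[0,1]}g\ge g(0)$ and the claim holds trivially. If $s\in(0,1)$ then $g'(s)=0$. If $s=0$ then $g$ has a maximum at the left endpoint, so $g'(0)\le0$, which combined with $g'(0)\ge0$ forces $g'(0)=0$; thus $g'(s)=0$ in this case too. Hence in every nontrivial case there is $s\in[0,1)$ with $g(s)\ge g(0)$ and $g'(s)=0$, i.e. $f^{(1,l)}(x+sy)=0$. By the definition of $\mathcal{F}_{\varsigma}(\tau,L)$ this yields $f^{(k,l)}(x+sy)=0$, hence $g^{(k)}(s)=0$, for every $k=1,\dots,\varsigma$. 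Since $g(s)\ge g(0)$ we have $g(1)-g(0)\ge g(1)-g(s)$, so it remains to bound $|g(1)-g(s)|$ from above.

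Next I would Taylor‑expand $g$ around $s$ to order $\varsigma-1$ with the integral form of the remainder and use that $g'(s)=\dots=g^{(\varsigma)}(s)=0$, which collapses the polynomial part to $g(s)$ and leaves
\[
g(1)-g(s)=\frac{1}{(\varsigma-1)!}\int_{s}^{1}(1-u)^{\varsigma-1}\bigl(g^{(\varsigma)}(u)-g^{(\varsigma)}(s)\bigr)\,du .
\]
To control the increment of $g^{(\varsigma)}$ I would pass back to directional derivatives, $g^{(\varsigma)}(u)-g^{(\varsigma)}(s)=\Vert y\Vert^{\varsigma}\bigl(f^{(\varsigma,l)}(x+uy)-f^{(\varsigma,l)}(x+sy)\bigr)$: when $\varsigma=[\tau]$ this is controlled by Hölder continuity of the $[\tau]$‑th derivatives of $f$ (modulus $\propto L^{\tau-[\tau]}\Vert y\Vert^{\tau-[\tau]}(u-s)^{\tau-[\tau]}$); when $\varsigma<[\tau]$ it is controlled by the assumed bound $|D^{\alpha}f|\le C$ for $|\alpha|=\varsigma+1$, which makes $f^{(\varsigma,l)}$ Lipschitz and gives an increment $\lesssim C\Vert y\Vert^{\varsigma+1}(u-s)$. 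In both cases $|g^{(\varsigma)}(u)-g^{(\varsigma)}(s)|\le\max(L^{\tau-[\tau]},C)\,\Vert y\Vert^{\zeta}(u-s)^{\zeta-\varsigma}$ with $\zeta=\min(\varsigma+1,\tau)$. Substituting, computing the elementary Beta integral $\int_{s}^{1}(1-u)^{\varsigma-1}(u-s)^{\zeta-\varsigma}\,du=(1-s)^{\zeta}\,\Gamma(\varsigma)\Gamma(\zeta-\varsigma+1)/\Gamma(\zeta+1)$, cancelling $(\varsigma-1)!=\Gamma(\varsigma)$, and using $1-s\le1$ and $\Vert y\Vert\le\Vert h\Vert$ (immediate from $0\le y\le h$), one is left with $\max(L^{\tau-[\tau]},C)\,\Vert h\Vert^{\zeta}$ divided by $\Gamma(\zeta+1)/\Gamma(\zeta-\varsigma+1)=(\zeta-\varsigma+1)(\zeta-\varsigma+2)\cdots\zeta$, which equals the claimed $(\tau-\varsigma+1)(\tau-\varsigma+2)\cdots(\tau-\varsigma+\varsigma)$.

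The step I expect to require the most care is the constant bookkeeping: translating hypotheses phrased through coordinate partials $D^{\alpha}f$ and the gradient into clean statements about the one‑dimensional objects $f^{(k,l)}$ without losing track of normalizations, and matching the $\Gamma$‑function expression coming out of the Beta integral exactly to the stated product. The only other delicate point is the case analysis on the maximizer $s$, but that is handled precisely because $g'(0)\ge0$ upgrades the awkward endpoint case $s=0$ to a genuine critical point at which the $\mathcal{F}_{\varsigma}(\tau,L)$ flatness can be invoked.
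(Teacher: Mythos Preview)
Your proposal is correct and follows essentially the same approach as the paper: reduce to the one-dimensional function $g(t)=f(x+ty)$, locate a point where the first $\varsigma$ directional derivatives vanish via the $\mathcal{F}_\varsigma$ flatness property, and control the remainder by the H\"older (resp.\ Lipschitz) bound on the $\varsigma$-th directional derivative. Your choice of the maximizer of $g$ on $[0,1]$ (rather than the paper's zero $t_0$ of $g'$) and your explicit integral-remainder/Beta-integral computation in place of the paper's ``integrating it $\varsigma$ times'' are minor refinements of the same argument.
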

\begin{proof}
For any $y=(y_{1},...,y_{d})\in\mathbb{R}^{d}$ satisfying $0\leq y\leq h$,
choose a direction $l=(l_{1},...,l_{d})\in\mathbb{R}^{d}$ by setting
$l_{m}=y_{m}/\sqrt{\sum_{j=1}^{d}y_{j}^{2}}$ for all $m=1,...,d$.
Let $f^{(k,l)}(x)$ denotes $k$-th derivative of $f$ in direction
$l$ evaluated at point $x$. Then $f^{(1,l)}(x)\geq0$. If $f^{(1,l)}(x+ty)\geq0$
for all $t\in(0,1)$, then the result is obvious. If $f^{(1,l)}(x+t_{0}y)=0$
for some $t_{0}\in(0,1)$, then $f^{(k,l)}(x+t_{0}y)=0$ for all $k=1,...,\varsigma$.
If $\varsigma=[\tau]$, then by Holder smoothness, $f^{(\varsigma,l)}(x+ty)\geq-(L(t-t_{0})\Vert y\Vert)^{\tau-\varsigma}$.
Integrating it $\varsigma$ times gives 
\begin{equation}
f(x+y)-f(x)\geq-\frac{L^{\tau-[\tau]}}{(\gamma-\varsigma+1)...(\gamma-\varsigma+K)}\Vert y\Vert^{\zeta}\label{eq: holder derivation}
\end{equation}
since $\zeta=\tau$ in this case. If $\varsigma<[\tau]$, then $f^{(\varsigma,l)}(x+ty)\geq-C(t-t_{0})\Vert y\Vert$.
Integrating it $\varsigma$ times gives the inequality similar to
(\ref{eq: holder derivation}) with $\varsigma+1$ instead of $\zeta$
and $C$ instead of $L^{\tau-\varsigma}$. The result follows by noting
that $\Vert y\Vert\leq\Vert h\Vert$.
\end{proof}

\subsection{\label{sub:Proofs-of-Theorems}Proofs of Theorems}
\begin{proof}[Proof of Theorem 1]
Under the null hypothesis, for any $s\in S_{n}$, $f_{s}\leq0$ since
the kernel $K$ is positive by assumption \ref{ass:The-kernel}. By
lemma \ref{lemma: quantile approximation}, $P\{c_{1-\alpha-\psi_{n}}^{PIA,0}>c_{1-\alpha}^{PIA}\}=o(1)$.
By lemma \ref{variance}, $\max_{s\in S_{n}}(V_{s}/\hat{V}_{s})\leq1+n^{-\kappa}$
wpa1 as $n\rightarrow\infty$. So,
\begin{eqnarray*}
E[g_{1-\alpha}^{PIA}(\hat{T})] & = & E[g_{1-\alpha}^{PIA}(\max_{s\in S_{n}}(\hat{f}_{s}/\hat{V}_{s}))]\\
 & \geq & E[g_{1-\alpha}^{PIA}(\max_{s\in S_{n}}(\varepsilon_{s}/\hat{V}_{s}))]\\
 & \geq & E[g_{1-\alpha-\psi_{n}}^{PIA,0}(\max_{s\in S_{n}}(\varepsilon_{s}/\hat{V}_{s}))]+o(1)\\
 & \geq & E[g_{1-\alpha-\psi_{n}}^{PIA,0}(\max_{s\in S_{n}}(\varepsilon_{s}/V_{s})\max_{s\in S_{n}}(V_{s}/\hat{V}_{s}))]+o(1)\\
 & \geq & E[g_{1-\alpha-\psi_{n}}^{PIA,0}(\max_{s\in S_{n}}(\varepsilon_{s}/V_{s})(1+n^{-\kappa}))]+o(1)\\
 & \geq & E[g_{0}((\max_{s\in S_{n}}(\varepsilon_{s}/V_{s})(1+n^{-\kappa})-c_{1-\alpha-\psi_{n}}^{PIA,0})/\beta_{n})]+o(1)
\end{eqnarray*}
Denote $\delta_{n}=(\log n/n^{\kappa})^{1/2}$. Two different cases
will be considered depending on whether $\beta_{n}>\delta_{n}$ or
$\beta_{n}\leq\delta_{n}$. Divide the sequence $\{n\}_{n=1}^{\infty}$
into two subsequences, $\{n_{k}^{1}\}_{k=1}^{\infty}$ and $\{n_{k}^{2}\}_{k=1}^{\infty}$,
so that $\beta_{n_{k}^{1}}>\delta_{n_{k}^{1}}$ and $\beta_{n_{k}^{2}}\leq\delta_{n_{k}^{2}}$
for all $k\in\mathbb{N}$. First, consider the subsequence $\{n_{k}^{1}\}_{k=1}^{\infty}$.
For simplicity of notation, I will drop indices writing $n$ instead
of $n_{k}^{1}$. By lemma \ref{Lemma: Initial conv}, $\max_{s\in S_{n}}|\varepsilon_{s}/V_{s}|=O_{p}(\sqrt{\log n})$.
So, $\max_{s\in S_{n}}|\varepsilon_{s}/V_{s}|/(n^{\kappa}\beta_{n})<n^{-\kappa/4}$
wpa1 as $n\rightarrow\infty$. Since $g_{0}$ has bounded first derivative,
\begin{multline*}
E[g_{0}((\max_{s\in S_{n}}(\varepsilon_{s}/V_{s})(1+n^{-\kappa})-c_{1-\alpha-\psi_{n}}^{PIA,0})/\beta_{n})]\\
=E[g_{0}((\max_{s\in S_{n}}(\varepsilon_{s}/V_{s})-c_{1-\alpha-\psi_{n}}^{PIA,0})/\beta_{n})]+o(1)
\end{multline*}
The last expression equals $E[g_{1-\alpha-\psi_{n}}^{PIA,0}(\max_{s\in S_{n}}(\varepsilon_{s}/V_{s}))]+o(1$).
Combining these results and lemma \ref{lemma: aplication of Chatterjee}
yields
\[
E[g_{1-\alpha}^{PIA}(\hat{T})]\geq1-\alpha-\psi_{n}+o(1)=1-\alpha+o(1)
\]

Next, consider the subsequence $\{n_{k}^{2}\}_{k=1}^{\infty}$. Again,
I will write $n$ instead of $n_{k}^{2}$. Take $\chi_{n}=Cp(\log n)^{2}n^{-\kappa/2}$
with large enough $C$. Note that $\chi_{n}=o(1)$. As in lemma \ref{Lemma: Inclusion},
\[
c_{1-\alpha-\psi_{n}}^{PIA,0}(1-n^{-\kappa})-\beta_{n}\geq c_{1-\alpha-\psi_{n}-\chi_{n}}^{PIA,0}
\]
Continuing the chain of inequalities from above gives 
\begin{eqnarray*}
E[g_{1-\alpha}^{PIA}(\hat{T})] & \geq & P\{\max_{s\in S_{n}}(\varepsilon_{s}/V_{s})(1+n^{-\kappa})\leq c_{1-\alpha-\psi_{n}}^{PIA,0}\}+o(1)\\
 & \geq & P\{\max_{s\in S_{n}}(\varepsilon_{s}/V_{s})\leq c_{1-\alpha-\psi_{n}}^{PIA,0}(1-n^{-\kappa})\}+o(1)\\
 & \geq & P\{\max_{s\in S_{n}}(\varepsilon_{s}/V_{s})-\beta_{n}\leq c_{1-\alpha-\psi_{n}-\chi_{n}}^{PIA,0}\}+o(1)\\
 & \geq & E[g_{1-\alpha-\psi_{n}-\chi_{n}}^{PIA,0}(\max_{s\in S_{n}}(\varepsilon_{s}/V_{s}))]
\end{eqnarray*}
An application of lemma \ref{lemma: aplication of Chatterjee} yields
\[
E[g_{1-\alpha}^{PIA}(\hat{T})]\geq1-\alpha-\psi_{n}-\chi_{n}+o(1)=1-\alpha+o(1)
\]

Now consider the RMS test function. By lemma \ref{Lemma: Inclusion},
$P\{c_{1-\alpha+2\gamma_{n}}^{D}>c_{1-\alpha+2\gamma_{n}}^{RMS}\}\leq\gamma_{n}+o(1)$.
By lemma \ref{Lemma: Restriction}, $P\{\max_{s\in S_{n}\backslash S_{n}^{D}}\hat{f}_{s}/\hat{V}_{s}>0\}\leq\gamma_{n}+o(1)$.
So,
\begin{eqnarray*}
E[g_{1-\alpha+2\gamma_{n}}^{RMS}(\hat{T})] & = & E[g_{1-\alpha+2\gamma_{n}}^{RMS}(\max_{s\in S_{n}}(\hat{f}_{s}/\hat{V}_{s}))]\\
 & \geq & E[g_{1-\alpha+2\gamma_{n}}^{D}(\max_{s\in S_{n}}(\hat{f}_{s}/\hat{V}_{s}))]-\gamma_{n}+o(1)\\
 & \geq & E[g_{1-\alpha+2\gamma_{n}}^{D}(\max_{s\in S_{n}^{D}}(\hat{f}_{s}/\hat{V}_{s}))]-2\gamma_{n}+o(1)
\end{eqnarray*}
Since $S_{n}^{D}$ is nonstochastic, from this point, the argument
similar to that used in the proof for the plug-in test function with
$S_{n}^{D}$ instead of $S_{n}$ yields the result for the RMS critical
values.

Next assume that $f=0_{p}$. By lemma \ref{lemma: quantile approximation},
$P\{c_{1-\alpha+\psi_{n}}^{PIA,0}<c_{1-\alpha}^{PIA}\}=o(1)$. By
lemma \ref{variance}, $\min_{s\in S_{n}}(V_{s}/\hat{V}_{s})\geq1-n^{-\kappa}$
wpa1 as $n\rightarrow\infty$. So,
\begin{eqnarray*}
E[g_{1-\alpha}^{PIA}(\hat{T})] & = & E[g_{1-\alpha}^{PIA}(\max_{s\in S_{n}}(\hat{f}_{s}/\hat{V}_{s}))]\\
 & = & E[g_{1-\alpha}^{PIA}(\max_{s\in S_{n}}(\varepsilon_{s}/\hat{V}_{s}))]\\
 & \leq & E[g_{1-\alpha+\psi_{n}}^{PIA,0}(\max_{s\in S_{n}}(\varepsilon_{s}/\hat{V}_{s}))]+o(1)\\
 & \leq & E[g_{1-\alpha+\psi_{n}}^{PIA,0}(\max_{s\in S_{n}}(\varepsilon_{s}/V_{s})\min_{s\in S_{n}}(V_{s}/\hat{V}_{s}))]+o(1)\\
 & \leq & E[g_{1-\alpha+\psi_{n}}^{PIA,0}(\max_{s\in S_{n}}(\varepsilon_{s}/V_{s})(1-n^{-\kappa}))]+o(1)\\
 & = & E[g_{0}((\max_{s\in S_{n}}(\varepsilon_{s}/V_{s})(1-n^{-\kappa})-c_{1-\alpha+\psi_{n}}^{PIA,0})/\beta_{n})]+o(1)
\end{eqnarray*}
For the subsequence $\{n_{k}^{1}\}_{k=1}^{\infty}$, writing $n$
instead of $n_{k}^{1}$,
\begin{multline*}
E[g_{0}((\max_{s\in S_{n}}(\varepsilon_{s}/V_{s})(1-n^{-\kappa})-c_{1-\alpha+\psi_{n}}^{PIA,0}))/\beta_{n}]\\
=E[g_{0}((\max_{s\in S_{n}}(\varepsilon_{s}/V_{s})-c_{1-\alpha+\psi_{n}}^{PIA,0})/\beta_{n})]+o(1)
\end{multline*}
So, the result that $E[g_{1-\alpha}^{PIA}(\hat{T})]\leq1-\alpha+o(1)$
follows by applying \ref{lemma: aplication of Chatterjee}. For the
subsequence $\{n_{k}^{2}\}_{k=1}^{\infty}$, with the same choice
of $\chi_{n}$,
\[
(c_{1-\alpha+\psi_{n}}^{PIA,0}+\beta_{n})(1+2n^{-\kappa})\leq c_{1-\alpha+\psi_{n}+\chi_{n}}^{PIA,0}
\]
where I again write $n$ instead of $n_{k}^{2}$. In addition, for
any $x\in(0,1/2)$,
\[
1/(1-x)<1+2x
\]
So,
\begin{eqnarray*}
E[g_{1-\alpha}^{PIA}(\hat{T})] & \leq & P\{\max_{s\in S_{n}}(\varepsilon_{s}/V_{s})(1-n^{-\kappa})\leq c_{1-\alpha+\psi_{n}}^{PIA,0}+\beta_{n}\}+o(1)\\
 & \leq & P\{\max_{s\in S_{n}}(\varepsilon_{s}/V_{s})\leq(c_{1-\alpha+\psi_{n}}^{PIA,0}+\beta_{n})(1+2n^{-\kappa})\}+o(1)\\
 & \leq & P\{\max_{s\in S_{n}}(\varepsilon_{s}/V_{s})\leq c_{1-\alpha+\psi_{n}+\chi_{n}}^{PIA,0}\}+o(1)\\
 & \leq & E[g_{1-\alpha+\psi_{n}+\chi_{n}}^{PIA,0}(\max_{s\in S_{n}}(\varepsilon_{s}/V_{s}))]
\end{eqnarray*}
Again, the result that $E[g_{1-\alpha}^{PIA}(\hat{T})]\leq1-\alpha+o(1)$
follows by applying lemma \ref{lemma: aplication of Chatterjee}.

For the RMS test function, note that by lemma \ref{Lemma: Inclusion with null function},
$P\{S_{n}^{RMS}=S_{n}\}\geq1-\gamma_{n}+o(1)$ whenever $f=0_{p}$.
If $\gamma_{n}=o(1)$, then 
\[
E[g_{1-\alpha}^{RMS}(\hat{T})]=E[g_{1-\alpha+2\gamma_{n}}^{PIA}(\hat{T})]+o(1)\leq1-\alpha+o(1)
\]

\end{proof}
Proof of Corollary 1:
\begin{proof}[Proof of Corollary 1]
If $(\log n)^{19}/(h_{\min}^{3d}n)\rightarrow0$, then one can set
$\varrho_{n}$ so that $\varrho_{n}(\log n)^{3/2}\rightarrow0$ and
$(\log n)^{4}/(\varrho_{n}^{10}h_{\min}^{3d}n)\rightarrow0$. Then
$\varrho_{n}$ satisfies assumption \ref{ass: test function}. So,
the result of theorem \ref{thm: size} holds for $\varrho_{n}$ instead
of $\beta_{n}$. Let $c_{x}^{PIA,0,\varrho}$ denote the value of
$c_{x}^{PIA,0}$ evaluated with $\varrho_{n}$ instead of $\beta_{n}$
for all $x\in(0,1)$. By lemma \ref{lemma: quantile approximation},
$P\{c_{1-\alpha-\psi_{n}}^{PIA,0}>c_{1-\alpha}^{PIA}\}=o(1)$. By
lemma \ref{anticoncentration}, $c_{1-\alpha-\psi_{n}-C\varrho_{n}(\log n)^{3/2}}^{PIA,0,\varrho}+\varrho_{n}\leq c_{1-\alpha-\psi_{n}}^{PIA,0}$
for $C$ large enough. So,
\begin{eqnarray*}
P\{\hat{T}\leq c_{1-\alpha}^{PIA}\} & \geq & E[g_{0}((\hat{T}+\varrho_{n}-c_{1-\alpha}^{PIA})/\varrho_{n})]\\
 & \geq & E[g_{0}((\hat{T}+\varrho_{n}-c_{1-\alpha-\psi_{n}}^{PIA,0})/\varrho_{n})]+o(1)\\
 & \geq & E[g_{0}((\hat{T}-c_{1-\alpha-\psi_{n}-C\varrho_{n}(\log n)^{3/2}}^{PIA,0,\varrho})/\varrho_{n})]+o(1)
\end{eqnarray*}
From this point, the argument like that used in the proof of theorem
\ref{thm: size} with $\varrho_{n}$ instead of $\beta_{n}$ leads
to $P\{\hat{T}\leq c_{1-\alpha}^{PIA}\}\geq1-\alpha+o(1)$. All other
statements of theorem \ref{thm: size} follow from similar arguments.
\end{proof}
Proof of Theorem 2:
\begin{proof}[Proof of Theorem 2]
For any $w\in\mathcal{G}_{\rho}$, there exist $i(w)\in\mathbb{N}$
and $m(w)=1,...,p$ such that $f_{m(w)}^{w}(X_{i(w)})\geq\rho$. For
simplicity of notation, I will drop index $w$. By assumption \ref{ass:Regression-function},
there exists a ball $B_{\delta}(X_{i})$ with center at $X_{i}$ and
radius $\delta$ such that $f_{m}(X_{j})\geq\rho/2$ for all $X_{j}\in B_{\delta}(X_{i})$.
Note that $\delta$ can be chosen independently of $w$. So, for some
$N\in\mathbb{N}$ and any $n\geq N$, there exists a triple $s_{n}=(i_{n},m,h_{n})\in S_{n}$
with $h_{n}$ bounded away from zero such that $f_{m}(X_{j})\geq\rho/2$
for all $X_{j}\in B_{h_{n}}(X_{i_{n}})$. Hence, $f_{s_{n}}\geq\rho/2$.
Lemma \ref{HS} gives $V_{s_{n}}\leq n^{-\phi}$ for some $\phi>0$,
so $f_{s_{n}}/V_{s_{n}}>Cn^{\phi}$. By lemma \ref{variance}, $|\hat{V}_{s_{n}}/V_{s_{n}}-1|=o_{p}(1)$.
So, for any $\tilde{C}<C$, $P\{f_{s_{n}}/\hat{V}_{s_{n}}>\tilde{C}n^{\phi}\}\rightarrow1$.
Thus,
\begin{eqnarray*}
E[g_{1-\alpha}^{P}(\hat{T})] & \leq & P\{\hat{T}\leq c_{1-\alpha}^{P}+\beta_{n}\}\\
 & \leq & P\{f_{s_{n}}/\hat{V}_{s_{n}}\leq c_{1-\alpha}^{P}+\beta_{n}+\max_{s\in S_{n}}|\varepsilon_{s}/\hat{V}_{s}|\}\\
 & \leq & P\{c_{1-\alpha}^{P}+\beta_{n}+\max_{s\in S_{n}}|\varepsilon_{s}/\hat{V}_{s}|>\tilde{C}n^{\phi}\}+o(1)
\end{eqnarray*}
The result follows by noting that from lemmas \ref{Lemma: Initial conv}
and \ref{lemma: Growth of critical value}, $c_{1-\alpha}^{P}+\beta_{n}+\max_{s\in S_{n}}|\varepsilon_{s}/\hat{V}_{s}|=O_{p}(\sqrt{\log n})$.
\end{proof}
Proof of Theorem 3:
\begin{proof}[Proof of Theorem 3]
As in the proof of theorem 2, since $\rho(w,H_{0})>0$, there exists
$i\in\mathbb{N}$ such that $f_{m}^{w}(X_{i})\geq\rho$ for some $m=1,...,p$
and $\rho>0$. In addition, by assumption \ref{ass:Regression-function},
there exists a ball $B_{\delta}(X_{i})$ such that $f_{m}^{w}(X_{j})\geq\rho/2$
for all $X_{j}\in B_{\delta}(X_{i})$. So, for some $N\in\mathbb{N}$
and any $n\geq N$, there exists a triple $s_{n}=(i_{n},m,h)\in S_{n}$
such that $f_{m}^{w}(X_{j})\geq\rho/2$ for all $X_{j}\in B_{h}(X_{i_{n}})$.
Hence, $f_{s_{n}}^{n}\geq a_{n}\rho/2$. Note that in contrast with
theorem 2, now we choose fixed bandwidth value $h$. By lemma \ref{HS},
$V_{s_{n}}\leq C/\sqrt{n}$. Then lemma \ref{variance} gives $P\{f_{s_{n}}^{n}/\hat{V}_{s_{n}}>\tilde{C}a_{n}/\sqrt{n}\}\rightarrow1$
for some $\tilde{C}>0$. The same argument as in the proof of theorem
2 yields
\[
E[g_{1-\alpha}^{P}(\hat{T})]\leq P\{c_{1-\alpha}^{P}+\beta_{n}+\max_{s\in S_{n}}|\varepsilon_{s}/\hat{V}_{s}|>\tilde{C}a_{n}\sqrt{n}\}+o(1)
\]
Combining $c_{1-\alpha}^{P}+\beta_{n}+\max_{s\in S_{n}}|\varepsilon_{s}/\hat{V}_{s}|=O_{p}(\sqrt{\log n})$
and $a_{n}\sqrt{n/\log n}\rightarrow\infty$ gives the result.
\end{proof}
Proof of Theorem 4:
\begin{proof}[Proof of Theorem 4]
First, consider $\tau\leq1$ case. In this case, $\zeta=\tau$. Since
$d\geq1$, we are in the situation $\zeta\leq d$. For any $w\in\mathcal{G}_{\vartheta}$,
there exist $i(w)\in\mathbb{N}_{\vartheta}$ and $m(w)=1,...,p$ such
that $f_{m(w)}^{w}(X_{i(w)})\geq(C/2)h_{\min}^{\zeta}$. By assumptions
\ref{ass:Design-points} and \ref{ass:Choice function}, there exists
$j(w)=1,...,n$ such that $\Vert X_{i(w)}-X_{j(w)}\Vert\leq3h_{\min}$
and $s_{n}(w)=(j(w),m(w),h_{\min})\in S_{n}$. By assumption \ref{ass:Regression-function},
$f_{m(w)}^{w}(X_{l})\geq\tilde{C}h_{\min}^{\zeta}$ for all $l=1,...,n$
such that $X_{l}\in B_{h_{\min}}(X_{j(w)})$ for some constant $\tilde{C}$.
So, $f_{s_{n}(w)}^{w}\geq\tilde{C}h_{\min}^{\zeta}$. By assumption
\ref{ass: test function}, $nh_{\min}^{3d}/\log n\rightarrow\infty$
as $n\rightarrow\infty$. By lemma \ref{HS}, $V_{s_{n}(w)}\leq C/\sqrt{nh_{\min}^{d}}$.
So,
\[
f_{s_{n}(w)}^{w}/(V_{s_{n}(w)}\sqrt{\log n})\geq(\tilde{C}/C)\sqrt{nh_{\min}^{2\zeta+d}/\log n}\geq(\tilde{C}/C)\sqrt{nh_{\min}^{3d}/\log n}\rightarrow\infty
\]
uniformly in $w\in\mathcal{G}_{\vartheta}$. The result follows from
the same argument as in the proof of theorem 2.

Consider $\tau>1$ case. Suppose $\zeta\leq d$. For any $w\in\mathcal{G}_{\vartheta}$,
there exist $i(w)\in\mathbb{N}_{\vartheta}$ and $m(w)=1,...,p$ such
that $f_{m(w)}^{w}(X_{i(w)})\geq(C/2)h_{\min}^{\zeta}$. For $m=1,...,d$,
set $e_{m}=4h_{\min}$ if $\partial f_{m(w)}^{w}(X_{i(w)})/\partial x_{m}\geq0$
and $-4h_{\min}$ otherwise. Consider the cube $\mathcal{C}$ whose
edges are parallel to axes and that contains vertices $(X_{i(w),1},...,X_{i(w),d})$
and $(X_{i(w),1}+2e_{1},...,X_{i(w),d}+2e_{d})$. By lemma \ref{lemma: restricted holder},
for all $x\in\mathcal{C}$, $f_{m(w)}^{w}(x)\geq\tilde{C}h_{\min}^{\zeta}$
for some constant $\tilde{C}$. By the definition of $\mathbb{N}_{\vartheta}$
and assumption \ref{ass:Design-points}, there exists $l(w)=1,...,n$
such that $X_{l(\omega)}\in B_{h_{\min}}(X_{i(w),1}+e_{1},...,X_{i(w),d}+e_{d})$.
By assumption \ref{ass:Choice function}, there exists $j(w)=1,...,n$
such that $X_{j(w)}\in B_{3h_{\min}}(X_{i(w),1}+e_{1},...,X_{i(w),d}+e_{d})$
and $s_{n}(w)=(j(w),m(w),h_{\min})\in S_{n}$. So, $f_{m(w)}^{w}(X_{l})\geq\tilde{C}h_{\min}^{\zeta}$
for all $l=1,...,n$ such that $X_{l}\in B_{h_{\min}}(X_{j(w)})$.
The rest of the proof follows from the same argument as in the case
$\tau\leq1$.

Suppose $\zeta>d$. The only difference between this case and the
previous one is that now optimal testing bandwidth value is greater
than $h_{\min}$. Let $h_{o}$ be the largest bandwidth value in the
set $S_{n}$ which is smaller than $(\log n/n)^{1/(2\zeta+d)}$. For
any $w\in\mathcal{G}_{\vartheta}$, the same construction as above
gives $s_{n}(w)=(j(w),m(w),h_{o})\in S_{n}$ such that $f_{m(w)}^{w}(X_{l})\geq\rho_{\vartheta}(w,H_{0})-\tilde{C}h_{o}^{\zeta}$
for all $l=1,...,n$ such that $X_{l}\in B_{h_{o}}(X_{j(w)})$. Since
$\rho_{\vartheta}(w,H_{0})\geq b_{n}(\log n/n)^{\zeta/(2\zeta+d)}$
for some sequence of real numbers $\{b_{n}\}_{n=1}^{\infty}$ such
that $b_{n}\rightarrow\infty$ as $n\rightarrow\infty$, $f_{s_{n}(w)}^{w}\geq(b_{n}-\tilde{C})(\log n/n)^{\zeta/(2\zeta+d)}$.
By lemma \ref{HS}, $V_{s_{n}(w)}\leq C/\sqrt{nh_{o}^{d}}$. Then
\[
f_{s_{n}(w)}^{w}/(V_{s_{n}(w)}\sqrt{\log n})\geq(b_{n}-\tilde{C})/(2C)\rightarrow\infty
\]
The result follows as above.
\end{proof}
Proof of Theorem 5:
\begin{proof}[Proof of Theorem 5]
Define $v:\,\mathbb{R}\times\mathbb{R}_{+}\rightarrow\mathbb{R}_{+}$
as follows. Set $v(x,h)=0$ if $x<0$ or $x>2$ for all $h\in\mathbb{R}_{+}$. 

First, define functions $b_{1},...,b_{K}$ on $(0,1]$ for some $K$
to be chosen below by the following induction. Set $b_{1}(x)=+1$
for $x\in(0,1/2]$ and $-1$ for $x\in(1/2,1]$. Given $b_{1},...,b_{k-1}$,
for $i=1,3,...,2^{k}-1$ and $x\in((i-1)2^{-k},i2^{-k}]$, set $b_{k}(x)=+1$
if $b_{k-1}(y)=+1$ for $y\in((i-1)2^{-k},(i+1)2^{-k}]$ and $-1$
otherwise. For $i=2,4,...,2^{k}$ and $x\in((i-1)2^{-k},i2^{-k}]$,
set $b_{k}(x)=-1$ if $b_{k-1}(y)=+1$ for $y\in((i-2)2^{-k},i2^{-k}]$
and $+1$ otherwise. By induction, define $b_{1},...,b_{K}$ where
$K$ is the largest integer strictly smaller than $\tau$, i.e. $K=[\tau]$.

Now let us define $\nu:\,\mathbb{R}\times\mathbb{R}_{+}\rightarrow\mathbb{R}_{+}$.
Set $v(x,h)=0$ if $x<0$ or $x>2$ for all $h\in\mathbb{R}_{+}$.
For $x\in[0,2]$, $\nu$ will be defined through its derivatives.
Set $\partial^{k}v(0,h)/\partial x^{k}=0$ for all $k=0,...,K$. For
$i=1,...,2^{K}$, once function $\partial^{K}v(x,h)/\partial x^{K}$
is defined for $x\in[0,(i-1)2^{-K}]$, set
\[
\partial^{K}v(x,h)/\partial x^{K}=\partial^{K}v((i-1)2^{-K},h)/\partial x^{K}+b_{K}(x)h^{K}L(x-(i-1)2^{-K})^{\tau-K}
\]
for $x\in((i-1)2^{-K},i2^{-K}]$. These conditions define function
$v(x,h)$ for $x\in[0,1]$ and $h\in\mathbb{R}_{+}$. For $x\in(1,2]$
and $h\in\mathbb{R}_{+}$, set $v(x,h)=v(2-x,h)$ so that $v$ is
symmetric in $x$ around $x=1$. It is easy to see that for fixed
$h\in\mathbb{R}_{+}$, $v(\cdot/h,h)\in\mathcal{F}_{[\tau]}(\tau,L)$
and $\sup_{x\in\mathbb{R}}v(x/h,h)\in(C_{1}h^{\tau},C_{2}h^{\tau})$
for some positive constants $C_{1}$ and $C_{2}$ independent of $h$.

Let $q:\,\mathbb{R}^{d}\times\mathbb{R}_{+}\rightarrow\mathbb{R}_{+}$
be given by $q(x,h)=v(\Vert x\Vert/h+1,h)$ for all $(x,h)\in\mathbb{R}^{d}\times\mathbb{R}_{+}$.
Note that for fixed $h\in\mathbb{R}_{+}$, $q(\cdot,h)\in\mathcal{F}_{[\tau]}(\tau,L)$,
$q(x,h)=0$ if $\Vert x\Vert>h$, and $q(0_{d},h)=\sup_{x\in\mathbb{R}^{d}}q(x,h)\in(C_{1}h^{\tau},C_{2}h^{\tau})$.

Since $r_{n}(n/\log n)^{\tau/(2\tau+d)}\rightarrow0$, there exists
a sequence of positive numbers $\{\psi_{n}\}_{n=1}^{\infty}$ such
that $r_{n}=\psi_{n}^{\tau}(\log n/n)^{\tau/(2\tau+d)}$ and $\psi_{n}\rightarrow0$.
Set $h_{n}=\psi_{n}(\log n/n)^{1/(2\tau+d)}$. By the assumption on
packing numbers $N(h,S_{\vartheta})$, there exists a set $\{j(l)\in\mathbb{N}_{\vartheta}:\, l=1,...,N_{n}\}$
such that $\Vert X_{j(l_{1})}-X_{j(l_{2})}\Vert>2h_{n}$ for $l_{1},l_{2}=1,...,N_{n}$
if $l_{1}\neq l_{2}$ and $N_{n}>Ch_{n}^{-d}$ for some constant $C$.
For $l=1,...,N_{n}$, define function $f^{l}:\,\mathbb{R}^{d}\rightarrow\mathbb{R}^{p}$
given by $f_{1}^{l}(x)=q(x-X_{j(l)},h_{n})$ and $f_{m}^{l}(x)=0$
for all $m=2,...,p$ for all $x\in\mathbb{R}^{d}$. Note that functions
$\{f^{l}\}_{l=1}^{N_{n}}$ have disjoint supports. Moreover, for every
$l=1,...,N_{n}$ and $m=1,...,p$, $f_{m}^{l}\in\mathcal{F}_{[\tau]}(\tau,L)$.
Let $\{\varepsilon_{i}\}_{i=1}^{n}$ be a sequence of independent
standard Gaussian random vectors $N(0,I_{p})$. For $l=1,...,N_{n}$,
define an alternative, $w_{l}$, with the regression function $f^{l}$
and disturbances $\{\varepsilon_{i}\}_{i=1}^{n}$. Note that $\rho_{\vartheta}(w_{l},H_{0})\geq Cr_{n}$
for all $l=1,...,N_{n}$ for some constant $C$. In addition, let
$w_{0}$ denote the alternative with zero regression function and
disturbances $\{\varepsilon_{i}\}_{i=1}^{n}$.

As in the proof of lemma 6.2 in \citet{Dumbgen2001}, for any sequence
$\phi_{n}=\phi_{n}(Y_{1},...,Y_{n})$ of tests with $\sup_{w\in\mathcal{G}_{0}}E_{w}[\phi_{n}]\leq\alpha$,
\begin{eqnarray*}
\inf_{w\in\mathcal{G},\rho_{\vartheta}(w,H_{0})\geq Cr_{n}}E_{w}[\phi_{n}]-\alpha & \leq & \min_{l=1,...,N_{n}}E_{w_{l}}[\phi_{n}]-E_{w_{0}}[\phi_{n}]\\
 & \leq & \sum_{i=1}^{N_{n}}E_{w_{l}}[\phi_{n}]/N_{n}-E_{w_{0}}[\phi_{n}]\\
 & \leq & E_{w_{0}}[(\sum_{i=1}^{N_{n}}(dP_{w_{l}}/dP_{w_{0}})/N_{n}-1)\phi_{n}]\\
 & \leq & E_{w_{0}}[|\sum_{i=1}^{N_{n}}dP_{w_{l}}/dP_{w_{0}}/N_{n}-1|]
\end{eqnarray*}
where $dP_{w_{l}}/dP_{w_{0}}$ denotes a Radon-Nykodim derivative.
For $l=1,...,N_{n}$, denote $\omega_{l}=(\sum_{i=1}^{n}(f_{1}^{l}(X_{i}))^{2})^{1/2}$
and $\xi_{l}=\sum_{i=1}^{n}f_{1}^{l}(X_{i})Y_{i,1}/\omega_{l}$. Then
\[
dP_{w_{l}}/dP_{w_{0}}=\exp(w_{l}\xi_{l}-\omega_{l}^{2}/2)
\]
Note that $\omega_{l}\leq Cn^{1/2}h_{n}^{\tau+d/2}$. In addition,
under the model $w_{0}$, $\xi_{l}$ are independent standard Gaussian
random variables. So, an application of lemma \ref{property of Gauss rv}
gives
\[
E_{w_{0}}[|\sum_{i=1}^{N_{n}}dP_{w_{l}}/dP_{w_{0}}/N_{n}-1|]\rightarrow0
\]
if $Cn^{1/2}h_{n}^{\tau+d/2}<\tilde{C}(\log N_{n})^{1/2}$ for some
constant $\tilde{C}\in(0,1)$ for all large enough $n$. The result
follows by noting that $n^{1/2}h_{n}^{\tau+d/2}=o(\sqrt{\log n})$
and $\log N_{n}\geq C\log n$ for some constant $C$.
\end{proof}
Proof of Corollary 2:
\begin{proof}[Proof of Corollary 2]
Replace $p$ by $K_{n}$ both in $\psi_{n}$ and $\chi_{n}$ in all
preliminary results and theorem \ref{thm: size}. Then all preliminary
results except lemma \ref{lemma: aplication of Chatterjee} hold for
the test with $K_{n}\rightarrow\infty$. Lemma \ref{lemma: aplication of Chatterjee}
holds with conditions (iii) and (iv) in the corollary replacing assumption
\ref{ass: test function}. So, the first result follows from the same
argument as in theorem \ref{thm: size}. For any $w\in\mathcal{G}_{\rho}$,
there exists some $m(w)\in\mathbb{N}$ such that $\sup_{i\in\mathbb{N}}[f_{m(w)}^{w}(X_{i})]_{+}>0$.
Once $m(w)$ is included in the test statistic, the second result
follows as in the proof of theorem \ref{Theorem: fixed alternatives}.
\end{proof}
\bibliographystyle{elsarticle-harv}
\bibliography{C:/Denis/Research/Bibliography}

\end{document}